\numberwithin{equation}{section}
\numberwithin{figure}{section}
\renewcommand*{\thefootnote}{\fnsymbol{footnote}}
\title{Equilibrium pricing under relative performance concerns}
\author{
\normalsize Jana Bielagk\footnote{J. Bielagk thankfully acknowledges partial support from \emph{Funding for the equality and promotion of women} of HU Berlin.} \\[8pt]
         \footnotesize  Humboldt-Universit\"at zu Berlin\\
         \footnotesize  Institut f\"ur Mathematik\\ 
         \footnotesize  Unter den Linden 6\\
         \footnotesize  10099 Berlin\\
         \footnotesize  Germany\\
          \\
        \small  bielagk@math.hu-berlin.de
 \and
\normalsize Arnaud Lionnet \\[8pt] 
        \footnotesize  E.N.S. Cachan\\ 
        \footnotesize  CMLA \\ 
        \footnotesize  61 av. du Pr\'esident Wilson \\
        \footnotesize  94235 Cachan cedex \\
		\footnotesize  France\\
          \\
        \small  Arnaud.Lionnet@cmla.ens-cachan.fr 
 \and
 	\normalsize Gon\c calo Dos Reis\footnote{G. dos Reis acknowledges support from the \emph{Funda{\c c}$\tilde{\text{a}}$o para a Ci$\hat{e}$ncia e a Tecnologia} (Portuguese Foundation for Science and Technology) through the project UID/MAT/00297/2013 (Centro de Matem\'atica e Aplica\c c$\tilde{\text{o}}$es CMA/FCT/UNL).} \\[8pt]
         \footnotesize  University of Edinburgh\\ 
         \footnotesize  School of Mathematics \\
         \footnotesize  Edinburgh, EH9 3FD, UK\\  
         \footnotesize  and \\
	\footnotesize  Centro Matem\'atica e Aplica\c c$\tilde{\text{o}}$es \\
	\footnotesize (CMA), FCT, UNL, Portugal \\
        \small  G.dosReis@ed.ac.uk
}
 \date{\today}
\theoremstyle{plain}
\newtheorem{theorem}{Theorem}[section]
\newtheorem{lemma}[theorem]{Lemma}
\newtheorem{proposition}[theorem]{Proposition}
\newtheorem{corollary}[theorem]{Corollary}
\newtheorem{definition}[theorem]{Definition}
\newtheorem{remark}[theorem]{Remark}
\newtheorem{example}[theorem]{Example}
\newtheorem{assumption}[theorem]{Assumption}
\newcommand{\bA}{\mathbb{A}}
\newcommand{\bD}{\mathbb{D}}
\newcommand{\bE}{\mathbb{E}}
\newcommand{\bL}{\mathbb{L}}
\newcommand{\bN}{\mathbb{N}}
\newcommand{\bP}{\mathbb{P}}
\newcommand{\bQ}{\mathbb{Q}}
\newcommand{\bR}{\mathbb{R}}
\newcommand{\cA}{\mathcal{A}}
\newcommand{\cE}{\mathcal{E}}
\newcommand{\cF}{\mathcal{F}}
\newcommand{\cH}{\mathcal{H}}
\newcommand{\cS}{\mathcal{S}}
\newcommand{\cT}{\mathcal{T}}
\newcommand{\cZ}{\mathcal{Z}}
\newcommand{\wh}{\widehat}
\newcommand{\wt}{\widetilde}
\newcommand{\tY}{\wt{Y}}  
\newcommand{\tZ}{\wt{Z}}
\newcommand{\ud}{\mathrm{d}}
\newcommand{\uds}{\mathrm{d}s}
\newcommand{\udt}{\mathrm{d}t}
\newcommand{\udu}{\mathrm{d}u}
\newcommand{\udw}{\mathrm{d}W}
\newcommand{\udws}{\mathrm{d}W_s}
\newcommand{\udwt}{\mathrm{d}W_t}
\newcommand{\udwr}{\mathrm{d}W_r}
\newcommand{\1}{\mathbbm{1}}
\newcommand{\lev}{\left\langle}
\newcommand{\rev}{\right\rangle}
\newcommand{\sgn}{\text{sgn}}
\newcommand{\abs}[1]{\lvert#1\rvert }																		
\newcommand{\scalar}[2]{\left\langle #1 , #2 \right\rangle}										
\renewcommand{\matrix}[1]{\begin{bmatrix}#1\end{bmatrix}}									
\newcommand{\cHBMO}{{\cH_{\text{BMO}}}}
\begin{document}

\selectlanguage{english}

\maketitle
\renewcommand*{\thefootnote}{\arabic{footnote}}

\begin{abstract} 
We investigate the effects of the social interactions of a finite set of agents on an equilibrium pricing mechanism. A derivative written on non-tradable underlyings is introduced to the market and priced in an equilibrium framework by agents who assess risk using convex dynamic risk measures expressed by Backward Stochastic Differential Equations (BSDE). Each agent is not only exposed to financial and non-financial risk factors, but she also faces performance concerns with respect to the other agents. 

Within our proposed model we prove the existence and uniqueness of an equilibrium whose analysis involves systems of fully coupled multi-dimensional quadratic BSDEs. We extend the theory of the representative agent by showing that a non-standard aggregation of risk measures is possible via weighted-dilated infimal convolution. We analyze the impact of the problem's parameters on the pricing mechanism, in particular how the agents' performance concern rates affect prices and risk perceptions. In extreme situations, we find that the concern rates destroy the equilibrium while the risk measures themselves remain stable.
\end{abstract}
{\bf Keywords:} 
Financial innovation, equilibrium pricing, social interactions, performance concerns, representative agent, $g$-conditional risk measure, multidimensional quadratic BSDE, entropic risk.
\vspace{0.3cm}

\noindent
{\bf 2010 AMS subject classifications:}\\
Primary: 91A10, 65C30; 
Secondary: 
60H07
, 
60H20 
, 
60H35, 
91B69
.
\vspace{0.3cm}

\noindent{\bf JEL subject classifications:}\\
G11, G12, G13, C73

%
%
%
\newpage 

\section{Introduction}

The importance of relative concerns in human behavior has been emphasized both in economic and sociological studies; making a $1$ EUR profit when everyone else made $2$ EUR ``feels'' distinctly different had everyone else lost $2$ EUR. A diverse literature handling problems dealing with some form of strategic and/or social interaction in the form of relative performance concerns exists: In both \cite{HinzRudolphAntolinEtAl2010} and \cite{Pedraza2013} the social interaction component appears in the form of peer-based under-performance penalties known as ``Minimum Return Guarantees''; the comparison is usually done via tracking a relevant market index, something quite standard in pension fund management. Another type of performance concerns arises in problems where the agents' consumption is taken into account. The utility functions used there exhibit a ``keeping up with the Joneses'' behavior as introduced in \cite{Duesenberry1949} and developed by \cite{Abel1990}, \cite{Abel1999} (see further \cite{Gali1994}, \cite{ChanKogan2001}, \cite{Gomez2007}  and \cite{XiourosZapatero2009}); in other words, the benchmark for the standard of living is the averaged consumption of the population and one computes the individual's consumption preferences in relation to that benchmark. Another type of concern criterion, an internal one, uses the past consumption of the agent as a benchmark for the current consumption; \cite{HarlHeal1973} introduced this ``habit formation'' approach. A more mathematical finance approach, as well as a literature overview, can be found in \cite{EspinosaTouzi2015} or \cite{FreiDosReis2011}. These last two papers are the inspiration for this one. 

In this paper we study the effects of social interaction between economic agents on a market equilibrium, the efficiency of a securitization mechanism and the global risk.
We consider a finite set $\bA$ of $N$ agents having access to an incomplete market consisting of an exogenously priced liquidly traded financial asset. The incompleteness stems from a non-tradable external risk factor, such as the amount of rain or the temperature, to which those agents are exposed. In an attempt to reduce the individual and overall market risks, a social planner introduces to the market a derivative written on the external risk source, allowing the agents in $\bA$ to reduce their exposures by trading on it. The question of the actual completeness of the resulting market has been addressed in some generality in the literature, and we refer for instance to \cite{Schwarz2015}. Questions about pricing and benefits of such securities written on non-tradable assets have been approached in the literature many times, we refer in particular to \cite{HorstPirvuDosReis2010} where the new derivative is priced within an equilibrium framework according to supply and demand rules and more generally to \cite{AndersonRaimondo2008}. Equilibrium analysis of incomplete markets is commonly confined to certain cases such as single agent models (\cite{HeLeland1993}, \cite{GarleanuPedersenPoteshman2009}), multiple agent models where markets are complete in equilibrium (\cite{DuffieHuang1985}, \cite{HorstPirvuDosReis2010}, \cite{KaratzasLehoczkyShreve1990}), or models with particular classes of goods (\cite{JofreRockafellarWets2010}) or preferences (\cite{CarmonaFehrHinzEtAl2010}). For a good overview of equilibrium issues stemming the from market incompleteness as well as some solutions we point the reader to \cite{KardarasXingZitkovic2015} and references therein. 

Although we follow ideas similar to those in \cite{HorstPirvuDosReis2010}, our goal is to understand how such a pricing mechanism and risk assessments are affected when the agents have relative performance concerns with respect to each other. Each agent $a\in \bA$ has an endowment $H^a$ over the time period $[0,T]$  depending on both risk factors. 
Her investment strategy $\pi^a$ in stock 
and the newly introduced derivative induces a gains process $(V^a_t(\pi^a))_{t\in[0,T]}$. 
For a given \emph{performance concern rate} $\lambda^a\in[0,1]$ the agent seeks to minimize the risk
\begin{align}
\label{intro-perffunc} 
 \rho^a\bigg(
     H^a + \big(1-\lambda^a\big) V_T^{a}(\pi^a) 
+\lambda^a \Big(
	   V_T^{a}(\pi^a)
			- \frac{1}{N -1}\sum_{b \in \bA \setminus \{a\}} V_T^{b}(\pi^b)
          \Big)
			\bigg),
\end{align}
where $\rho^a$ is a risk measure ($\rho^a$ is further described below). The first two terms inside $\rho^a$ correspond to the classical situation of an isolated agent $a$ trading optimally in the market to profit from market movements and to hedge the financial risks inherent to $H^a$. The last term is the \emph{relative performance concern} and corresponds to the difference between her own trading gains and the average trading gains achieved by her peers. Intuitively, as $\lambda^a\in[0,1]$ increases, the agent is less concerned with the risks associated to her endowment $H^a$ and more concerned with how she fares against the average performance of the other agents in $\bA$. For instance, given no endowments, if $\lambda^a=1/2$ and agent $a$ made $1$ EUR from trading while the others all made $2$ EUR then she perceives no gain at all.

Each agent $a\in \bA$ uses a monetary convex risk measure $\rho^a$. The theory of monetary, possibly convex, possibly coherent, risk measures was initiated by \cite{ArtznerDelbaenEberEtAl1999} and later extended by \cite{FoellmerSchied2002} and \cite{FrittelliRosazza2002}. One special class of risk measures, the so-called \emph{$g$-conditional risk measures}, which are closely related to the so-called \emph{$g$-conditional expectations} (see \cite{Gianin2006}), are those defined through Backward Stochastic Differential Equations (BSDEs), see \cite{Peng1997}, \cite{ElKarouiRavanelli2009} and \cite{BarrieuElKaroui2009}. Our use of BSDEs is motivated by two general aspects. The first is that it generically allows to solve stochastic control problems away from the usual Markovian setup where one uses the HJB approach in combination with PDE theory, see e.g. \cite{Touzi2013}. The second is that optimization can be carried out in closed sets of constraints without the assumption of convexity for which one usually uses duality theory, see \cite{HuImkellerMuller2005}.

The form of relative performance concern we use and its study using BSDEs can be traced back to \cite{Espinosa2010} and \cite{EspinosaTouzi2015}. Their setting is quite different from that presented here; the authors show the existence of a Nash equilibrium for a pure-interaction game of optimal investment without idiosyncratic endowments to hedge ($H^a=0$ for all agents), and where the agents optimize, in a Black--Scholes stock market, the expected utility of the gains they make from trading under individual constraints. These two works are followed by \cite{FreiDosReis2011}, where a general discussion on the existence of equilibrium, with endowments, is given including counter examples to such existence.

\textbf{Methodology and content of the paper.} 
All agents optimize their respective  functional given by \eqref{intro-perffunc} and, as the derivative is priced endogenously via an equilibrium framework, the market price of external risk is also part of the problem's solution. Equilibrium in our game is a set of investment strategies and a market price of external risk giving rise to a certain martingale measure. 

In the first part of this work, we show the existence of the Nash equilibrium in our problem and how to compute it for general risk measures induced by BSDEs.
The analysis is carried out in two steps. The first involves solving the individual optimization problem for each agent given the other agents' actions, the so-called best response problem. The second consists in showing that it is possible to find all best responses simultaneously in such a way that supply and demand for the derivative match, which, in turn, yields the market price of external risk. (We generally think of a market with a zero net supply of the derivative; however, the methodology allows to treat cases where some agents who were allowed to trade in the derivative left the market such that the (active) agents in $\bA$ hold together a non-zero position). We then verify that the market price of external risk associated to the best responses satisfies the necessary conditions.

This last step is more complex. Since the agents assess their risks using  dynamic risk measures given by BSDEs, the general equilibrium analysis leads to a system of fully coupled non-linear multi-dimensional BSDEs (possibly quadratic). We proceed by extending the representative agent approach (see \cite{Negishi1960}) where aggregation of the agents into a single economy and optimal Pareto risk sharing are equivalent to simultaneous individual optimization. From the works of \cite{BarrieuElKaroui2005} and \cite{BarrieuElKaroui2009}, we extend their infimal convolution (short inf-convolution) technique to agents with interdependent utility functions. This approach yields a single risk measure that encompasses the risk preferences for each agent and through which it is possible to find a single representative economy. We point out that in order to cope with the cross dependence induced by the performance concern rates, standard inf-convolution techniques do not lead to a single representative economy. We use the technique in a non-standard fashion via weighted-dilations of each agent's risk measure $\rho^a$ (see Section 3.2 in \cite{BarrieuElKaroui2009} and Section \ref{sec4-repagent} below); to the best of our knowledge this type of analysis is new and of independent interest. The closest reference to this is \cite{Rueschendorf2013}, where some form of weighted inf-convolution appears.

The second part of this work focuses on the case of agents using entropic risk measures, which can be treated more explicitly and allows for an in-depth study of the impact of the concern rates. In identifying the Nash equilibrium, we are led to a system of fully coupled  multi-dimensional quadratic BSDEs whose analysis is, in general, quite involved. Based on the works of \cite{Espinosa2010} and \cite{EspinosaTouzi2015}, the authors of \cite{FreiDosReis2011} give several counter examples to the existence of solutions; nonetheless, positive results do exist although none are very general, see e.g. \cite{Tevzadze2008}, \cite{CheriditoNam2015}, \cite{Frei2014}, \cite{KramkovPulido2016}, \cite{HuTang2016}, \cite{XingZitkovic2016}, \cite{JamneshanKupperLuo2014} and \cite{luo2015solvability}. In our case we are able to solve the system.

\textbf{Findings.} 
Within the case of entropic risk measures, we study in detail a model of two agents $a$ and $b$ with opposite exposures to the external risk factor, so that one has incentives to buy the derivative while the other has incentive to sell. In this model we are able to specify the structure of the equilibrium. Using both analytical methods and numerical computations, when the analytics are not tractable, we explore the behavior of the agents as the model parameters vary. We give particular attention to how the relative performance concern rates, and thereby the strength of the coupling between the agents, deviate from the standard case of non-interacting agents (when $\lambda^a=\lambda^b=0$).

We find that as either agent's risk tolerance increases, their risk lowers.  If any concern rate $\lambda$ increases then the agents engage in less trading of the derivative. This is because every unit of derivative bought by one is a unit sold by the other and hence the gains of one are the losses of the other. Consequently, if an agent is more concerned about the relative performance, she will tend to trade less with the others.  Also, as expected, we find that if it is the buyer of derivatives whose concern rate increases, the derivative's price decreases, while it increases in the case of the seller.

Very interestingly, we find that the risk of a single agent increases if the other agents become more concerned with their relative performance but that it \emph{decreases} as this agent becomes more concerned. Consequently, if the agents were to play this game repeatedly and their concern rate were to vary over time, they would both find it more advantageous to become more concerned (or jealous). As they both do so, the trading activity in the derivative decreases, but their activity in the stock increases and explodes - the equilibrium does not exist anymore. Surprisingly, this behavior is \emph{not captured at all by the risk measures}! 
This non-trivial, and perhaps not desirable, behavior of the system after introduction of the derivative is not without similarities with what is found in the models of \cite{CaccioliMarsiliVivo2009} and \cite{CorsiMarmiLillo2013}. It is a reminder that, when evaluating the benefits of financial innovation, one should not focus of the economy of an individual agent (who sees clear benefits in the form of a risk reduction) but really have a systemic view of the impact of the new instrument.

\textbf{Organization of the paper.} 
In Section 2, we define the general market, agents, optimization problem and equilibrium that we consider. Sections 3 and 4 are devoted to solving the general optimization problem for a set of agents having arbitrary risk measures. In the former we solve the optimization problem for each agent, given the strategies of all others.  In the latter we deal with the aggregation of individual risk measures and identification of the representative agent, and we solves the equilibrium for the whole system. Sections 5 and 6 contain the particular case where the agents use entropic risk measures. In this more tractable setting, Section 5 explores theoretically the influence of various parameters on the global risk while Section 6 focuses on a model with 2 agents with opposite risk profiles, and thoroughly explores the influence of the concern rates, in particular, on the individual behaviors, risks, and the consequences for the whole system. We also present numerical results. Section 7 concludes the study.

\textbf{Acknowledgments.} 
The authors would like to thank the two anonymous referees for their to-the-point and informative reports that have allowed us to improve several aspects of this work.

%
%
%
\section{The model}
\label{sec:model}

We consider a finite set $\bA$ of $N$ agents, without loss of generality $\bA = \left\{1,2,\ldots,N\right\}$, with random endowments $H^a$, $a \in \bA$, to be received at a terminal time $T < \infty$. 
They trade continuously in the financial market which comprises a {stock} and a newly introduced {structured security} (called derivative), aiming to minimize their risk. For simplicity we assume that money can be lent or borrowed at the risk-free rate zero. Stock prices follow an \textsl{exogenous} diffusion process and are not affected by the agents' demand. 
By contrast, the derivative is traded only by the agents from $\bA$ and priced \emph{endogenously} such that demand matches supply. 
We point the reader to Appendix \ref{appendix-NotionStochAna} for a full overview of the notation and stochastic setup.

\subsection{The market}

\subsubsection*{Sources of risk and underlyings} 

Throughout this paper $t\in[0,T]$. In our model, there are two independent sources of randomness, represented by a $2$-dimensional standard Brownian motion $W=(W^S,W^R)$ on a standard filtered probability space $(\Omega,(\cF_t)_{t=0}^T, \bP)$, where $(\cF_t)$ is the filtration generated by $W$ and augmented by the $\bP$-null sets. The Brownian motion $W^R$ drives the external and non-tradable risk process $(R_t)$, which is thought of as a temperature process or a precipitation index. For analytical convenience we assume that $(R_t)$ follows a Brownian motion with drift being a stochastic process $\mu^R :\Omega \times [0,T]\to \bR$ and constant volatility $b > 0$, i.e.,
\begin{align} 
\label{diffusion-R}
    \ud R_t = \mu^R_t \udt + b \ud W^R_t, \qquad\text{with } R_0=r_0 \in \bR.
\end{align}
The Brownian motion $W^S$ drives the stock price process $(S_t)$ according to
\begin{align} 
\label{diffusion-S}
    \ud S_t 
		&= \mu^S_t S_t \ \udt + \sigma^S_t S_t \ \udw^S_t, 
    \\ \nonumber
	    &= \mu^S_t S_t \ \udt + \lev \sigma_t, \udwt\rev
    	\quad \text{with } \sigma_t := (\sigma^{S}_t S_t, 0) \in \bR^2 , \text{ and } S_0 =s_0 > 0.
\end{align}
We assume throughout this work that the stochastic processes $\mu^R, \mu^S, \sigma^S: \Omega \times [0,T] \to \bR$ are $(\cF_t)$-adapted, with $\sigma^S>0$.

\subsubsection*{Market price of risk: financial and external}  

We recall (see e.g. \cite{HorstMuller2007}) that any linear pricing scheme on the set $L^2(\bP)$ of square integrable random variables with respect to $\bP$
can be identified with a $2$-dimensional predictable process $\theta$ such that the exponential process $(\cE^\theta_t)$ defined by
\begin{equation} 
\label{density-process}
    \cE^\theta_t 
		:= \cE\left( - \int_0^\cdot \langle \theta_s ,\ud W_s\rangle \right)_t
		 =\exp\left\{ - \int_0^t \langle\theta_s ,\ud W_s\rangle -
    \frac{1}{2} \int_0^t {|\theta_s|^2} \uds \right\},\quad t\in[0,T],
\end{equation}
is a uniformly integrable martingale. 
This ensures that the measure $\bP^\theta$ defined by having density $\cE^\theta_T$ against $\bP$ is indeed a probability measure (the \emph{pricing measure}), and the present price of a random terminal payment $X$ is then given by $\bE^{\theta}[X]$, where $\bE^\theta$ denotes the expectation with respect to $\bP^\theta$.
For any such $\theta$, we introduce the $\bP^\theta$-Brownian motion
\begin{align*}
W^\theta_t = W_t + \int_0^t \theta_s \,\uds, \qquad t\in[0,T].  
\end{align*}

The first component $\theta^S$ of the vector $\theta:=(\theta^S,\theta^R)$ is the \textsl{market price of financial risk}. Under the assumption that there is no arbitrage, $S$ must be a martingale under $\bP^\theta$ and, from the exogenously given dynamics of $S$, $\theta^S$ is necessarily given by $\theta^S_{t}={ \mu^S_t}/{\sigma^S_t}$.
The process $\theta^R$ on the other hand {is unknown}. It is the \textsl{market price of external risk} and will be derived {endogenously} by the market clearing condition (or constant net supply condition, see below).

\subsubsection*{The agents' endowments and the derivative's payoffs}

The agents $a \in \bA$ receive at time $T$ the income $H^a$ which depends on the financial and external risk factors. While the agents are able to trade in the financial market to hedge away some of their financial risk, a basis risk remains originating in the agent's exposure to the non-tradable risk process $R$. A derivative with payoff $H^D$ at maturity time $T$ 
is externally introduced in the market. By trading in the derivative $H^D$, the agents have now a way to reduce their basis risk. 

We give general conditions on the endowments, derivative payoff and coefficients appearing in the dynamics of $S$ and $R$ (see Appendix \ref{appendix-NotionStochAna} for notation).
Throughout the rest of this work {the following assumption stands} for all results. 

\begin{assumption}[Standing assumption on the data of the problem]
\label{assump-StandardData}
The processes $\mu^R$, $\mu^S$, $\sigma^S$ and  $\theta^S := {\mu^S}/{\sigma^S}$ are bounded (belong to $\cS^\infty$). The random variables $H^D$ and $H^a$, $a\in \bA$, are bounded (belong to $L^\infty(\cF_T)$).
\end{assumption}


\subsubsection*{Price of the derivative, trading in the market and the agent's strategies} 

Assuming no arbitrage opportunities, the price process $(B_t^{\theta})$ of $H^D$ is given by its expected payoff under $\bP^{\theta}$; in other words
    $B^\theta_\cdot  = \bE^\theta \left[H^D | \cF_\cdot \right]$.
Since $H^D$ is bounded, writing the $\bP^\theta$-martingale as a stochastic integral against the $\bP^\theta$-Brownian motion $W^\theta$ (with the martingale representation theorem) yields a 2-dimensional square-integrable adapted process  $\kappa^\theta :=(\kappa^{S},\kappa^{R})$ such that for $t \in [0,T]$ 
\begin{align} 
\label{Bond-Price-MRT-decomposition} 
    B^\theta_t & 
    = \bE^\theta[H^D] 
        + \int_0^t \langle\kappa_s^\theta, \udws^\theta\rangle
   = \bE^\theta[H^D] 
		+ \int_0^t \langle \kappa_s^\theta , \udws  \rangle
		+ \int_0^t \lev \kappa^\theta_s, \theta_s \rev \uds.
\end{align}
Note that we have $(B,\kappa)\in\cS^\infty\times \cHBMO(\bP^\theta)$.
We denote by $\pi^{a,1}_t$ and $\pi^{a,2}_t$ the number of units agent $a \in \bA$ holds in the stock and the derivative at time $t \in [0,T]$, respectively. 
Using a self-financing strategy $\pi^a:=(\pi^{a,1}, \pi^{a,2})$ valued in $\bR^2$, her gains from trading up to time $t \in [0,T]$, under the pricing measure $\bP^\theta$ inducing the prices $(B_t^{\theta})$ for the derivative, are given by 
\begin{align*} 
	V^a_t =  V_t(\pi^a) 
   	&=\int_0^t \pi^{a,1}_s \; \ud S_s+\int_0^t \pi^{a,2}_s \; \ud B^\theta_s  
		\\
		&= \int_0^t  \lev \pi^{a,1}_s \sigma_s + \pi^{a,2}_s \kappa_s^{\theta}, \theta_s  \rev \; \uds
				+ \int_0^t \langle\pi_s^{a,1} \sigma_s + \pi_s^{a,2} \kappa^{\theta}_s , \ud W_s \rangle.
\end{align*}
We require that the trading strategies be integrable against the prices, $\pi^a \in L^2\big((S,B^\theta), \bP^\theta \big)$ (i.e. $\bE^\theta  \left[\lev \; V_\cdot(\pi^a) \; \rev_{T}\right]< \infty $), so that the gains process are square-integrable martingales under $\bP^\theta$.

\subsection{Preferences, risk minimization and equilibrium}

\subsubsection*{The agents' measure of risk}

The agents assess their risk using a dynamic convex time-consistent risk measure $\rho^a_{\cdot}$ induced by a Backward Stochastic Differential Equation (BSDE). This means that the risk $\rho^a_t(\xi^a)$ which agent $a \in \bA$ associates at time $t\in[0,T]$ with an $\cF_T$-measurable random position $\xi^a$ is given by $Y^a_t$, where $(Y^a,Z^a)$ is the solution to the BSDE
\begin{equation*}
		-\ud Y^a_{t}=g^{a}(t,Z^a_{t})\udt-\langle Z^a_{t},\ud W_{t}\rangle 
    \quad \mbox{with terminal condition} \quad 
		Y^a_{T}=-\xi^a.
\end{equation*}
The driver $g^a$ encodes the risk preferences of the agent. 
We assume that $g^a$ has the following properties:
\begin{assumption}
\label{ass:driver}
The map $g^a:[0,T]\times\bR^2\to \bR$ is a deterministic continuous function. Its restriction to the space variable, $z \mapsto g^a(\cdot,z)$, is continuously differentiable, strictly convex and attains its minimum.
\end{assumption}
For any fixed $(t,\vartheta) \in [0,T] \times \bR^2$, the map $z \mapsto g^a(t,z)-\langle z,\vartheta \rangle$ is also strictly convex and attains its unique minimum at the point where its gradient vanishes.
With this in mind we can define $\cZ^a : [0,T] \times \bR^2 \rightarrow \bR^2$, $(t,\vartheta) \mapsto \cZ^a(t,\vartheta)$ where $\cZ^a(t,\vartheta)$ is the unique solution, in the unknown $\cZ$, to the equation
\begin{align} 
	\label{eq:caligraphicZ} 
		\nabla_z g^a(t,\cZ) = \vartheta.
	\end{align}

The agents' risk measure given by the above BSDE is strongly time consistent, convex and translation invariant (or monetary). We do not give many details on the class of risk measures described by BSDEs, instead, we point the interested reader to \cites{BarrieuElKaroui2005,Gianin2006,BarrieuElKaroui2009}. 

For convenience, we recall the relevant properties of the risk measures that play a role in this work:  i) \emph{translation invariance}: for any $m\in\bR$ and any $t \in[0,T]$ it holds that $\rho^a_t(\xi^a+m)=\rho^a_t(\xi^a)-m$; ii) \emph{time-consistency} of the process $\big(\rho^a_t(\xi^a)\big)$: for any $t,t+s\in[0,T]$ it holds that $\rho^a_t(\xi^a)=\rho^a_t(\rho^a_{t+s}(\xi^a))$; and iii) \emph{convexity}: for any $t \in[0,T]$ and for $\xi^a,\hat\xi^a$ $\cF_T$-measurable and $\alpha\in[0,1]$ we have $\rho^a_t\big(\alpha\xi^a+(1-\alpha)\hat\xi^a\big)\leq \alpha\rho^a_t(\xi^a)+(1-\alpha)\rho^a_t(\hat\xi^a)$.

\subsubsection*{The individual optimization problem}

Agent $a$'s position $\xi^a$ at maturity is given by the sum of her terminal income $H^a$ and the trading gains $V^a_T$ over the time period $[0,T]$. However, the agent compares her trading gains $V^a_T = V_T(\pi^a)$ with the average  gains of all other agents. Thus, we define the perceived total wealth $\xi^a(\pi^a,\pi^{-a})$ of each of the $N$ agents $a\in\bA$ in the market at time $t=T$ as
\begin{align*}  
\xi^a 
	&= \bigg( H^a + \big(1-\lambda^a\big) V_T(\pi^a) \bigg) + \lambda^a \bigg( V_T(\pi^a) - \frac{1}{N -1}\sum_{b \in \bA \setminus \{a\}} V_T(\pi^b) \bigg)
\\
&= H^a + V_T(\pi^a) - \wt{\lambda}^a \sum_{b \in \bA \setminus \{a\}} V_T(\pi^b),
\quad\text{where}\quad
		\wt{\lambda}^a := \frac{\lambda^a}{N -1}
\end{align*} 
and $\lambda^a \in [0,1]$ is the \emph{concern rate} (or \emph{jealousy factor}) of agent $a \in \bA$ (compare with \eqref{intro-perffunc}).

We make the following assumption on the concern rates $\lambda^{\cdot}$, whose justification will become clear later on in Section \ref{example-general-lambdas}.
\begin{assumption}[Performance concern rates\footnotemark]
\label{assump:ConcernRates}
We have $\lambda^a \in \left[0,1\right]$ for each agent 
and $\prod_{a \in \bA} \lambda^a < 1$.
\end{assumption}
\footnotetext{The case of $\lambda^a>1$ is, as we show in Section \ref{example-general-lambdas}, not necessarily intractable, but the analysis of such a situation is beyond the scope of this paper.}

For notational convenience we introduce the $\big(\bR^{2}\big)^{(N-1)}$-valued vector $\pi^{-a} := (\pi^b)_{b \neq a}$ and 
\begin{align}
\label{eq:averageVaminusa}
	\bar{V}_t^{-a} 
		:= 
		   \sum_{b \in \bA \setminus \{a\}} V_t(\pi^b) 
		 = 
		V_t(\bar{\pi}^{-a})
	\quad\text{where}\quad
		\bar{\pi}^{-a} := \sum _{b \in \bA \setminus \{a\}} \pi^b .
\end{align}
The risk associated with the self-financing strategy $\pi^a$ evolves according to the BSDE 
\begin{align} 
\label{BSDE_1} 
		-\ud Y^{a}_{t} 
		= 
		g^{a}\left(t,Z^{a}_{t}\right) \udt  -\langle Z^{a}_{t},\udwt\rangle 
    \quad\text{and}\quad
		Y^{a}_{T} 
		= -\xi^a(\pi^a,\pi^{-a}) =
		-\Big(H^{a} +V_T(\pi^a)  - \wt{\lambda}^a  V_T(\bar{\pi}^{-a})\Big). 
\end{align} 
Now, we introduce a notion of admissibility for our problem. 
\begin{definition}[Admissibility] 
Let $a\in\bA$, and $\pi^{-a}=(\pi^b)_{b \in \bA\backslash\{a\}}$ be integrable strategies for the other agents. 
The $\bR^2$-valued strategy process $\pi^a$ is called \emph{admissible} with respect to the market price of risk $\theta$ if $\bE^{\theta}  \left[\lev \; V_\cdot(\pi^a) \; \rev_{T}\right]< \infty$, where $\lev  V_\cdot(\pi^a) \rev$ denotes the quadratic variation of $\big(V_t(\pi^a)\big)_{t \in [0,T]}$ and BSDE \eqref{BSDE_1} has a unique solution.
The set of admissible trading strategies for agent $a\in\bA$ is denoted by $\cA^{\theta}(\pi^{-a})$.
\end{definition} 

We point out that in full generality each agent could have her own trading constraints, as in \cite{EspinosaTouzi2015} or \cite{FreiDosReis2011}. Here we assume that the agents have no trading constraints, aside from their strategies being integrable against the prices and leading to well-defined risk.

Each agent $a\in\bA$ solves the following risk-minimization problem 
\begin{align*}
	\min\{ {Y}^a_0(\pi^a,\pi^{-a}) \ |  \ \pi^a\in \cA^\theta(\pi^{-a}) \}. 
\end{align*}

Notice that, a priori, the risk for agent $a$ and the strategy chosen depend on the strategies of all other players, $\pi^{-a}$. This interdependence is an ever-present feature of our model. For the sake of presentation we leave it implicit whenever possible and we write the solution to the BSDE giving the risk for Agent $a$ as $(Y^a,Z^a)$ instead of $\big(Y^a(\pi^a,\pi^{-a}),Z^a(\pi^a,\pi^{-a})\big)$. We will use the latter when the situation requires it.

\subsubsection*{Competitive equilibrium, equilibrium market price of risk and endogenous trading}

We denote by $n \in \bR$ the number of units of derivative present in the market. While each unit of derivative pays $H^D$ at time $T$, the agents are free to buy and underwrite contracts for any amount of $H^D$, so that $n$ is not necessarily an integer. We think essentially of the case $n=0$, where every derivative held by an agent has been underwritten by another agent in $\bA$, entailing essentially that agents share their risks with each other (see \cites{BarrieuElKaroui2005,BarrieuElKaroui2009} or \cite{HorstMuller2007}). Building upon \cite{HorstPirvuDosReis2010} allows for a bit more flexibility as  $n\neq 0$ is possible\footnote{The situation $n \neq 0$ would be possible if, prior to time $t=0$, another agent $a_0 \notin \bA$ was on the derivatives market and then stopped her activity, for instance $a_0$ might have had as objective to buy $m>0$ units, according to her own specific criteria, in which case $n=-m<0$.}. In any case, over $[0,T]$, only the agents in our set $\bA$, with trading objectives as  described above, are active in the market and so the total number $n$ of derivatives present is constant over time.

We assume that each agent seeks to minimize her risk measure independently, without cooperation with the other agents, so we are interested in Nash equilibria. 
\begin{definition}[Equilibrium and equilibrium MPR (EMPR)]
\label{def-equilibrium}
  For a given \emph{Market Price of Risk} (MPR) $\theta=(\theta^S,\theta^R)$, we call $\pi^*=\left(\pi^{*,a}\right)_{a \in \bA}$ an \emph{equilibrium} if, for all $a\in\bA$ , $\pi^{*,a} \in\cA^\theta(\pi^{*,-a})$ and 
\begin{enumerate}
	\item[] for any admissible strategy $\pi^a$ it holds that $Y_0^a(\pi^{*,a},\pi^{*,-a}) \leq Y_0^a(\pi^a,\pi^{*,-a})$,
\end{enumerate}
i.e. individual optimality given the strategies of the other agents.
We call $\theta$ \emph{Equilibrium MPR (EMPR)} and $\theta^R$ \emph{Equilibrium Market Price of external Risk (EMPeR)} if
\begin{enumerate}	 
	\item $\theta=(\theta^S,\theta^R)$ makes $\bP^\theta$ a true probability measure (equivalently, $\cE^{\theta}$ from \eqref{density-process} is a uniformly integrable martingale);

	\item there exists a unique equilibrium $\pi^*$ for $\theta$;

	\item $\pi^*$ satisfies the market clearing condition (or fixed supply condition) for the derivative $H^D$ (where $Leb$ denotes the Lebesgue measure): 
	\begin{align}
		\label{nnetsupplycondition}
			\sum_{a \in \bA} \pi_t^{*,a,2} = \sum_{a \in \bA} \pi_{0}^{*,a,2} = n \quad \bP \otimes Leb-a.e.
		\end{align}
	\end{enumerate}

\end{definition}
%
%
%

%
%
%
\section{The single agent's optimization and unconstrained equilibrium} 
\label{section-SingleAgentSection}  

In this section and the one following, we study the solvability of the problem and the structure of the equilibrium for general risk measures induced by a BSDE. 
Finding an equilibrium market price of external risk is, essentially, an optimization under the fixed-supply constraint. 
So, prior to looking whether such an equilibrium MPR exists (postponed to Section \ref{sec4-repagent}), we start by fixing an arbitrary MPR $\theta \in \cH_{BMO}$ and solve for the behavior of the system of agents given that MPR, without the fixed-supply constraint. 
For this, we first solve for the behavior of the individual agents given that the others have chosen their strategies (the so-called best response problem), and then solve for the Nash equilibrium for the system of agents.

\subsection{Optimal response for one agent}

In this subsection, in addition to a MPR $\theta$ being fixed, we assume given the strategies $\pi^{-a} = (\pi^b)_{b \in \bA\backslash\{a\}}$ of the other agents, for a fixed agent $a \in \bA$, 
and we study the investment problem for a single agent whose preferences are encoded by $g^ a$.

\subsubsection*{Optimizing the residual risk}

To solve the optimization problem for agent $a$, 
we first recall from \cite{HorstPirvuDosReis2010} that, at each time $t \in [0,T]$,
the strategy chosen must minimize the residual risk: the additivity of the risk measure implies (writing $V_T = (V_T - V_t) + V_t$ and using the translation invariance) that
	\begin{align*}
		Y_t^a 
			=\rho^a_t\Big( H^a + V^a_T - \wt{\lambda}^a \bar{V}^{-a}_T \Big)		
			= \rho^a_t\Big( H^a + (V^a_T-V^a_t) - \wt{\lambda}^a (\bar{V}^{-a}_T-\bar{V}^{-a}_t) \Big) - \big( V^a_t - \wt{\lambda}^a \bar{V}^{-a}_t \big) .
	\end{align*}
This suggests applying the following change of variables to \eqref{BSDE_1} (using \eqref{eq:averageVaminusa}),
\begin{align}		
\label{eq:BSDEchangeVArs}
	\begin{aligned}
		\wt{Y}_t^a &:= Y_t^a + \big( V_t^{a} - \wt{\lambda}^a \bar{V}_t^{-a} \big),
		\\
		\wt{Z}_t^a &:= Z_t^a +\zeta_t^a 	, 
			\quad \text{where} \quad
			\zeta_t^a = \big( \pi_{t}^{a,1} \sigma_t +\pi_{t}^{a,2} \kappa^\theta_t \big) - \wt{\lambda}^a \big( \bar{\pi}^{-a,1}_t \sigma_t + \bar{\pi}_{t}^{-a,2} \kappa^\theta_t \big) \in \bR^2.
	\end{aligned}
\end{align}	
If the strategies are not clear from the context, we also write $\zeta^a = \zeta^a(\pi)=\zeta^a(\pi^a,\pi^{-a})$. 
Direct computations yield a BSDE for $(\wt{Y}^a,\wt{Z}^a)$ given by 
\begin{align}		
\label{agentresidualBSDEtransf}
		- d\wt{Y}^a_t 
		= 
		\wt{g}^a\Big(t,\pi_t^a,\pi_t^{-a},\wt{Z}_t^a\Big) \udt - \langle \wt{Z}_t^a, \udwt \rangle 
	\quad \text{with terminal condition } \quad 
		\wt{Y}_T^a 	= -H^a,
\end{align}
where the driver $\widetilde{g}^a \colon \Omega\times \left[0,T\right]  \times \bR^2 \times (\bR^2)^{N-1}\times \bR^2 \to \bR$ is defined as 
\begin{align}		
		\label{eq-version2-Gab-v00}
		\wt{g}^a(t,\pi^a_t,\pi^{-a}_t,z^a) 
		:
		&
		= g^a\big(t, z^a - \zeta_t^a \big) - \lev \zeta_t^a , \theta_t \rev 
		\\
		\label{eq-version2-Gab}	
		&
		=
		g^a\bigg(t, 
		    z^a 
		- \Big( 
			   \big( \pi^{a,1}_t - \wt{\lambda}^a \bar{\pi}^{-a,1}_t \big) \sigma_t 
					+\big( \pi^{a,2}_t - \wt{\lambda}^a \bar{\pi}^{-a,2}_t \big) \kappa^\theta_t
			\Big) 
   \bigg)
\\
\nonumber
&\hspace{3cm} 
		- \lev 
			\big( \pi^{a,1}_t - \wt{\lambda}^a \bar{\pi}^{-a,1}_t \big) \sigma_t 
		+ \big( \pi^{a,2}_t - \wt{\lambda}^a \bar{\pi}^{-a,2}_t \big) \kappa^\theta_t
		 , 
		\theta_t \rev.
\end{align}
Each individual agent $a\in \bA$ seeks to minimize $\widetilde Y^a_0$,
the solution to \eqref{agentresidualBSDEtransf}, via her choice of investment strategy $\pi^a\in \cA^\theta(\pi^{-a})$, in other words she aims at solving
\begin{align}
\label{eq:ResidualRiskOptimizationProblem}
		\min_{\pi^a\in \cA^\theta(\pi^{-a})} \widetilde{Y}^a_0 (\pi^a,\pi^{-a}).
\end{align} 
Before we solve the individual optimization, we assume that the derivative $H^D$ does indeed complete the market. This must then be verified a posteriori (once the solution is computed) and case-by-case depending on the specific model. 
\begin{assumption}
\label{assump-kappa}
Assume that $\kappa^R_t \neq 0$, for any $t \in [0,T]$, $\bP$-a.s. .
\end{assumption}

\subsubsection*{The pointwise minimizer for the single agent's residual risk}

In \eqref{agentresidualBSDEtransf}, the strategy $\pi^a$ appears only in the driver $\wt g^a$. The comparison theorem for BSDEs suggests that in order to minimize $\widetilde{Y}^a_0(\pi^a)$ over $\pi^a$
one needs only to minimize the driver function $\widetilde g^a$ over $\pi^a_t$, for each fixed $\omega$, $t$, $\pi^{-a}_t$ and $z^a$. We define such pointwise minimizer as the random map 
\begin{align*} 
		\Pi^a(\omega,t,\pi^{-a}_t,z) := \arg\min_{\pi^a \in\bR^2} \widetilde g^a(\omega,t,\pi^a,\pi^{-a}_t,z), \qquad 
		(\omega,t,\pi^{-a}_t,z) \in \Omega \times [0,T]\times(\bR^2)^{N-1} \times \bR^2,
\end{align*}
and 
$\wt{G}^a(t,\pi^{-a}_t,z^a) : = \wt{g}^a \big(t,\Pi^a(t,\pi^{-a},z^a),\pi^{-a}_t,z^a\big)$ as the minimized driver.

The pointwise minimization problem has, under Assumption \ref{ass:driver}, a unique minimizer which is characterized by the \emph{first order conditions (FOC)} for $\widetilde g^a$, i.e. $\nabla_{\pi^a} \widetilde g^a(t,\pi^a,\pi^{-a}_t,z^a)=0$. Recall that $\sigma = (\sigma^S S,0)$. Using \eqref{eq-version2-Gab}, the FOC is equivalently written as
\begin{align}
\nonumber
	\partial_{\pi^{a,1}}\widetilde g^a (t,\pi^a,\pi^{-a},z^a) = 0 
		&\Leftrightarrow 
				\big\langle (\nabla_z g^a)(t,z^a-\zeta^a), - \sigma \big\rangle  - \langle \sigma, \theta\rangle =0
\\	\label{FOC-number-one-widetilde-G}
		&\Leftrightarrow 
				g^a_{z^1} (t,z^a-\zeta^a) = -\theta^S,
\\ \nonumber
	\partial_{\pi^{a,2}}\widetilde g^a (t,\pi^a,\pi^{-a},z^a) = 0
		&\Leftrightarrow 
				\big\langle (\nabla_z g^a)(t,z^a-\zeta^a) , -\kappa^\theta\big\rangle -\langle \kappa^{\theta}, \theta \rangle=0 
\\ \nonumber
		&\Leftrightarrow 
				-\theta^S \, \kappa^S + g^a_{z^2} (t,z^a-\zeta^a) \, \kappa^R = - \kappa^S \theta^S - \kappa^R \theta^R
\\	\label{FOC-number-two-widetilde-G}
		&\Leftrightarrow 
				g^a_{z^2}(t,z^a-\zeta^a) = -\theta^R,
\end{align}
where we used \eqref{FOC-number-one-widetilde-G} to obtain \eqref{FOC-number-two-widetilde-G} under Assumption \ref{assump-kappa}.

With $\cZ^a$ from \eqref{eq:caligraphicZ}, the FOC system \eqref{FOC-number-one-widetilde-G}--\eqref{FOC-number-two-widetilde-G} is equivalent to $z^a-\zeta^a_t=\cZ^a(t,-\theta_t)$. The expression for $\zeta$ in \eqref{eq:BSDEchangeVArs} and elementary re-arrangements allow to rewrite $z^a-\zeta^a_t=\cZ^a(t,-\theta_t)$ as
\begin{align}		
\label{equation-general.structure.of.the.map.Pi}
\begin{aligned}
	\Pi^{a,1}(t,\pi^{-a}_t,z^a) - \wt{\lambda}^a \bar{\pi}^{-a,1}_t 
			&
			= 
			\frac{ z^{a,1} - \cZ^{a,1}(t,-\theta_t) }{\sigma^S S_t} 
		- \frac{ z^{a,2} - \cZ^{a,2}(t,-\theta_t) }{\kappa^R_t} 
		\ \frac{\kappa^S_t}{\sigma^S S_t},				
			\\
	\Pi^{a,2}(t,\pi^{-a}_t,z^a) - \wt{\lambda}^a \bar{\pi}^{-a,2}_t 
			&
			= 
			\frac{ z^{a,2} - \cZ^{a,2}(t,-\theta_t) }{\kappa^R_t}.
\end{aligned}
\end{align}
Plugging $z^a-\zeta^a_t=\cZ^a(t,-\theta_t)$ into \eqref{eq-version2-Gab-v00} yields an expression for the minimized (random) driver  
	\begin{align}		
	\label{equation---minimized.driver.wtGa.for.general.ga} 
		\wt{G}^a(t,\pi^{-a}_t,z^a)   
			= g^a\big( t,\cZ^a(t,-\theta_t) \big) +  \scalar{\cZ^a(t,-\theta_t)}{\theta_t} - \scalar{z^a}{\theta_t}  =: \wt{G}^a(t,z^a) .
	\end{align}
		Since $g^a$ is generic at this point, the process $\cZ^a(t,-\theta_t)$ in not known precisely. Nonetheless, the general structure of $\Pi^a$ and the minimized driver $\wt{G}^a$ are determined. 	
		We stress two important things. First, $\wt{G}^a$ is an affine driver with stochastic coefficients. Second, $\wt{G}^a$ does not depend at all on $\pi^{-a}$. 
		This means that while the optimal strategy $\pi^{*,a}$ (see below) depends on the strategies of the other agents, the minimized risk does not. In \cite{HorstPirvuDosReis2010}, 
		the authors did not obtain this general form for the minimized driver.

\subsubsection*{Single-agent optimality}

Since $\wt{G}^a$ is an affine driver, and since $\nabla_z \wt{G}^a = -\theta \in \cH_{BMO}$, we have a unique solution to the BSDE with driver $\wt{G}^a$ and terminal condition $-H^a$ provided that the process $(\omega,t) \mapsto \wt{G}^a(t,0) = g^a\big( t,\cZ^a(t,-\theta_t) \big) +  \scalar{\cZ^a(t,-\theta_t)}{\theta_t}$ is integrable enough, which we assume. 
Let then $(\wt{Y}^a,\wt{Z}^a)$ be the solution to BSDE \eqref{agentresidualBSDEtransf} with driver \eqref{equation---minimized.driver.wtGa.for.general.ga} and define the strategy $\pi^{*,a}_{\cdot} := \Pi^a(\cdot,\pi^{-a}_\cdot,\wt{Z}^a_\cdot)$. We now prove that the above methodology indeed yields the solution to the individual risk minimization problem, in other words the so-called best response.

\begin{theorem}[Optimality for one agent]	
	\label{theorem--single_agent_optimality} 
	Assume the market price of risk $\theta=(\theta^S,\theta^R) \in \cH_{BMO}$ and let Assumption \ref{assump-kappa} hold. 
	Fix an agent $a\in\bA$ and a set of integrable strategies $\pi^{b}$ for $b \in \bA \setminus \left\{a\right\}$. Assume further that
		\begin{itemize}
			\item for $\wt{G}^a$ given by \eqref{equation---minimized.driver.wtGa.for.general.ga}, ${|\wt{G}^a(\cdot,0)|}^\frac12 \in \cHBMO$, and 
			\item $\pi^{*,a}_\cdot=\Pi^a(\cdot,\pi^{-a}_\cdot,\wt{Z}^a_\cdot)$ is admissible .
		\end{itemize}
Then BSDE with driver \eqref{equation---minimized.driver.wtGa.for.general.ga} and terminal condition $-H^a$ has a unique solution $(\wt{Y}^a,\wt{Z}^a)\in\cS^\infty \times \cHBMO$. Moreover, $\wt{Y}^a_0$ is the value of the optimization problem \eqref{eq:ResidualRiskOptimizationProblem} (i.e.~the minimized risk) for agent $a$ and $\pi^{*,a}$ is the unique optimal strategy. 
\end{theorem}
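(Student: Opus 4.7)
The plan is twofold: (i) establish well-posedness of the candidate BSDE via Girsanov under $\bP^\theta$, then (ii) deduce that $\pi^{*,a}$ is the unique optimizer by a verification-type comparison argument that exploits strict convexity of $g^a$ and the pointwise-minimality built into $\wt G^a$.

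\textbf{Step (i): well-posedness.} The key observation is that, by \eqref{equation---minimized.driver.wtGa.for.general.ga}, the driver decomposes as $\wt G^a(t,z) = \wt G^a(t,0) - \langle z,\theta_t\rangle$, i.e.\ it is affine in $z$ with BMO-coefficient $-\theta_t$. Rewriting in terms of the $\bP^\theta$-Brownian motion $W^\theta$ gives
$$
- d\wt Y^a_t = \wt G^a(t,0)\, dt - \langle \wt Z^a_t, dW^\theta_t\rangle, \qquad \wt Y^a_T = -H^a,
$$
and Girsanov applies since $\theta \in \cHBMO$. Boundedness of $H^a$ (Assumption \ref{assump-StandardData}) together with the hypothesis $|\wt G^a(\cdot,0)|^{1/2}\in\cHBMO$ (which in particular forces $\bE^\theta\big[\int_0^T|\wt G^a(s,0)|\,ds\big]<\infty$) make
$$
\wt Y^a_t \;:=\; \bE^\theta\Big[\,-H^a + \int_t^T \wt G^a(s,0)\,ds\,\Big|\,\cF_t\Big]
$$
a bounded process; martingale representation supplies $\wt Z^a$ and BMO-BSDE estimates give $\wt Z^a \in \cHBMO(\bP^\theta)$, equivalently $\cHBMO(\bP)$ (the BMO norms under two measures whose density exponent is BMO are equivalent). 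Uniqueness in $\cS^\infty\times\cHBMO$ follows from linearity of the driver in $z$ and a standard contraction/Gronwall argument under $\bP^\theta$.

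\textbf{Step (ii): verification and uniqueness.} Given an arbitrary competitor $\pi^a\in\cA^\theta(\pi^{-a})$, denote by $(\wh Y^a,\wh Z^a)$ the solution of \eqref{agentresidualBSDEtransf} provided by admissibility. Set $\delta Y := \wh Y^a - \wt Y^a$ and $\delta Z := \wh Z^a - \wt Z^a$. Using the identity $\wt g^a(t,\pi^a,\pi^{-a},z)=g^a(t,z-\zeta^a_t)-\langle \zeta^a_t,\theta_t\rangle$ (from \eqref{eq-version2-Gab-v00}), a mean-value linearization in the $z$-variable yields
$$
-d(\delta Y)_t \;=\; \big(\langle \alpha_t,\delta Z_t\rangle + A_t\big)\,dt - \langle \delta Z_t, dW_t\rangle, \qquad \delta Y_T=0,
$$
with $\alpha_t := \int_0^1 \nabla_z g^a\big(t,\wt Z^a_t-\zeta^a_t+r\,\delta Z_t\big)\,dr$ and $A_t := \wt g^a(t,\pi^a_t,\pi^{-a}_t,\wt Z^a_t) - \wt G^a(t,\wt Z^a_t) \geq 0$ by the pointwise-minimum definition of $\wt G^a$. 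A second Girsanov transform absorbing $\langle \alpha,\delta Z\rangle\,dt$ into the Brownian integral gives $\delta Y_0 = \bE^{\wh\bP}\big[\int_0^T A_s\,ds\big]\geq 0$, hence $\wh Y^a_0\geq \wt Y^a_0$ for every competitor. For $\pi^{*,a}=\Pi^a(\cdot,\pi^{-a},\wt Z^a)$ the pointwise minimality delivers $A\equiv 0$ and equality is attained, so $\wt Y^a_0$ is the value of \eqref{eq:ResidualRiskOptimizationProblem}. Uniqueness of the optimum is immediate from strict convexity of $g^a$ (Assumption \ref{ass:driver}): any $\pi^a\neq \pi^{*,a}$ on a set of positive $d\bP\otimes dt$-measure yields $A>0$ there, forcing $\delta Y_0>0$.

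\textbf{Main obstacle.} The delicate ingredient is the second Girsanov transformation with kernel $\alpha$, since $\nabla_z g^a$ is not assumed to be globally bounded (in the entropic case it is linear in $z$). The resolution is that $\wt Z^a\in\cHBMO$ from Step (i), and admissibility together with the structure of $\wt g^a$ ensures $\wh Z^a\in\cHBMO$ as well, so that $\alpha\in\cHBMO$; BMO kernels generate uniformly integrable Doléans exponentials, which is what Girsanov requires. A secondary technical point is the identification $\cHBMO(\bP^\theta)=\cHBMO(\bP)$ for $\wt Z^a$, which follows from the energy inequalities for BMO under an equivalent measure change by a BMO density, as in \cite{BarrieuElKaroui2009}.
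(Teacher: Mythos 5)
Your Step (i) is essentially the paper's existence argument made explicit (affine driver, passage to $\bP^\theta$, conditional-expectation representation, transfer of BMO norms under an equivalent change of measure with BMO density), and it is correct. The genuine gap is in Step (ii), and it lies precisely where you flagged the ``main obstacle'': your resolution of it does not hold. You split the driver difference as $[\wt{g}^a(\cdot,\pi^a,\pi^{-a},\wh{Z}^a)-\wt{g}^a(\cdot,\pi^a,\pi^{-a},\wt{Z}^a)]+[\wt{g}^a(\cdot,\pi^a,\pi^{-a},\wt{Z}^a)-\wt{G}^a(\cdot,\wt{Z}^a)]$ and linearize the first bracket through $\nabla_z g^a$, producing the kernel $\alpha_t=\int_0^1\nabla_z g^a(t,\wt{Z}^a_t-\zeta^a_t+r\,\delta Z_t)\,dr$. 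Admissibility of a competitor $\pi^a$ only requires $\bE^\theta[\lev V_\cdot(\pi^a)\rev_T]<\infty$ and well-posedness of \eqref{agentresidualBSDEtransf}; it yields neither $\wh{Z}^a\in\cHBMO$ nor any BMO control on $\zeta^a$, which enters the argument of $\nabla_z g^a$ inside $\alpha$. Since Assumption \ref{ass:driver} does not bound $\nabla_z g^a$ (in the entropic case it is linear in $z$), $\alpha$ is in general only as integrable as $\zeta^a$, i.e.\ square-integrable, so the Dol\'eans exponential of $\int\langle\alpha,dW\rangle$ need not be a uniformly integrable martingale and the identity $\delta Y_0=\bE^{\wh\bP}[\int_0^T A_s\,ds]$ is not justified. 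The comparison and the uniqueness conclusion both rest on this unproved step.

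The paper's proof avoids the problem by splitting the other way: add and subtract $\wt{G}^a(s,\wh{Z}^a_s)$, the \emph{minimized} driver evaluated at the competitor's control process, so that the difference of drivers reads $[\wt{g}^a(s,\pi^a_s,\pi^{-a}_s,\wh{Z}^a_s)-\wt{G}^a(s,\wh{Z}^a_s)]+[\wt{G}^a(s,\wh{Z}^a_s)-\wt{G}^a(s,\wt{Z}^a_s)]$. The first bracket is nonnegative by pointwise minimality \emph{at the same $z$}, so no linearization of $g^a$ is needed at all; the second equals exactly $-\langle\delta Z_s,\theta_s\rangle$ by the affine form \eqref{equation---minimized.driver.wtGa.for.general.ga} and is absorbed into the already-constructed $W^\theta$. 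The only measure change used is $\bP^\theta$, whose kernel $\theta\in\cH_{BMO}$ is a standing hypothesis and is independent of the competitor strategy. With that decomposition your remaining argument goes through verbatim under $\bP^\theta$: nonnegativity gives minimality, and if the value at $t=0$ is attained the nonnegative integrand must vanish $\bP^\theta\otimes Leb$-a.e., forcing $\wh{Z}^a=\wt{Z}^a$ and then, by strict convexity of $\pi^a\mapsto\wt{g}^a$ and uniqueness of its minimizer, $\pi^a=\pi^{*,a}$. To salvage your decomposition instead, you would have to strengthen the admissibility class (e.g.\ require $\zeta^a(\pi^a)\in\cHBMO$ and $\nabla_z g^a$ of linear growth), which changes the statement being proved.
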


\begin{proof} 
Given the structure of $\wt G^a$ in \eqref{equation---minimized.driver.wtGa.for.general.ga} and the integrability assumption made, the existence and uniqueness of the BSDE's solution $(\wt{Y}^a,\wt{Z}^a)$ in $\cS^\infty \times \cHBMO$ is straightforward.  

	We first use the comparison theorem to prove the minimality of $\wt{Y}^a$, and hence the optimality of $\pi^{*,a}$.	Let $t\in[0,T]$. 
	Take any strategy $\pi^a \in \cA^\theta(\pi^{-a})$.
	First, from the definition of $\wt{G}^a$ as a pointwise minimum, we naturally have that 
	$\wt{G}^a(t,z^a) = \wt{g}^a\left(t,\Pi^a(t,\pi^{-a}_t,z^a),\pi^{-a}_t,z^a\right) \leq \wt{g}^a(t,\pi^a_t,\pi^{-a}_t,z^a)$ for all $t$ and $z^a$, 
	that is, $\wt{G}^a(\cdot,\cdot) \le \wt{g}^a(\cdot,\pi^a_\cdot,\pi^{-a}_\cdot,\cdot)$.
	Second, $\wt{G}^a$ is affine and thus Lipschitz, with Lipschitz coefficient process $-\theta \in \cH_{BMO}$. 
	By the comparison theorem, we therefore have, for any $t\in[0,T]$ and in particular for $t=0$, that $\wt{Y}^{a}_t = \wt{Y}^a_t(\pi^{*,a},\pi^{-a}) \leq \wt{Y}^a_t(\pi^a,\pi^{-a})$. 
	As this holds for any $\pi^a\in \cA^\theta(\pi^{-a})$, this proves the minimality of $\wt{Y}^{a}_0 = \rho_0^a\big(\xi^a(\pi^{*,a},\pi^{-a})\big)$ and thus the optimality of $\pi^{*,a}$.

We now argue the uniqueness of the optimizer $\pi^{*,a}$. 
Let $\pi^a$ be an admissible strategy and let $(\wt{Y}^a(\pi^{a}), \wt{Z}^a(\pi^a))$ be the corresponding risk, i.e. solution to the BSDE \eqref{agentresidualBSDEtransf} with strategy $\pi^a$. 
We compute the difference $\wt{Y}^a_t(\pi^a) - \wt{Y}^a_t(\pi^{*,a})$, adding-and-substracting in the Lebesgue integral
$\wt{g}^a\left(t,\Pi^a\big(t,\pi^{-a}_t,\wt{Z}^a_t(\pi^a)\big),\pi^{-a}_t,\wt{Z}^a_t(\pi^a)\right) = \wt{G}^a(t,\wt{Z}^a_t(\pi^a))$, and using the affine form of $\wt{G}^a$ :
\begin{align}
\label{eq--uniquenessBSDEproof} 
 \wt{Y}^a_t(\pi^a) -& \wt{Y}^a_t(\pi^{*,a}) \nonumber \\
= & \int_t^T \left[\wt{g}^a\left(s,\pi^a_s,\pi^{-a}_s,\wt{Z}^a_s(\pi^a)\right) - \wt{G}^a\left(s,\wt{Z}^a_s(\pi^{*,a})\right)  \right] \uds 
\nonumber 
  - \int_t^T [\wt{Z}^a_s(\pi^a) - \wt{Z}^a_s(\pi^{*,a})] \ud W_s
\nonumber \\
= &\int_t^T \Big[\wt{g}^a\left(s,\pi^a_s,\pi^{-a}_s,\wt{Z}^a_s(\pi^a)\right) - \wt{g}^a\left(s,\Pi^a\big(s,\pi^{-a}_s,\wt{Z}^a_s(\pi^a)\big),\pi^{-a}_s,\wt{Z}^a_s(\pi^a)\right)\Big] \ud s
\\
\nonumber
& \qquad -\int_t^T  [\wt{Z}^a_s(\pi^a) - \wt{Z}^a_s(\pi^{*,a})] \ud W^\theta_s .
\end{align}
By construction of $\Pi^a$ as a minimizer, the difference in \eqref{eq--uniquenessBSDEproof} is always positive. In particular, taking $\bP^\theta$-expectation w.r.t. $\cF_t$ implies that $\wt{Y}^a_t(\pi^a) - \wt{Y}^a_t(\pi^{*,a}) \ge 0$ for all $t \in [0,T]$.
Assume that $\pi^a$ is an optimal strategy. Then $\wt{Y}^a_0(\pi^a) = \wt{Y}^a_0(\pi^{*,a})$ and the LHS for $t=0$ vanishes. 
Under $\bP^\theta$-expectation, the stochastic integral on the RHS also vanishes and we can conclude that the integrand in \eqref{eq--uniquenessBSDEproof} is zero $\bP^\theta \otimes Leb$-a.e.
Consequently, we obtain $\wt{Y}^a(\pi^a) = \wt{Y}^a(\pi^{*,a})$ and hence $\wt{Z}^a(\pi^a) = \wt{Z}^a(\pi^{*,a})$. 
By uniqueness of the minimizer, we then have $\pi^a_{\cdot} = \Pi^a\big(\cdot,\pi^{-a}_{\cdot},\wt{Z}^a_{\cdot}(\pi^a)\big) = \Pi^a\big(\cdot,\pi^{-a}_{\cdot},\wt{Z}^a_{\cdot}(\pi^{*,a})\big) = \pi^{*,a}_{\cdot}$.
\end{proof}

	\begin{remark}
	\label{remark---ParticularCaseTheo3.3}
		While Theorem \ref{theorem--single_agent_optimality} is stated as the optimal response of a single agent $a$ in the system $\bA$ with the other strategies $\pi^{-a}$ being fixed,
		it is clear that it can more generally describe the optimal investment of an agent with preferences described by $g^a$ (equivalently, $\rho^a_{\cdot}$)
		who trades in the assets $S$ and $B$, which have the given MPR $\theta$ (one can think of making $\bA=\{a\}$, or doing $\lambda^a = 0$). 
		Following the same methods, the result could be generalized to a higher number of assets, with price processes given exogenously. This applies similarly to an agent trading in fewer assets, by setting the respective components to zero -- see Theorem \ref{theorem:EMPR_aggregate_2_2}. 
	\end{remark}

We now state a characterization of the optimal strategy via the FOC.

	\begin{lemma} 
	\label{remark---FOC.for.processes.pi.and.Z.implies.optimality}
		Under the assumptions of Theorem \ref{theorem--single_agent_optimality}, 
		let $\wh{\pi}^a$ be an admissible strategy and $(\wh{Y}^a,\wh{Z}^a)$ be the associated risk process, 
		solution to the BSDE with driver $\wt{g}^a(t,\wh{\pi}^a_t,\pi^{-a}_t,\cdot)$ and terminal condition $-H^a$.
		Assume that they satisfy the FOC \eqref{FOC-number-one-widetilde-G}--\eqref{FOC-number-two-widetilde-G} in the sense that
			\begin{align*}
				\nabla_z g^a(t, \wh{Z}^a_t - \wh{\zeta}^a_t) = -\theta_t
				\qquad \text{where} \qquad
				\wh{\zeta}^a_t = \big( \wh{\pi}_{t}^{a,1} \sigma_t +\wh{\pi}_{t}^{a,2} \kappa^\theta_t \big) - \wt{\lambda}^a \big( \bar{\pi}^{-a,1}_t \sigma_t + \bar{\pi}_{t}^{-a,2} \kappa^\theta_t \big).
			\end{align*}
		Then $(\wh{Y}^a,\wh{Z}^a) = (\wt{Y}^a,\wt{Z}^a)$ and $\wh{\pi}^a = {\pi}^{*,a}$.
	\end{lemma}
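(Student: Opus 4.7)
The plan is to observe that the FOC \eqref{FOC-number-one-widetilde-G}--\eqref{FOC-number-two-widetilde-G} is not merely a necessary condition for pointwise minimality of $\widetilde g^a(t,\cdot,\pi^{-a}_t,z^a)$, but also sufficient. This is because, under Assumption \ref{ass:driver}, $g^a$ is strictly convex in $z$, and hence for each fixed $(t,\omega,\pi^{-a}_t,z^a)$ the map $\pi^a \mapsto \widetilde g^a(t,\pi^a,\pi^{-a}_t,z^a)$ is strictly convex in $\pi^a$ (up to the linear term coming from $\langle \zeta^a_t,\theta_t\rangle$, which does not break convexity since $\zeta^a$ is affine in $\pi^a$). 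So the unique critical point of $\widetilde g^a$ in $\pi^a$ is exactly $\Pi^a(t,\pi^{-a}_t,z^a)$, characterized by $\nabla_z g^a(t, z^a - \zeta^a_t) = -\theta_t$.

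Applying this with $z^a = \widehat Z^a_t$, the hypothesis on $\widehat \pi^a$ states precisely that $\widehat\pi^a_t = \Pi^a(t,\pi^{-a}_t,\widehat Z^a_t)$, $\bP\otimes Leb$-a.e. Consequently,
\begin{align*}
\widetilde g^a\bigl(t,\widehat\pi^a_t,\pi^{-a}_t,\widehat Z^a_t\bigr)
 = \widetilde g^a\bigl(t,\Pi^a(t,\pi^{-a}_t,\widehat Z^a_t),\pi^{-a}_t,\widehat Z^a_t\bigr)
 = \widetilde G^a(t,\widehat Z^a_t),
\end{align*}
so that the pair $(\widehat Y^a,\widehat Z^a)$ solves the BSDE with affine driver $\widetilde G^a$ (given in \eqref{equation---minimized.driver.wtGa.for.general.ga}) and terminal condition $-H^a$.

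By Theorem \ref{theorem--single_agent_optimality}, this BSDE admits the unique solution $(\widetilde Y^a,\widetilde Z^a)\in\cS^\infty\times\cHBMO$. Since $\widehat\pi^a$ is admissible by assumption, $(\widehat Y^a,\widehat Z^a)$ lies in the appropriate space, and uniqueness forces $(\widehat Y^a,\widehat Z^a) = (\widetilde Y^a,\widetilde Z^a)$. In particular $\widehat Z^a = \widetilde Z^a$, so
\begin{align*}
\widehat\pi^a_t = \Pi^a(t,\pi^{-a}_t,\widehat Z^a_t) = \Pi^a(t,\pi^{-a}_t,\widetilde Z^a_t) = \pi^{*,a}_t,
\end{align*}
which completes the argument. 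The only delicate point is ensuring that the FOC really characterizes $\Pi^a$ (and not only identifies a critical point), but this follows cleanly from the strict convexity of $g^a$ inherited by $\widetilde g^a$; the rest is essentially a uniqueness statement for a linear BSDE with BMO coefficients.
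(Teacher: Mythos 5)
Your argument is correct and follows essentially the same route as the paper's proof: the FOC identifies $\wh{\pi}^a_t$ with $\Pi^a(t,\pi^{-a}_t,\wh{Z}^a_t)$, so the driver of the $(\wh{Y}^a,\wh{Z}^a)$ BSDE coincides with the minimized driver $\wt{G}^a$, and uniqueness of the solution to that BSDE together with uniqueness of the pointwise minimizer yields the conclusion. The only point you gloss over slightly is that strict convexity of $\pi^a\mapsto\wt{g}^a$ requires the affine map $\pi^a\mapsto\zeta^a_t$ to be injective, which holds because $\sigma^S>0$ and $\kappa^R\neq 0$ (Assumption \ref{assump-kappa}) make the volatility matrix invertible; the paper sidesteps this by working directly with the equivalence between the FOC and the equation $\wh{Z}^a_t-\wh{\zeta}^a_t=\cZ^a(t,-\theta_t)$ established in \eqref{equation-general.structure.of.the.map.Pi}.
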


	\begin{proof}
		By the assumptions on $g^a$, $\nabla_z g^a(t, \wh{Z}^a_t - \wh{\zeta}^a_t) = -\theta_t$ means that 
		$\wh{Z}^a_t - \wh{\zeta}^a_t = \cZ^a(t,-\theta_t)$, or equivalently $\wh{\pi}_t = \Pi^a(t,\pi^{-a}_t,\wh{Z}^a_t)$. 
		Therefore $\wt{g}^a(t,\wh{\pi}^a_t,\pi^{-a}_t,\wh{Z}^a_t) = \wt{G}^a(t,\wh{Z}^a_t)$ 
		-- recall \eqref{eq-version2-Gab-v00}. By uniqueness of the solution to the BSDE with driver $\wt{G}^a(t,\cdot)$ and terminal condition $-H^a$, we have $(\wh{Y}^a,\wh{Z}^a) = (\wt{Y}^a,\wt{Z}^a)$.
		Consequently, by the uniqueness of the FOC's solution, $\wh{\pi}^a_t = \Pi^a(t,\pi^{-a}_t,\wh{Z}^a_t) = \Pi^a(t,\pi^{-a}_t,\wt{Z}^a_t) = {\pi}^{*,a}_t$. 
	\end{proof} 

\subsection{The unconstrained Nash equilibrium}

Having solved the optimization problem for one agent, we now look at the existence and uniqueness of a Nash equilibrium, still for the MPR $\theta \in \cH_{BMO}$ fixed at the beginning of this section, and still with no fixed-supply constraint.

Assume $\pi^*=(\pi^{*,a})_{a \in \bA}$ is a Nash equilibrium. Fix an agent $a \in \bA$. From the uniqueness of the optimal strategy, given by Theorem \ref{theorem--single_agent_optimality}, one must have 
	\begin{align*}
		\pi^{*,a}_t = \Pi^a(t,\pi^{*,-a}_t,\wt{Z}^a_t),\quad t \in [0,T],
	\end{align*}
where $(\wt{Y}^a,\wt{Z}^a)$ is the solution to the BSDE with terminal condition $-H^a$ and driver $\wt{G}^a$ given in \eqref{equation---minimized.driver.wtGa.for.general.ga}.
From the characterization \eqref{equation-general.structure.of.the.map.Pi} of $\Pi^{a}$, we therefore have, for all $a \in \bA$ and $t \in [0,T]$, 
	\begin{align}		\label{eq:ApiJlinearsystem}		
		\begin{aligned}
			\pi^{*,a,1}_t - \wt{\lambda}^a \bar{\pi}^{*,-a,1}_t 
				&= 
				\frac{ \wt{Z}^{a,1}_t - \cZ^{a,1}(t,-\theta_t) }{\sigma^S S_t} - \frac{ \wt{Z}^{a,2}_t - \cZ^{a,2}(t,-\theta_t) }{\kappa^R_t} \ \frac{\kappa^S_t}{\sigma^S S_t} =: J^{a,1}_t,				
			\\
			\pi^{*,a,2}_t - \wt{\lambda}^a \bar{\pi}^{*,-a,2}_t 
				&= 
				\frac{ \wt{Z}^{a,2}_t - \cZ^{a,2}(t,-\theta_t) }{\kappa^R_t} =: J^{a,2}_t.
		\end{aligned}
	\end{align}
Note that, for any $a \in \bA$, the process $(\wt{Y}^a,\wt{Z}^a)$ does not depend on $\pi^{*,-a}$ nor $\pi^{*,a}$, seeing as neither $-H^a$ nor $\wt{G}^a$ does. Therefore, $J^a_t$ is also independent of the unknown $\pi^{*}_t$, which is only present in the LHS of \eqref{eq:ApiJlinearsystem}. 

Conversely, assume we can solve for $\pi^*$ in Equation \eqref{eq:ApiJlinearsystem} and that $\pi^*$ is integrable against the prices. Then, since $\pi^{*,a}_t = \Pi^a(t,\pi^{*,-a}_t,\wt{Z}^a_t)$ by \eqref{equation-general.structure.of.the.map.Pi}, Theorem \ref{theorem--single_agent_optimality} guarantees that $\pi^{*,a}$ is the best response to $\pi^{*,-a}$, and we therefore have a Nash equilibrium. 

So, the existence and uniqueness of a Nash equilibrium $\pi^*$ is equivalent to the existence and uniqueness of solutions to Equation \eqref{eq:ApiJlinearsystem}.

Define the matrix $A_N \in \bR^{N \times N}$ by\footnote{Recall the notation that for sums and products over certain subsets of $\bA$ we identify $\bA$ with the set $\left\{1,2,\ldots,N\right\}$, where $N \in \bN$ is the fixed finite number of agents.}
\begin{align}
\label{eq:ANmatrix}
A_{N}=
\begin{bmatrix}
   1      &   & -\frac{\lambda^1}{N-1}\\
   &  \ddots      &   \\
  -\frac{\lambda^{N}}{N-1} & & 1
\end{bmatrix},
\end{align}
i.e. the $j$-th line has the entries $- \wt \lambda^j=-{\lambda^{j}}/{(N-1)}$, everywhere but for the $j$-th one which is $1$.
Equation \eqref{eq:ApiJlinearsystem} can be rewritten as
	\begin{align}		\label{eq:ApiJlinearsystemMatrixForm}
		A_{N} \ \pi^{*,\cdot,i} = J^{\cdot,i},
	\end{align}
where $\pi^{*,\cdot,i} = (\pi^{*,a,i})_{a \in \bA}$ and $J^{\cdot,i} = (J^{a,i})_{a \in \bA}$, for $i \in \{1,2\}$.

\begin{theorem}
	\label{theorem:invertibilityAN}
	Assume the market price of risk $\theta=(\theta^S,\theta^R) \in \cH_{BMO}$, 
	that Assumption \ref{assump-kappa} and Assumption \ref{assump:ConcernRates} hold, 
	that, for all $a \in \bA$, $J^a$ is integrable against the prices 
	and $\abs{\wt{G}^a(\cdot,0)}^{1/2} \in \cH_{BMO}$.
	
	Then there exists a unique Nash equilibrium $\pi^* = (\pi^{*,a})_{a \in \bA}$ associated with the MPR $\theta$, which is given by the unique solution to \eqref{eq:ApiJlinearsystemMatrixForm}.
\end{theorem}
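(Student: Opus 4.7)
The plan is to reduce everything to the linear system \eqref{eq:ApiJlinearsystemMatrixForm} and show invertibility of the matrix $A_N$. First I would argue the equivalence: from Theorem \ref{theorem--single_agent_optimality}, given $\pi^{*,-a}$, agent $a$'s unique best response is $\Pi^a(\cdot,\pi^{*,-a}_\cdot,\wt{Z}^a_\cdot)$, where $(\wt{Y}^a,\wt{Z}^a)$ solves the BSDE with driver $\wt{G}^a$ from \eqref{equation---minimized.driver.wtGa.for.general.ga} and terminal condition $-H^a$ (well-posed under the BMO assumption on $|\wt{G}^a(\cdot,0)|^{1/2}$). Crucially, neither $\wt{Z}^a$ nor $\cZ^a(\cdot,-\theta)$ depends on the strategies, so $J^a$ in \eqref{eq:ApiJlinearsystem} is a datum determined purely by $\theta$. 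Hence $\pi^{*}$ is a Nash equilibrium if and only if it solves \eqref{eq:ApiJlinearsystemMatrixForm} (and is admissible).

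The heart of the proof — and the one step that is not routine — is the invertibility of $A_N$ under Assumption \ref{assump:ConcernRates}. I would establish this by showing $\ker A_N = \{0\}$. If $A_N x = 0$, then for each $j$, $x^j = \frac{\lambda^j}{N-1}\sum_{k\neq j} x^k = \frac{\lambda^j}{N-1}(S-x^j)$ with $S:=\sum_k x^k$, giving
\begin{align*}
x^j = \frac{\lambda^j\, S}{N-1+\lambda^j}.
\end{align*}
Summing over $j$ yields $S\bigl(1 - \sum_{j} \tfrac{\lambda^j}{N-1+\lambda^j}\bigr) = 0$. Because $\lambda \mapsto \lambda/(N-1+\lambda)$ is increasing on $[0,1]$ with maximum $1/N$ at $\lambda=1$, the sum $\sum_j \lambda^j/(N-1+\lambda^j)$ is at most $1$, with equality if and only if $\lambda^j=1$ for every $j$ — a case excluded by $\prod_{a}\lambda^a < 1$. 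Therefore $S=0$, which forces $x^j=0$ for all $j$, proving $A_N$ is invertible. The condition is sharp: it is exactly the pathological case of all agents being maximally jealous that breaks invertibility.

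With $A_N$ invertible, set $\pi^{*,\cdot,i} := A_N^{-1} J^{\cdot,i}$ for $i=1,2$. Since $A_N^{-1}$ is a deterministic constant matrix, each $\pi^{*,a,i}$ is a finite linear combination of the $J^{b,i}$, which by hypothesis are integrable against the prices; hence $\pi^{*,a} \in L^2((S,B^\theta),\bP^\theta)$. By construction $\pi^{*,a}_t = \Pi^a(t,\pi^{*,-a}_t,\wt{Z}^a_t)$, so Theorem \ref{theorem--single_agent_optimality} applies and $\pi^{*,a}$ is the (admissible) best response to $\pi^{*,-a}$ — this is the Nash property.

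Finally, uniqueness follows by running the argument backwards: any Nash equilibrium $\tilde\pi$ must, by the uniqueness clause of Theorem \ref{theorem--single_agent_optimality}, satisfy $\tilde\pi^a = \Pi^a(\cdot,\tilde\pi^{-a},\wt{Z}^a)$ for each $a$, i.e.\ solve \eqref{eq:ApiJlinearsystemMatrixForm} with the same RHS $J$ (which depends only on $\theta$). Invertibility of $A_N$ then gives $\tilde\pi = \pi^{*}$.
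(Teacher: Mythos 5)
Your proof is correct, and its overall architecture (reduction to the linear system via the best-response characterization of Theorem \ref{theorem--single_agent_optimality}, the observation that $J^a$ is a datum independent of the strategies, integrability of $\pi^*$ as a constant linear combination of the $J^a$'s, and uniqueness by running the equivalence backwards) coincides with the paper's. The one place where you genuinely diverge is the invertibility of $A_N$: the paper writes out the determinant explicitly as $1 - \sum_{i<j}\wt{\lambda}^i\wt{\lambda}^j - 2\sum_{i<j<k}\wt{\lambda}^i\wt{\lambda}^j\wt{\lambda}^k - \cdots - (N-1)\prod_i \wt{\lambda}^i$, checks that it vanishes when all $\lambda^a=1$, and argues by strict monotonicity in each $\wt{\lambda}^a$ that it is strictly positive under Assumption \ref{assump:ConcernRates}. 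You instead show $\ker A_N=\{0\}$ directly: solving the $j$-th row for $x^j$ in terms of $S=\sum_k x^k$ and summing reduces everything to $\sum_j \lambda^j/(N-1+\lambda^j)<1$, which follows since each summand is at most $1/N$ with equality only at $\lambda^j=1$. Your route avoids the (unproved) combinatorial determinant identity entirely and makes transparent exactly where the hypothesis $\prod_a\lambda^a<1$ enters and why the all-ones configuration is the unique degenerate case; the paper's determinant formula, on the other hand, gives quantitative information (a lower bound on $\det A_N$, hence on the conditioning of the system) that a pure kernel argument does not. Both are complete proofs of the statement.
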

\begin{proof}
	The determinant of the $A_N$ is
		\begin{equation*}
			\det (A_{N}) = 1 - \sum_{i < j} \wt \lambda^i \wt \lambda^j
									- 2 \sum_{i < j < k} \wt \lambda^i \wt \lambda^j \wt \lambda^k
									- 3 \sum_{i < j < k < l} \wt \lambda^i \wt \lambda^j \wt \lambda^k \wt \lambda^l
									- \ldots
									- (N-1) \prod_{i=1}^N \wt \lambda^i,			
		\end{equation*}
	where the sums run over indices $i, j, k, l$ from $\bA = \left\{1,\ldots,N\right\}$.
	If $\lambda^a=1$ for all $a \in \bA$, then $\det (A_N) = 1 - \sum_{k=2}^N \frac{k-1}{(N-1)^k} \binom{N}{k} = 0$, so the matrix is not invertible. 
	The determinant is strictly decreasing in each $\wt \lambda^a$ ($a \in \bA$) and therefore also in $\lambda^a$. 
	Hence, if $\lambda^a \in [0,1]$ for all $a \in \bA$ and the product $\prod_{a \in \bA} \lambda^a < 1$, then at least one factor must be strictly smaller than one and the determinant must be strictly positive (i.e. $\det(A_N) >0$). The invertibility of $A_N$ follows. This guarantees that one can solve system \eqref{eq:ApiJlinearsystem} (or, equivalently, \eqref{eq:ApiJlinearsystemMatrixForm}) for each $i \in \{1,2\}$ to obtain $(\pi^{*,\cdot,i})$. The integrability of $\pi^*$ follows from fact that each component $\pi^{*,a}$ is a linear combination of the integrable $J^a$'s. 
	Finally, the Nash-optimality of $\pi^*$ was argued in the identification of the core Equation \eqref{eq:ApiJlinearsystem}.
\end{proof}
We can now comment on Assumption \ref{assump:ConcernRates}. If $\lambda^b = 0$ for all $b \in \bA \setminus \left\{a\right\}$, then $A_N$ is invertible independent of $\lambda^a$, i.e. in particular for $\lambda^a=1$. This shows that the condition that $\lambda^a \in [0,1)$ for all $a \in \bA$ is not necessary, but merely a sufficient condition. Finally, if we were to allow for $\lambda^a > 1$, then $\prod_a \lambda^a <1$ is not sufficient for invertibility of $A_N$, e.g. in the case $N=3$ take $\lambda^a=\lambda^b = 2$ and $\lambda^c=0$.

From now on we assume the agents' optimization problems have a solution so that it makes sense to discuss the notion of equilibrium market price of risk (EMPR).
\begin{remark}
Notice that at this point, as $\theta$ is given exogenously, we do not yet have a system of coupled BSDEs. For each $a \in \bA$, we obtain the value process $(\wt{Y}^a,\wt{Z}^a)$ as the solution to a BSDE where the terminal condition $-H^a$ and driver $\wt{G}^a$, which does not depend on the strategies and only takes $\wt{Z}^a$ as argument.	
These processes are then used to solve for the Nash equilibrium of strategies $\pi^*$, and so $(\wt{Y}^a,\wt{Z}^a)_{a \in \bA}$ is the value process of the Nash equilibrium.
This feature (no coupling of the BSDEs when solving for the optimal values), as well as the fact that solving for the optimizers $\pi^*$ of the Nash equilibrium reduces to solving a linear system (for which existence and uniqueness of the solution is equivalent to the invertibility of a matrix) is a consequence of the structure of the problem, i.e. the form of the concerns over the relative performance. In particular, it does not depend on the specific form of the individual risk measures $\rho^a$ (or equivalently, the drivers $g^a$). Finding the EMPR, later on in Section \ref{sec4-repagent}, leads to multi-dimensional quadratic BSDEs.
\end{remark}

\subsection{An example: the entropic risk measure case}  
\label{subsection-unconstrained.case-entropic.case}

We now illustrate the methodology and result of Theorem \ref{theorem--single_agent_optimality} for a particular risk measure, and prepare the ground for the model we study in Sections \ref{section-EntropicCase} and \ref{section-NumSection}.
We give a sequence of examples, in increasing order of complexity, that show how the structure of the optimal strategies is changing as features are added. As in the above, the examples do not yet take into account the market clearing condition but rather assume that a market price of risk $\theta=(\theta^S,\theta^R) \in \cH_{BMO}$ is given. Nonetheless they give a flavor for the next section where the equilibrium market price of risk is derived.

Each agent $a\in \bA$ is assessing her risk using the entropic risk measure $\rho^a_0$ for which the driver $g^a: \bR^2 \to \bR $ if given by
	\begin{align*}
	g^a(z):= \frac1{2\gamma_a}{|z|^2},
	\quad\text{where }\quad
	\gamma_a > 0
	\quad \text{is agent $a$'s \emph{risk tolerance},}
	\end{align*}
and ${1}/{\gamma_a}$ is agent $a$'s \emph{risk aversion}. 
This choice of $g^a$ relates to exponential utilities, and we have (see e.g. \cite{Carmona2009}, \cite{FoellmerKnispel2011}, \cite{HuImkellerMuller2005} or \cite{Rouge2000}) 
	\begin{align*}
		\rho^a_0(\xi) 
		= Y_0^a = \gamma_a\ln \bE[ e^{ - \xi/\gamma_a }] 
		= \gamma_a \ln\big(- U_{\gamma_a}(\xi) \big) 
		\qquad\text{with} \qquad U_{\gamma_a}(\xi) 
		= \bE[ - e^{ - \xi/\gamma_a }] ,
	\end{align*}
so that, equivalently, the agents are maximizing their expected (exponential) utility.


In what follows, the optimal strategies were computed using the techniques described so far and hence we omit the calculations. They boil down to finding the map $\cZ^a$ arising from \eqref{eq:caligraphicZ}, then injecting it in \eqref{equation-general.structure.of.the.map.Pi} and \eqref{equation---minimized.driver.wtGa.for.general.ga} to obtain $\wt{G}^a$. 
We denote throughout by $\pi^{*,a}$ (for $a\in\bA$) the optimal Nash equilibrium strategy.

\subsubsection{The reference case of a single agent} 

For comparison, we first give the optimal strategy for a single agent who could trade liquidly in the stock of price $S$ and the derivative of price $B$ with (arbitrary and exogenously-given) market price of risk $\theta=(\theta^S,\theta^R)$. She aims at minimizing her risk, with terminal endowment and trading gains $\xi^a = H^a + V^a_T(\pi^a)$. Here, other agents do not play a role. Since $g^a_{z^i}(z) = z^i/{\gamma_a}$, it is easily found that $\cZ^a(t,-\theta_t) = (- \gamma_a \theta^S_t, - \gamma_a \theta^R_t) = - \gamma_a \theta_t$. 
Injecting this in \eqref{equation---minimized.driver.wtGa.for.general.ga} yields the minimized driver $\wt{G}^a$, 
	\begin{align*}
		\wt{G}^a(t,z^a) 
		= -\frac{\gamma_a}{2} \abs{\theta_t}^2 - \scalar{z^a}{\theta_t} , \quad t\in [0,T].
	\end{align*}
The minimized risk is then given by $Y^a_0 = \wt{Y}^a_0$ where $(\wt{Y}^a,\wt{Z}^a)$ is the solution to the BSDE with terminal condition $-H^a$ and driver $\wt{G}^a$, while the optimal strategy is then given by 
	\begin{align*}
		&
		\pi^{*,a,1} 
		=  
		\frac{  \wt{Z}^{a,1} + \gamma_a \theta^S }{\sigma^S S} - \frac{ \wt{Z}^{a,2} + \gamma_a \theta^R} {\kappa^{R}} \, \frac{\kappa^S}{\sigma^S S} 
		\qquad \text{and} \qquad 
		\pi^{*,a,2} 
		=
		\frac{ \wt{Z}^{a,2} + \gamma_a \theta^R} {\kappa^{R}} .
	\end{align*}
This result is expected and in line with canonical mathematical finance results. The particular structure of the optimal strategy follows from the fact that the second asset is correlated to the first when $\kappa^S\neq 0$, and the inversion of the volatility matrix for the 2-dimensional price $(S,B)$,
	\begin{align*}
		\matrix{\sigma^S S & 0 \\ \kappa^S  & \kappa^R} . 
	\end{align*}
The market faced by $a$ is complete, the driver for the minimized residual risk $\wt{Y}^a$ is affine and we have the explicit solution 
\begin{align*}
\wt{Y}^a_0 
= 
\bE^\theta \left[ -H^a - \frac{\gamma_a}{2} \int_0^T \abs{\theta_u}^2 \udu \right] 
= 
-
\bE\left[ 
\cE^{-\theta}_T \cdot
 \ \Big(H^a + \frac{\gamma_a}{2} \int_0^T \abs{\theta_u}^2 \udu \Big) \right].
\end{align*}
The minimized risk measure is affine with respect to $H^a$ : the trend ($\theta \neq 0$) in the prices leads to a constant risk reduction and the completeness of the market leads to an affine dependence on $H^a$.

	\begin{remark}
		Note that $\wt{G}^a(t,0) = -\frac{\gamma_a}{2} \abs{\theta_t}^2$. 
		Therefore, since $\theta \in \cH_{BMO}$, the assumption we made in Theorems \ref{theorem--single_agent_optimality} and \ref{theorem:invertibilityAN}
		that $\abs{\wt{G}^a(t,0)}^{1/2} \in \cH_{BMO}$ is satisfied, 
		ensuring the well-posedness of the minimized-risk BSDEs.
	\end{remark}

\subsubsection{The reference case of a single agent that cannot trade in the derivative} 
\label{subsubsection-example-single.agent.no.derivative}

It is also instructive, and will be useful later on, to look at the case where this single agent cannot trade in the derivative, and hence faces an incomplete market. 
We first enforce $\pi^{a,2}=0$ on \eqref{eq-version2-Gab},	then we optimize over $\pi^{a,1}$ (see Remark \ref{remark---ParticularCaseTheo3.3}). 
The minimized driver following the calculations is 
	\begin{align*}
		\wt{G}^a(t,z) = -\frac{\gamma_a}{2} (\theta^S_t)^2 - z^1 \, \theta^S_t + \frac{1}{2\gamma_a}(z^2)^2, \quad t\in [0,T].
	\end{align*}
Notice that $\wt{G}^a$ is affine in the variable $z^1$ but retains the quadratic term in $z^2$.
The minimized risk is then given by $Y^a_0 = \wt{Y}^a_0$ where $(\wt{Y}^a,\wt{Z}^a)$ is the solution to the BSDE with terminal condition $-H^a$ and the above driver $\wt{G}^a$, while the optimal strategy is
	\begin{align*}
		\pi^{*,a,1} 
	 		= \frac{  \wt{Z}^{a,1} + \gamma_a \theta^S }{\sigma^S S}  
		\quad \text{and}\quad 
		\pi^{*,a,2} = 0 .
	\end{align*}

\subsubsection{The case of multiple agents without relative performance concerns}
\label{subsubsection:zeroconcern}

We return to the full set of agents $\bA$ and take $\lambda^a=0$ for all $a\in \bA$; this is the setting covered in \cite{HorstPirvuDosReis2010}. We find the minimized risk-driver for agent $a$ to be 
	\begin{align*}
		\wt{G}^a(t,z^a) 
		=
		-\frac{\gamma_a}{2} \abs{\theta_t}^2 - \scalar{z^a}{\theta_t}, \quad t\in [0,T].
	\end{align*}
The minimized risk is given by $Y^a_0 = \wt{Y}^a_0$, where $(\wt{Y}^a,\wt{Z}^a)$ solves the BSDE with terminal condition $-H^a$ and minimized driver $\wt{G}^a$, while the optimal strategies are given by 
	\begin{align}		
	\label{eq-optimPi-HPdR2010}
		\pi^{\lambda=0,a,1}  
			:= 
			\frac{  \wt{Z}^{a,1} + \gamma_a \theta^S }{\sigma^S S} 
			- \frac{ \wt{Z}^{a,2} + \gamma_a \theta^R }{\kappa^R} 
			            \, \frac{\kappa^S}{\sigma^S S} 
		\quad \text{and}\quad 
		\pi^{\lambda=0,a,2} 
			:= 
			\frac{ \wt{Z}^{a,2} + \gamma_a \theta^R  } {\kappa^{R}}. 
	\end{align}
Observe that in this case the strategy $\pi^{\lambda=0,a}$ followed by $a$ does not depend \emph{directly} on the strategies of the other agents; its structure is the same as for the single agent case. However, when the price dynamics of the derivative is not fixed but emerges from the equilibrium, later on, the other agents' strategies will appear \emph{indirectly} via $\theta^R$ and $\kappa$.

\subsubsection{The case of multiple agents without relative performance concerns in zero net supply}
\label{subsubsec-prescience-on-shortcut}

If one would want to take into account the endogenous trading of the derivative in the particular situation of pure risk trading, where one takes $n=0$ in \eqref{nnetsupplycondition}, then the market price of external risk $\theta^R$ must be endogenously computed instead of being fixed arbitrarily as we have done so far.  

It is not difficult to see, summing the last equation in \eqref{eq-optimPi-HPdR2010} over $a\in\bA$ and imposing the zero-net supply condition, $\sum_a \pi^{\lambda=0,a,2} = 0$, that this requires that $\theta^R = - {\sum_a \wt{Z}^{a,2}} / \sum_a \gamma_a$. However the $\wt{Z}^{\cdot,2}$s are themselves found by solving a system of $N$ BSDEs which involve $\theta^R$. Replacing $\theta^R$ by the expression above in the said system of equations leads to a fully coupled system of quadratic BSDEs that is hard to solve in general. We solve this problem with an alternative tool in Section \ref{sec4-repagent}.

\subsubsection{The general case: multiple agents with performance concerns} 
\label{example-general-lambdas}

In the general case (not assuming $\lambda^a=0$ for all $a$), we obtain the minimal driver as being still 
	\begin{align}		
	\label{equation-minimized.risk.driver.for.entropic.agents-general.case}
		\wt{G}^a(t,z^a) = -\frac{\gamma_a}{2} \abs{\theta_t}^2 - \scalar{z^a}{\theta_t}, \quad t\in [0,T].
	\end{align}
The minimized risk is then given by $Y^a_0 = \wt{Y}^a_0$ where $(\wt{Y}^a,\wt{Z}^a)$ is the solution to the BSDE with terminal condition $-H^a$ and driver $\wt{G}^a$, while the optimal strategies $\pi^* = (\pi^{*,a})_{a\in\bA}$ are given by
	\begin{align}		
		\label{eq-optimPi-version2-pia1}
		\pi^{*,a,1} - \wt{\lambda}^a \sum_{b \in \bA \setminus \{a\}} \pi^{*,b,1}
				&= \frac{\wt{Z}^{a,1} + \gamma_a \theta^S}{\sigma^S S} - \frac{\wt{Z}^{a,2} + \gamma_a \theta^R}{\kappa^R} \, \frac{\kappa^S}{\sigma^S S},
		\\
		\label{eq-optimPi-version2-pia2}
		\pi^{*,a,2} - \wt{\lambda}^a \sum_{b \in \bA \setminus \{a\}} \pi^{*,b,2}
				&= \frac{\wt{Z}^{a,2} + \gamma_a \theta^R}{\kappa^R}.
	\end{align}
The general invertibility of the systems \eqref{eq-optimPi-version2-pia1} and \eqref{eq-optimPi-version2-pia2} given $\theta$ is guaranteed by Proposition \ref{theorem:invertibilityAN}. 

\subsubsection{The general case: multiple agents with performance concerns in zero net supply}
\label{example-general-lambdas-with-zero-supply}

If one imposes \eqref{nnetsupplycondition} with $n=0$, implying that $\sum_{b \in \bA \setminus \{a\}} \pi^{*,b,2} = -\pi^{*,a,2}$, then the linear system \eqref{eq-optimPi-version2-pia2} for the investment in the derivative simplifies greatly and its solution is explicitly given by 
\begin{align}		
\label{eq-optimPi-version2-pia2-if.net.supply.condition.holds}
\pi^{*,a,2} = \frac{1}{1+\wt{\lambda}^a} \ \frac{\wt Z^{a,2} + \gamma_a \theta^R}{\kappa^R}
\qquad \text{for all $a \in \bA$}.
\end{align}
Notice how the structure of the optimal investment strategy for the derivative in \eqref{eq-optimPi-version2-pia2-if.net.supply.condition.holds} is that of \eqref{eq-optimPi-HPdR2010}, scaled down by the factor $({1+\wt{\lambda}^a})^{-1}$. In Section \ref{section-NumSection} we study a model with two agents and computations will be done explicitly for the investment in the stock (i.e. the inversion of the system \eqref{eq-optimPi-version2-pia1}).

\subsection{Reduction to zero net supply}

We now give an auxiliary result allowing to simplify the condition \eqref{nnetsupplycondition}. We show how the initial holdings $\pi^{a,2}_{0^-} = \pi^{a,2}_0\neq 0$ before/at the beginning of the game can be reduced to the case where $\pi^{a,2}_{0^-} = \pi^{a,2}_0= 0$. This allows us to apply \eqref{nnetsupplycondition} with $n=0$, which will prove crucial in later computations.  The reduction to $n=0$ is based on the monotonicity of the risk measures and the following lemma, stated from the point of view of one agent $a \in \bA$.  The result is based on Lemma 3.9 in \cite{HorstPirvuDosReis2010}.  

To avoid a notational overload, we omit explicit dependencies on $\pi^{-a}$ in this subsection.
\begin{lemma} 
	For a given MPR $\theta$ and admissible strategies $\pi^{-a} = (\pi^b)_{b \in \bA \setminus \left\{a\right\}}$,  consider the dynamics of the residual risk BSDE 
	\begin{align}
	\label{BSDE2}
		-\ud\wt{Y}_t^a(\pi^a) 
		&
		= \wt{g}^a\big(t,\pi^a_t,\wt{Z}_t^a(\pi^a)\big) \udt 
		- \langle \wt{Z}_t^a(\pi^a) , \ud W_t \rangle 
	\end{align}
associated with the preferences of agent $a$ using an admissible strategy $\pi^a$. Assume further that \eqref{BSDE2} has a unique solution for any given $\cF_T$-measurable bounded terminal condition $\wt{Y}_T$. 
Let $\nu \in \bR$. Then,
\begin{itemize}
	\item if ${\pi}^{a} := ({\pi}^{a,1},{\pi}^{a,2})$ minimizes the solution $\wt{Y}_0({\pi}^a)$ to \eqref{BSDE2} for a terminal condition $-H^a$,\\ 
		then $\check{\pi}^a := ({\pi}^{a,1},{\pi}^{a,2}-\nu)$ is optimal for the terminal condition $-(H^a+\nu H^D)$;
	\item if ${\pi}^{a} := ({\pi}^{a,1},{\pi}^{a,2})$ minimizes the solution $\wt{Y}_0({\pi}^a)$ for a terminal condition $-(H^a+\nu H^D)$,
	\\  
		then $\wh{\pi}^a := ({\pi}^{a,1},{\pi}^{a,2}+\nu)$ is optimal for the terminal condition $-H^a$.
\end{itemize}
\end{lemma}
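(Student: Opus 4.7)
The plan is to exhibit an explicit bijection between admissible strategies for the two problems and show that, under this bijection, the two objective functionals differ by a strategy-independent constant, so that minimizers automatically correspond. The natural candidate is the shift $\Phi_\nu : (\pi^{a,1},\pi^{a,2}) \mapsto (\pi^{a,1},\pi^{a,2}-\nu)$, whose inverse is $\Phi_{-\nu}$. First, I would verify that $\Phi_\nu$ maps $\cA^\theta(\pi^{-a})$ onto itself: integrability against the prices is preserved because adding the constant position $-\nu$ in the derivative (whose price $B^\theta$ lies in $\cS^\infty$ with $\cHBMO(\bP^\theta)$ integrand $\kappa^\theta$) does not spoil square-integrability of the gains process, and the well-posedness of \eqref{BSDE2} assumed in the lemma covers bounded terminal conditions of both forms. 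So it suffices to compare $\wt{Y}_0^a(\pi^a;-H^a)$ with $\wt{Y}_0^a(\Phi_\nu(\pi^a);-(H^a+\nu H^D))$.

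The key computation is the effect of this shift on the trading gains at maturity. From the decomposition \eqref{Bond-Price-MRT-decomposition}, $B^\theta_T = H^D$ and $B^\theta_0 = \bE^\theta[H^D]$, so
\[
V_T(\check{\pi}^a) \;=\; V_T(\pi^a) - \nu\bigl(B^\theta_T - B^\theta_0\bigr) \;=\; V_T(\pi^a) - \nu H^D + \nu\,\bE^\theta[H^D].
\]
Substituting into the perceived wealth for Problem B (endowment $H^a + \nu H^D$), the extra $+\nu H^D$ in the endowment exactly cancels the $-\nu H^D$ contribution that comes from underwriting $\nu$ units of the derivative statically, leaving
\[
\xi^{a}_{B}(\check{\pi}^a) \;=\; H^a + V_T(\pi^a) - \wt{\lambda}^a V_T(\bar{\pi}^{-a}) + \nu\,\bE^\theta[H^D] \;=\; \xi^{a}_{A}(\pi^a) + c,
\]
where $c := \nu\,\bE^\theta[H^D]$ is deterministic and independent of the strategy.

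To transfer this to $\wt{Y}_0^a$, I would work at the level of the original BSDE \eqref{BSDE_1}: its driver $g^a(t,Z^a_t)$ depends only on $z$, not on $y$, so adding the constant $-c$ to the terminal condition $-\xi^a$ shifts the entire $Y^a$-process by $-c$ and leaves $Z^a$ unchanged. Since $\wt{Y}_0^a = Y_0^a$ (as $V_0 = 0$), this gives
\[
\wt{Y}_0^{a}\bigl(\check{\pi}^a;\, -(H^a+\nu H^D)\bigr) \;=\; \wt{Y}_0^{a}\bigl(\pi^a;\, -H^a\bigr) \;-\; c.
\]
Because $c$ is independent of $\pi^a$, the bijection $\Phi_\nu$ carries the minimizer of Problem A to a minimizer of Problem B, which is exactly the first bullet. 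The second bullet follows by symmetry: apply the first bullet with $\nu$ replaced by $-\nu$ and the two terminal conditions swapped, equivalently invoke $\Phi_\nu^{-1} = \Phi_{-\nu}$. The argument has no real hard step; it rests on the translation invariance of $g^a$-risk measures (encoded by the $y$-independence of $g^a$) together with the perfect replicability of a static position in $H^D$, which is what ensures that the wealth shift is the deterministic constant $\nu\,\bE^\theta[H^D]$ rather than something strategy-dependent.
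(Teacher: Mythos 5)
Your proof is correct, but it takes a genuinely different route from the paper's. The paper stays entirely within the residual-risk formulation \eqref{BSDE2}: it subtracts $\nu$ times the BSDE \eqref{Bond-Price-MRT-decomposition} for the derivative price from the residual-risk BSDE and then verifies by a direct manipulation of the driver that $\wt{g}^a\big(\cdot,\check\pi^a+(0,\nu),Z+\nu\kappa^\theta\big)+\nu\lev\kappa^\theta,\theta\rev=\wt{g}^a(\cdot,\check\pi^a,Z)$, which yields the process-level identity $Y_t(\check\pi^a)=\wt{Y}^a_t(\pi^a)-\nu B^\theta_t$ for all $t$; optimality then transfers because $\nu B^\theta_0$ is strategy-independent. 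You instead undo the change of variables \eqref{eq:BSDEchangeVArs} and work with the original BSDE \eqref{BSDE_1}, whose driver $g^a(t,z)$ does not involve the strategy at all, so that the entire comparison reduces to the terminal condition: the static replication identity $V_T(\check\pi^a)=V_T(\pi^a)-\nu\big(H^D-\bE^\theta[H^D]\big)$ shows the two terminal wealths differ by the deterministic constant $\nu\,\bE^\theta[H^D]=\nu B^\theta_0$, and translation invariance (the $y$-independence of $g^a$) finishes the argument. Your version is shorter, avoids the driver computation, and makes the economic content (a static derivative position is a cash-and-carry) transparent; the paper's version has the advantage of producing the identity $Y_t(\check\pi^a)=\wt{Y}^a_t(\pi^a)-\nu B^\theta_t$ at every time $t$ and of remaining in the $\wt{g}^a$-formulation used throughout the rest of the paper. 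One point worth making explicit in your write-up is that passing between \eqref{BSDE2} and \eqref{BSDE_1} uses the bijection \eqref{eq:BSDEchangeVArs}, so the uniqueness hypothesis of the lemma (stated for \eqref{BSDE2}) is what licenses the ``shift the terminal condition by a constant, shift $Y$ by the same constant'' step for \eqref{BSDE_1}; you gesture at this but it deserves a sentence.
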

\begin{proof}
We prove only the first assertion, as the second is equivalent.
Let $t\in[0,T]$. Assume that ${\pi}^{*,a}\in\cA^\theta(\pi^{-a})$ is optimal for \eqref{BSDE2} with $\wt Y^a_T:=-H^a$, i.e. for any $\pi^a\in\cA^\theta$ one has $\wt Y^{a}_0({\pi}^{*,a})\leq \wt Y^a_0(\pi^a)$. Define further, for any $\pi^a\in\cA^\theta$, the strategies 
\begin{align*}
\check{\pi}^a :
               =\pi^{a}-(0,\nu)
							 =(\pi^{a,1}, \pi^{a,2}-\nu)
\qquad\text{and}\qquad
\check{\pi}^{*,a}:=\pi^{*,a}-(0,\nu).
\end{align*}
To show that $Y^{a}_0(\check{\pi}^{*,a})\leq Y^a_0(\check\pi^a)$ for any $\check \pi^a$ where $Y^{a}$ solves \eqref{BSDE2} with $Y^{a}_T=-(H^a+\nu H^D)$ we first show an identity result between the BSDEs with different terminal conditions. The second step is the optimality.

\emph{Step 1:} We show that the process $\left(Y(\check{\pi}^a),Z(\check{\pi}^a)\right) := (\tY^{a}(\pi^a) - \nu B^{\theta}, \tZ^a(\pi^a) - \nu \kappa^{\theta})$ solves BSDE
\begin{align} 
\label{Claim0sum}
	Y_t(\check{\pi}^a) &= -(H^a+ \nu H^D) + \int_t^T \wt{g}^a\left(s,\check\pi^{a}_s, Z_s(\check{\pi}^a)\right) \uds - 
\int_t^T \langle Z_s(\check{\pi}^a),\udws\rangle .
\end{align}
To this end, we reformulate \eqref{Bond-Price-MRT-decomposition} as a BSDE:
\begin{align} 
\label{BSDEforDerivativeHD}
	B_t^{\theta}
	 = H^D  
	- \int_t^T \langle  \kappa^\theta_s, \theta_s \rangle \uds
	- \int_t^T \langle \kappa_s^\theta, \ud W_s\rangle .
\end{align}
The difference between \eqref{BSDE2} and $\nu$  times \eqref{BSDEforDerivativeHD} yields 
\begin{align*}
	\tY_t(\pi^a)-\nu B_t^{\theta} 
	&
	=- (H^a+\nu H^D) 
	 + \int_t^T \left[\wt{g}^a(s,\pi^{a}_s,\tZ^a_s(\pi^a))
	              +\nu\langle  \kappa^\theta_s, \theta_s \rangle\right] \uds 
	- \int_t^T \langle \tZ^a_s(\pi^a) - \nu \kappa^{\theta}_s, \udws \rangle 
	\\
	\Leftrightarrow\quad	Y_t(\check\pi^a)
	&
	=- (H^a+\nu H^D) 
	 	- \int_t^T \langle Z_s(\check \pi^{a}), \udws\rangle 
		\\
		&
		\qquad 
		+ \int_t^T \left[
		   \wt{g}^a\left(s,\check \pi^{a}_s+(0,\nu)
		                  , Z_s(\check \pi^{a})+\nu\kappa^\theta_s
							\right)
	              +\nu\langle \kappa^\theta_s, \theta_s \rangle\right] \uds. 
\end{align*}
In view of \eqref{eq-version2-Gab}, we can manipulate the terms inside driver $\wt{g}^a$ above and obtain 
\begin{align*}
	  	&{} 
			\wt{g}^a\left(\cdot,\check \pi^{a}+(0,\nu)
		                  , Z^a(\check \pi^a)+\nu\kappa^\theta\right)
	              +\nu\lev  \kappa^\theta, \theta \rev
	  	\\
		&
		\qquad 
		 = g^a \Big(\cdot,(Z^a(\check \pi^a) +\nu \kappa^\theta) 
								   - \check\pi^{a,1} {\sigma}						
								   -(\check\pi^{a,2} +\nu) {\kappa^\theta}
									 +
									\wt \lambda^a \big( \bar{\pi}^{-a,1}\sigma
									               + \bar{\pi}^{-a,2} \kappa^\theta\big)
									\Big)
		  \\
			& \qquad \qquad \qquad 						
				-\check\pi^{a,1} \lev \sigma ,\theta\rev 
				-(\check\pi^{a,2}+\nu) \lev \kappa^{\theta}, \theta \rev 
				+ \wt \lambda^a 
			         \lev\bar{\pi}^{-a,1}\sigma + \bar{\pi}^{-a,2} \kappa^\theta , \theta\rev
				+\nu\lev  \kappa^\theta, \theta \rev
			\\
			&
			\qquad
			= \wt{g}^a\big(\cdot,\check \pi^{a}
		                  , Z^a(\check \pi^a)\big).
\end{align*}
Given the assumed uniqueness of BSDE \eqref{BSDE2} the assertion follows.

\emph{Step 2:} Given that $\left(Y(\check{\pi}^a),Z(\check{\pi}^a)\right)$ solve \eqref{Claim0sum} and that $\pi^{*,a}$ is the minimizing strategy for $\pi^a\mapsto \tY^{a}(\pi^a)$, then manipulating $Y(\check{\pi}^a)= \tY^{a}(\pi^a) - \nu B^{\theta}$, we have
\begin{align*}
Y_0(\check{\pi}^a)
& = 
  \tY^{a}_0(\pi^a) - \nu B^{\theta}_0
  \geq 
	\tY^{a}_0(\pi^{*,a}) - \nu B^{\theta}_0 
  = 
	Y_0(\check\pi^{*,a}),
\end{align*}
and hence $\check\pi^{*,a}:=\pi^{*,a}-(0,\nu)$ is optimal for BSDE \eqref{BSDE2} with terminal condition $-(H^a+\nu H^D)$.
\end{proof}
This lemma intuitively states that an agent $a$, owning at time $t=0$ a portion $\nu^a=\pi^{a,2}_{0^-}=\pi^{a,2}_{0}$ of units of $H^D$, can be regarded as being in fact endowed with $\check{H}^a = H^a + \nu^a H^D$. One then looks only at the relative portfolio $\check{\pi}^{a,2}=\pi^{a,2}-\nu^a$, which counts the derivatives bought and sold only from time $t=0$ onwards: the optimization problem is equivalent. The argument can be extended to all other agents.
We note that this reduction is only possible because we do not consider trading constraints in this work, so that the strategies $\pi^{a,2}$ and $\check{\pi}^{a,2}$ are equally admissible. 

For the rest of this work we assume that each agent receives at $t=T$ a portion\footnote{Many possibilities for this reduction to zero net supply exist, including endowing one agent with the total amount $n$ of derivatives $H^D$ or endowing each agent with their initial portions of the derivative $\nu^a$. We make the judicious choice of $n/N$ for simplicity.} $n/N$ of the derivative $H^D$. By doing so, the market clearing condition in Definition \ref{def-equilibrium} transforms into 
\begin{align*}
\sum_{a\in\bA} \pi^{a,2}_t=0 \quad \bP \otimes Leb-a.e.,
\end{align*}
and we refer to it as the \emph{zero net supply condition}. 

For clarity, we recall that agent $a\in \bA$ now assesses her risk by solving the dynamics provided by BSDE \eqref{BSDE_1}
with terminal condition 
\begin{align} 
\label{BSDE_1_terminal_0netsupply}
		Y^{a}_{T} 
		&= -\Big(H^{a} + \frac n{N} H^D
		                +V^{a,\theta}_T(\pi^a)  - \wt{\lambda}^a
				\sum_{b \in \bA \setminus\{a\}} V_T^{b,\theta}(\pi^b)\Big)
\end{align} 
(instead of that in \eqref{BSDE_1}). Moreover, by applying the change of variables \eqref{eq:BSDEchangeVArs} to BSDE \eqref{BSDE_1} with terminal condition \eqref{BSDE_1_terminal_0netsupply}, we reach
\begin{align}
\label{eq:SingleAgentBSDE} 
-\ud\wt Y_t^a &= \wt g^a(t,\pi^a_t,\pi^{-a}_t,\wt Z_t^a)\udt -\langle \wt Z_t^a, \ud W_t\rangle , 
\qquad 
\wt Y_T^a := - \left(H^a + \frac n {N} H^D\right),
\end{align}
with $\wt g^a$ given by \eqref{eq-version2-Gab} (and $(\wt Y^a,\wt Z^a)$ relates to $(Y^a,Z^a)$ via the change of variables \eqref{eq:BSDEchangeVArs}).

It is straightforward to recompile the results of Section \ref{subsection-unconstrained.case-entropic.case} under the \emph{zero net supply} condition. It entails no changes in the strategies or drivers, only the terminal condition of the involved BSDEs need to be updated from $-H^a$ to $-(H^a+\frac nN H^D)$ as in \eqref{eq:SingleAgentBSDE}.

%
%
%
%
%
%
\section{The equilibrium market price of external risk} 
\label{sec4-repagent}

In the previous section we saw how to compute the Nash equilibrium for a given market price of risk $\theta = (\theta^S,\theta^R)$, without the global constraint on trading (market clearing condition). In this section we solve the equilibrium problem, as posed by Definition \ref{def-equilibrium}, by finding the Equilibrium Market Price of external Risk (EMPeR) $\theta^R$. 

The literature contains many results on equilibria in complete markets that link competitive equilibria to an optimization problem for a representative agent, and this is the approach we use here. The preferences of the representative agent are usually given by a weighted average of the individual agents' preferences with the weights depending on the competitive equilibrium to be supported by the representative agent, see \cite{Negishi1960}. This dependence results in complex fixed point problems which renders the analysis and computation of equilibria quite cumbersome. The many results on risk sharing under translation invariant preferences, in particular \cites{BarrieuElKaroui2005,JouiniSchachermayerTouzi2006,FilipovicKupper2008}, suggest that when the preferences are translation invariant, then all the weights are equal. This was an effective strategy in \cite{HorstPirvuDosReis2010} and it would be so here if, for all $a\in\bA$, $\lambda^a=0$, or $\lambda^a=\lambda\in[0,1)$. 

In a market without performance concerns, \cites{HorstPirvuDosReis2010,BarrieuElKaroui2009} show that the infimal convolution of risk measures gives rise to a suitable risk measure for the representative agent which, for $g$-conditional risk measures, corresponds to infimal convolution of the drivers. Due to the performance concerns, we use a weighted-dilated infimal convolution and, in Theorem \ref{theorem:EMPR_aggregate_2_2} we show that indeed minimizing the risk of our representative agent is equivalent to finding a competitive equilibrium in our market.

\subsection{The representative agent}

\subsubsection*{Aggregation of risks and aggregation of drivers} 

Inspired by the above mentioned results and having in mind \cite{Rueschendorf2013} (see Remark \ref{justify_weighted_convolution} below) we deal with the added inter-dependency arising from the fixed-supply condition and the additional unknown $\theta^R$ (see examples in \ref{subsubsec-prescience-on-shortcut} and \ref{example-general-lambdas-with-zero-supply}) by defining a new risk measure $\rho^w_0$. For a set of positive weights $w=(w^a)_{a\in\bA}$ satisfying $\sum_{a\in\bA} w^a = 1$, we define 
\begin{align}
\label{eq:defi-of-inf-conv-rhow}
\rho^w_0(X) 
=
\inf\left\{ \sum_{a\in\bA} w^a \rho^a_0\big(X^a\big)
 \;\bigg|\; 
(X^a) \in (L^\infty)^{N} : \sum_{a\in\bA} w^a X^a = X \right\} 
\quad\text{for any } X \in L^\infty.
\end{align}
In the case of risk measures induced by BSDEs, \cite{BarrieuElKaroui2005} shows that the measure defined by inf-convolution of risk measures $(\rho^a_0)_{a\in\bA}$ is again induced by a BSDE, whose driver is simply the inf-convolution of the BSDE drivers $g^a$ for the risk measures $(\rho^a_0)_{a\in\bA}$. 
For the set of weights $w=(w^a)_{a\in\bA}$, we define the driver $g^w$ as the \emph{weighted-dilated} inf-convolution of the drivers $g^a$ for $(t,z) \in [0,T] \times \bR^2$,
\begin{align}
\label{eq:weightedinfconv}
	g^w(t,z) = \square_w \Big( (g^{a})_{a\in\bA} \Big)(t,z) = \inf \left\{\sum_{a \in \bA} w^a g^a(t,z^a) \; \bigg| \; (z^a) \in (\bR^2)^{N} \text{ s.t. } \sum_{a \in \bA} w^a z^a = z\right\} ,
\end{align}
where the notation $\square \big( (g^{a})_{a\in\bA} \big)$ is that of the standard inf-convolution. 
	\begin{lemma}[Properties of $g^w$] 
	\label{lemma---properties.of.the.infconvol.driver.and.optimizer}
		The map $g^w:[0,T]\times \bR^2\to \bR$ defined by \eqref{eq:weightedinfconv} is a deterministic continuous function, strictly convex and  continuously differentiable. 
		Moreover, there exists a unique solution of $\nabla_z g^w(t,\cZ) =-\vartheta$ in $\cZ$. 
		
		For a $z^\bA=(z^a)$ such that $\sum_a w^a z^a = z$, one has $g^w(t,z) = \sum_{a} w^a g^a(t,z^a)$ if and only if there exists $\vartheta \in \bR^2$ such that, for all $a \in \bA$,
		$\nabla_z g^a(t,z^a) = -\vartheta$. In that case, one has $\nabla_z g^w(t,z) = -\vartheta$.
	\end{lemma}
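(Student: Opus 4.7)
The plan is to establish the claims by convex analysis, using Fenchel duality for the smoothness and strict convexity of $g^w$, and Lagrangian/envelope-theorem duality for the final characterization.

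First I would rewrite the weighted-dilated inf-convolution as a classical one via the substitution $\zeta^a = w^a z^a$ in \eqref{eq:weightedinfconv}. Setting $\tilde g^a(t,\zeta) := w^a g^a(t,\zeta/w^a)$, this yields $g^w(t,\cdot) = \tilde g^1(t,\cdot) \,\square\, \cdots \,\square\, \tilde g^N(t,\cdot)$, the ordinary inf-convolution, and each $\tilde g^a(t,\cdot)$ inherits from $g^a(t,\cdot)$ continuity, $C^1$ regularity, strict convexity, and attainment of its minimum. Passing to Fenchel conjugates gives
\begin{equation*}
(g^w(t,\cdot))^*(q) \;=\; \sum_{a \in \bA} (\tilde g^a(t,\cdot))^*(q) \;=\; \sum_{a \in \bA} w^a (g^a(t,\cdot))^*(q),
\end{equation*}
and the Legendre-Fenchel exchange between smoothness and strict convexity shows that each $(g^a(t,\cdot))^*$ is strictly convex and $C^1$ on the interior of its effective domain. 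Since $w^a>0$, $(g^w(t,\cdot))^*$ is then strictly convex and $C^1$, and biconjugating via Fenchel-Moreau returns strict convexity and continuous differentiability of $g^w(t,\cdot)$ on $\bR^2$; joint continuity in $t$ is preserved throughout. Because $\nabla_z g^w(t,\cdot)$ is continuous and strictly monotone, with inverse $\nabla_q (g^w(t,\cdot))^*$, the equation $\nabla_z g^w(t,\cZ) = -\vartheta$ has the unique solution $\cZ = \nabla_q (g^w(t,\cdot))^*(-\vartheta)$.

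For the last claim, fix $z$ and a feasible decomposition $z^\bA = (z^a)$ with $\sum_a w^a z^a = z$, and consider
\begin{equation*}
\inf \Bigl\{\,\textstyle\sum_a w^a g^a(t,\hat z^a)\;:\;\sum_a w^a \hat z^a = z\,\Bigr\}.
\end{equation*}
The objective $F(\hat z^\bA) = \sum_a w^a g^a(t,\hat z^a)$ is strictly convex on $(\bR^2)^N$, so a minimizer on the affine constraint set is unique; its existence follows from the attainment of the dual optimum and the vanishing duality gap obtained above. Introducing the Lagrangian $L(\hat z^\bA, \mu) = \sum_a w^a g^a(t,\hat z^a) - \langle \mu,\, \sum_a w^a \hat z^a - z\rangle$, stationarity reads $w^a \nabla_z g^a(t,\hat z^a) = w^a \mu$, i.e.\ $\nabla_z g^a(t,\hat z^a) = \mu$ for every $a$. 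By convexity of $F$ and affineness of the constraint, this first-order condition is both necessary and sufficient for optimality; thus $z^\bA$ realises the infimum (equivalently, $g^w(t,z) = \sum_a w^a g^a(t,z^a)$) if and only if there is a common $\mu \in \bR^2$ with $\nabla_z g^a(t,z^a) = \mu$ for all $a$. Writing $\mu = -\vartheta$ delivers the claimed equivalence. Finally, applying the envelope theorem to $L$ at the optimum gives $\nabla_z g^w(t,z) = \mu = -\vartheta$.

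I expect the main obstacle to be the mildly delicate control of the effective domain of $(g^w(t,\cdot))^*$, needed to ensure that the Fenchel-Moreau biconjugate really recovers $g^w$ and that $-\vartheta$ lies in the interior where smoothness and strict monotonicity of the conjugate hold. Under Assumption \ref{ass:driver} and the BMO bounds on $\theta$, this technicality should be handled in the spirit of \cite{BarrieuElKaroui2005}, where analogous inf-convolution constructions are carried out in full rigour.
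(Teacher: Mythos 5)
Your proposal is correct and follows essentially the same route as the paper: the substantive step — characterizing minimizers of the constrained problem via the Lagrangian first‑order condition $\nabla_z g^a(t,z^a)=-\vartheta$ common to all $a$, and reading off $\nabla_z g^w(t,z)=-\vartheta$ as the multiplier/envelope identity — is exactly the paper's argument. For the transfer of continuity, strict convexity and differentiability to $g^w$, the paper simply cites \cite{BarrieuElKaroui2005} and related works, and your explicit Fenchel‑conjugation argument (rewriting the weighted‑dilated inf‑convolution as an ordinary one of the dilated drivers and using $(\square_a \tilde g^a)^\ast=\sum_a \tilde g^{a,\ast}$) is precisely the mechanism those references use, so it fills in rather than replaces the paper's reasoning.
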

	\begin{proof}
		The weighted inf-convolution transfers the properties of the $g^a$'s to $g^w$, in particular continuity, strict convexity and differentiability. We do not show these as they follow from a simple adaptation of known arguments, see \cites{BarrieuElKaroui2005,BarrieuElKaroui2009,HorstPirvuDosReis2010}. 

		Since the function being minimized ($z^\bA=(z^a) \mapsto \sum_a w^a g^a(z^a)$) is convex and the function defining the constraint ($z^\bA \mapsto \sum_a w^a z^a$) is also convex, because affine, 
		the minimization defining $g^w$ is equivalent to finding a critical point for the associated Lagrangian, $L(z^\bA,\vartheta) = \sum_a w^a g^a(z^a) + \vartheta (\sum_a w^a z^a - z)$.
		Therefore, for a $z^\bA=(z^a)$ such that $\sum_a w^a z^a = z$, $z^\bA$ is a minimizer 
		if and only if there exists $\vartheta \in \bR^2$ such that, for all $a \in \bA$,
		$\nabla_z g^a(t,z^a) = -\vartheta$.  
		Then, $\nabla_z g^w(t,z) = -\vartheta$ where $\vartheta$ is the Lagrange multiplier associated with $z$.
	\end{proof}

\paragraph*{}
The risk of the random terminal wealth $\xi^w$, measured through $\rho^w_0$, is given by $\rho^w_0(\xi^w) := Y^w_0$ where $(Y^w,Z^w)$ is the solution to the BSDE
	\begin{align}		
	\label{equation-risk.BSDE.for.the.representative.agent}
		- \ud Y^w_t = g^w(t,Z^w_t) \udt - \langle Z^w_t, \udwt\rangle ,
		\qquad\text{with terminal condition}\qquad
		Y^w_T = - \xi^w .
	\end{align}
Since the weights $(w^a)_{a\in\bA}$ are required to satisfy $\sum_a w^a = 1$, the risk measure $\rho_0^w$ associated to the BSDE with the above driver is a monetary risk measure. Translation invariance and monotonicity follow from the fact that the driver $g^w$ is independent of $y$. Convexity follows from the convexity of $g^w$, which in turns follows from that of the $g^a$'s by the envelope theorem.

\begin{remark} 
Notice that \eqref{eq:weightedinfconv} can be rewritten
\begin{equation*}
	g^w(t,z) = \inf \left\{\sum_{a \in \bA} w^a g^a\big(t,\frac{z^a}{w^a}\big) \; \bigg| \; \sum_{a \in \bA} z^a = z\right\} .
\end{equation*}
In this way, $g^w$ is seen as the usual $w$-weighted infimal convolution of the $w^a$-dilated drivers $g^a$, in the terminology from \cite{BarrieuElKaroui2009} (p.137).
For more on dilated risk measures, see Proposition 3.4 in \cite{BarrieuElKaroui2009}.
\end{remark}

\begin{example}[Entropic risk measure]	
For entropic agents, i.e. with drivers $g^a(z^a) = \frac{|z^a|^2}{2 \gamma_a}$, one obtains 
	\begin{align}
	\label{eq:defi-gammaR}
		g^{w}(z) 
		= 
		\frac{\abs{z}^2}{2\gamma_R},
		\quad \text{with} \quad 
		\gamma_R
		:	= 
		\sum_{a\in\bA} w^a \gamma_a .
	\end{align}
\end{example}
\subsubsection*{Trading and the risky position of the representative agent}
Having defined the aggregated risk measure $\rho_0^w$ and the associated driver $g^w$, we now introduce a strategy $\pi^w$ and associated trading gains $V_\cdot(\pi^w) = \int_0^\cdot \pi^{w,1}_t \ud S_t +\int_0^\cdot \pi^{w,2}_t \ud B_t$ for a representative agent whose preferences are described by $g^w$. Direct computations from \eqref{eq:defi-of-inf-conv-rhow} entail that we assign to the representative agent the terminal gains
\begin{align*}
		\xi^w 
		:= 
		\sum_{a\in\bA} w^a \xi^a 
		&= 
		\sum_{a\in\bA} w^a \bigg( 
		 {H}^a + \frac nN H^D + V^a_T - \wt{\lambda}^a \bar V^{-a}_T 
		                   \bigg) 
    \\
		& 
		= 
		\sum_{a\in\bA} w^a ({H}^a + \frac nN H^D) 
		 + \sum_{a\in\bA} w^a\Big( (1+\wt{\lambda}^a) V^a_T - \wt{\lambda}^a \sum_{b\in\bA} V^{b}_T  \Big)
		\\
		&
		=
		\sum_{a\in\bA} w^a ({H}^a + \frac nN H^D) 
		 + \sum_{a\in\bA} V^a_T\Big( (1+\wt{\lambda}^a) - \wt{\lambda}^a \sum_{b\in\bA} w^b \wt \lambda^b  \Big)
		= 
		{H}^w + V_T(\pi^w),
\end{align*}
where $c^a := w^a (1+\wt{\lambda}^a) - \sum_{b\in\bA} w^b\wt{\lambda}^b$, $\pi^w = \sum_{a\in\bA} c^a \pi^a$ is the representative agent's portfolio, $V_T(\pi^w)=\sum_{a\in\bA} c^a V_T(\pi^a)$ is the representative agent's wealth process and 
\begin{align}
\label{eq:rep-ag-term-cond}
	{H}^w &:= \sum_{a\in\bA} w^a ({H}^a+\frac nN H^D) =  \frac nN H^D+  \sum_{a\in\bA} w^a {H}^a
\end{align}
 is defined as the representative agent's terminal endowment.

We now choose the weights $(w^a)_{a\in\bA}$ such that $c^a=c$ for any $a\in\bA$ for some $c\in (0,+\infty)$, i.e. 
\begin{equation}
\label{eq:defi-weightsw}
		w^a 
		:=  
		\frac{1}{\Lambda(1+\wt{\lambda}^a)}  
		\quad\text{for all $a\in\bA$, where }\quad 
		\Lambda 
		:= 
		\sum_{a\in\bA} \frac{1}{1+\wt{\lambda}^a} .
\end{equation}
Direct verification yields $\sum_a w^a = 1$ and, furthermore, for all $a\in\bA$, 
\begin{equation*}
	c^a 
	=
	c
	:= 
	\frac1\Lambda - \frac1\Lambda \sum_{b\in\bA} \frac{\wt{\lambda}^b}{1+\wt{\lambda}^b}
	.
\end{equation*}
Notice that $	\pi^{w,2} =  \sum_{a\in\bA} c^a \pi^{a,2} = c \sum_{a\in\bA} \pi^{a,2}$. In other words, the zero net supply condition for the individual agents (i.e. $\sum_{a\in\bA} \pi^{a,2}=0$) is equivalent 
to the representative agent not investing in $H^D$ (i.e. $\pi^{w,2}=0$).
From now on, the family of weights $w$ is fixed and is given by \eqref{eq:defi-weightsw}.

\subsubsection*{The pointwise minimizer for the representative agent's residual risk}
We now show that the approach by aggregated risk and representative agent, as motivated above, allows to identify the equilibrium market price of risk as a by-product of minimizing the risk of the representative agent. This risk is given by the solution to BSDE \eqref{equation-risk.BSDE.for.the.representative.agent} with terminal condition $Y^w_T  = - \xi^w = -{H}^w - V_T(\pi^w)$, for admissible strategies $\pi^w$ of the form $\pi^w=(\pi^{w,1},0)$.  
The $\bR^2$-valued strategy process $\pi^w$ is said to be admissible ($\pi^w \in \cA^w$) if $\bE^{\theta}  \left[\lev \; V_\cdot(\pi^w) \; \rev_{T}\right]< \infty$  
 and the BSDE \eqref{equation-risk.BSDE.for.the.representative.agent} has a unique solution. Following Section \ref{section-SingleAgentSection} we introduce the residual risk processes 
	\begin{align*} 
		\wt{Y}^w_t 
		:= 
		Y^w_t + V^w_t 
		\qquad\text{and accordingly}\qquad 
		\wt{Z}^w_t 
		:= 
		Z^w_t + \big( \pi^{w,1}_t\sigma_t + 0 \big).
	\end{align*}
The pair $(\wt{Y}^w,\wt{Z}^w)$ satisfies the BSDE with terminal condition $\wt{Y}^w_T=-{H}^w$ and random driver $\wt{g}^w$, defined for $(\omega,t,\pi^w_t,z) \in \Omega\times [0,T] \times \bR^2 \times \bR^2$, by
	\begin{align*}		
		\wt{g}^{w}(t,\pi^{w}_t,z) 
		:= 
			{g}^{w}\big(t,z - \zeta^w_t \big) 
			- \scalar{\zeta^w_t \,}{\, \theta_t} ,
	\end{align*}
where $\zeta^w = \pi^{w,1} \sigma + 0$ (compare with \eqref{eq:BSDEchangeVArs}-\eqref{eq-version2-Gab}).
Since $\wt{Y}^w_0 = Y^w_0$, the representative agent then equivalently aims at solving for $\min\{ \wt{Y}^w_0(\pi^w) | \pi^w \in \cA^w \}$.

Following the methodology used for the single agent in Section \ref{section-SingleAgentSection}, we first look at minimizing the driver $\wt{g}^w$ pointwise.
We define $\Pi^{w,1}(t,z)$ as the optimizer for $\min\big\{ \wt{g}^w(t,(p,0),z) \,|\, p \in \bR \big\}$, setting $\Pi^{w,2}(t,z)=0$ as to enforce the zero-net supply condition. 
Since $g^w$ is strictly convex, so is the function $\wt{g}^w$, and the minimum is characterized by the solution to first-order condition
		\begin{align*}
			g^{w}_{z^1}\left(t, z - \Pi^{w,1}(t,z) \, \sigma_t \right) = - \theta^S_t .
		\end{align*}
We denote the minimized (random) driver by
	\begin{align}
	\label{eq:MinimizedDriverGw}
		\wt{G}^w(t,z) = \wt{g}^{w}\big(t,\Pi^{w}(t,z),z\big).
	\end{align}

	\begin{remark}[The structure for the optimized driver \eqref{eq:MinimizedDriverGw} under a separation assumption]
		Here, unlike for the optimization of individual agents who trade in $S$ and $B$ under a fixed MPR $\theta=(\theta^S,\theta^R)$, we do not have a nice structure like in \eqref {equation---minimized.driver.wtGa.for.general.ga} for $\wt{G}^w$ in all generality on $g^w$ (hence on the $g^a$'s). 
		
		Assume that for some $g^{1,w},g^{2,w}: [0,T]\times \bR \to \bR$ we have $g^w(t,z) = g^{1,w}	(t,z^1) + g^{2,w}(t,z^2)$, then the first-order condition would translate to $g^{1,w}_{z^1}\left(t, z^1 - \Pi^{w,1}(t,z) \sigma^S S_t \right) = - \theta^S_t $. 
		Denoting by $\cZ^{w,1}(t,-\theta^S_t)$ the solution in $\cZ^1 \in \bR$ to the equation $g^{1,w}_{z^1}\left(t, \cZ^1 \right) = - \theta^S_t$, the structure for $\Pi^{w,1}$ is given by
			\begin{align*}
				\Pi^{w,1}(t,z) = \frac{z^1 - \cZ^{w,1}(t,-\theta^S_t)}{\sigma^S S_t}
			\end{align*}
		and the structure for the optimized driver $\wt{G}^w$ is given by 
			\begin{align*}
				\wt{G}^w(t,z) 
				&
				= 
				 g^{w}\big(  t, (\cZ^{w,1}(t,-\theta^S_t),z^2) \big) 
				 + \cZ^{w,1}(t,-\theta^S_t) \theta^S_t -  z^{1} \theta^S_t
				\\
				&= \big[ g^{1,w}\big(  t, \cZ^{w,1}(t,-\theta^S_t) \big) 
				     + \cZ^{w,1}(t,-\theta^S_t) \theta^S_t  \big]
				     	-  z^{1} \theta^S_t 
						 + g^{2,w}(t, z^2) .
			\end{align*}		
			The special case of entropic drivers, that falls in this category, is discussed below in Example \ref{example_EntropicOptimizedDriver}.
	\end{remark}

\subsubsection*{Optimimality for the representative agent and the equilibrium market price of external risk} 

We assume that the BSDE with driver $\wt{G}^w$ defined in \eqref{eq:MinimizedDriverGw} and terminal condition $-H^w$ has a unique solution $(\wt{Y}^w,\wt{Z}^w)$ in $\cS^\infty \times \cH_{BMO}$.
Define the strategy $\pi^{*,w}$ by $\pi^{*,w}_t := \big( \Pi^{w,1}(\omega,t,\wt{Z}^w_t) , 0 \big)$.
Like for the individual agents in Section \ref{section-SingleAgentSection}, the following theorem asserts that $\pi^{*,w}$ is the optimal strategy and $\wt{Y}^w_0$ is the minimized risk for the representative agent. Moreover, the theorem relates the equilibrium market price of risk (EMPR) $\theta = (\theta^S,\theta^R)$ (recall Definition \ref{def-equilibrium}) to the solution of the representative agent's optimization problem. 
Recall the family of weights $w$ given by \eqref{eq:defi-weightsw}. 

\begin{theorem}		
\label{theorem:EMPR_aggregate_2_2}
	Assume that
		\begin{itemize}
			\item the BSDE with driver $\wt{G}^w$, \eqref{eq:MinimizedDriverGw}, and 
			$\wt{Y}^w_T=-H^w$ has a unique solution $(\wt{Y}^w,\wt{Z}^w)$ in $\cS^\infty \times \cHBMO$,
			\item the comparison theorem holds for the BSDE with driver $\wt{G}^w$,
      \item $\pi^{*,w}_{\cdot}=\big( \Pi^{w,1}(\omega,\cdot,\wt{Z}^w_\cdot) , 0 \big)$ is integrable against the prices $S$ and $B$,
		\end{itemize}
	then $\wt{Y}^w_0$ is the minimized risk for the representative agent and $\pi^{*,w}$ is the unique optimal strategy that minimizes his risk.
	
	If, for the process $\theta^*=(\theta^S,\theta^{R})$, with $\theta^{R}$ defined by
		\begin{align}
		\label{eq:ConditionDefiningThetaR}
			g^{w}_{z^2}\left(t, \wt{Z}^{w}_t - \pi^{*,w,1}_t \, \sigma_t \right) = - \theta^R_t ,
		\end{align}
the conditions of Theorem \ref{theorem:invertibilityAN} hold, then $\theta^*$ is the unique EMPR for the agents in $\bA$.
	
	Additionally, the minimized aggregated risk $\wt Y^w$ is linked to the individual minimized risks $(\wt{Y}^a)_{a\in\bA}$ through the identity $\wt{Y}^w = \sum_{a} w^a \wt{Y}^a$ (the same holds for $\wt Z$). Moreover, the Nash equilibrium for the agents in $\bA$ satisfies $\pi^{*,w}=c \sum_a \pi^{*,a}$.
\end{theorem}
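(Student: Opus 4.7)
The plan is to handle three claims in sequence: (i) optimality of $\pi^{*,w}$ for the representative agent, (ii) that the process $\theta^*=(\theta^S,\theta^R)$ of \eqref{eq:ConditionDefiningThetaR} is the unique EMPR, and (iii) the aggregation identities $\wt Y^w=\sum_a w^a\wt Y^a$ and $\pi^{*,w}=c\sum_a \pi^{*,a}$. For (i), I would repeat nearly verbatim the argument of Theorem \ref{theorem--single_agent_optimality}, but restricted to the subclass $\cA^w$ of admissible strategies with $\pi^{w,2}=0$: the pointwise minimality $\wt G^w(t,z)=\wt g^w(t,\Pi^w(t,z),z)\le \wt g^w(t,\pi^w_t,z)$ together with the assumed comparison theorem for the BSDE with driver $\wt G^w$ yields $\wt Y^w_0(\pi^{*,w})\le \wt Y^w_0(\pi^w)$ for every admissible $\pi^w$; uniqueness of $\pi^{*,w}$ follows by the same $\bP^{\theta^*}$-expectation argument as in the proof of Theorem \ref{theorem--single_agent_optimality}.

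For (ii) and (iii), the decisive step is an aggregation identity. Feed the candidate $\theta^*$ of \eqref{eq:ConditionDefiningThetaR} into Theorem \ref{theorem:invertibilityAN} to obtain the unique Nash equilibrium $\pi^*$ and the individual (affine-in-$Z$) risk processes $(\wt Y^a,\wt Z^a)_{a\in\bA}$. Set $(\tilde Y,\tilde Z):=\sum_a w^a(\wt Y^a,\wt Z^a)$. Its terminal value is $-H^w$ by \eqref{eq:rep-ag-term-cond}, and its driver reads
\[
\sum_a w^a \wt G^a(t,\wt Z^a)=\sum_a w^a\Big[g^a\big(t,\cZ^a(t,-\theta^*_t)\big)+\scalar{\cZ^a(t,-\theta^*_t)}{\theta^*_t}\Big]-\scalar{\tilde Z}{\theta^*_t}.
\]
Since $\nabla_z g^a(t,\cZ^a(t,-\theta^*_t))=-\theta^*_t$ for every $a$, Lemma \ref{lemma---properties.of.the.infconvol.driver.and.optimizer} applied pointwise in $(\omega,t)$ collapses the bracket to $g^w(t,\cZ^w_t)+\scalar{\cZ^w_t}{\theta^*_t}$ with $\cZ^w_t:=\sum_a w^a\cZ^a(t,-\theta^*_t)$. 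On the other hand, the FOC defining $\Pi^{w,1}$ and the very definition \eqref{eq:ConditionDefiningThetaR} of $\theta^R$ together give $\nabla g^w(t,\wt Z^w_t-\pi^{*,w,1}_t\sigma_t)=-\theta^*_t$, hence $\wt Z^w-\pi^{*,w,1}\sigma=\cZ^w$ by Lemma \ref{lemma---properties.of.the.infconvol.driver.and.optimizer}. A direct substitution confirms $\wt G^w(t,\wt Z^w)=g^w(t,\cZ^w)+\scalar{\cZ^w-\wt Z^w}{\theta^*}$, so $(\tilde Y,\tilde Z)$ and $(\wt Y^w,\wt Z^w)$ solve the same affine-in-$Z$ BSDE (Lipschitz coefficient $-\theta^*\in\cH_{BMO}$); by the assumed uniqueness, $\wt Y^w=\sum_a w^a \wt Y^a$ and $\wt Z^w=\sum_a w^a \wt Z^a$.

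To extract market clearing and the strategy identity, apply Lemma \ref{remark---FOC.for.processes.pi.and.Z.implies.optimality} to each agent to get $\wt Z^a-\zeta^a=\cZ^a(t,-\theta^*_t)$; summing with weights $w^a$ yields $\sum_a w^a\zeta^a=\wt Z^w-\cZ^w=\pi^{*,w,1}\sigma=(\pi^{*,w,1}\sigma^S S,0)$. Reading the second component, using $\kappa^R\neq 0$ (Assumption \ref{assump-kappa}), the explicit form of $\zeta^{a,2}$ from \eqref{eq:BSDEchangeVArs}, and the defining identity $w^a(1+\wt\lambda^a)-\sum_b w^b\wt\lambda^b=c$, one obtains $c\sum_a \pi^{*,a,2}=0$. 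Direct computation gives $c=(\Lambda-(N-1))/\Lambda$; since $\Lambda=\sum_a (N-1)/(N-1+\lambda^a)\ge N-1$ with equality iff every $\lambda^a=1$, Assumption \ref{assump:ConcernRates} ($\prod_a\lambda^a<1$) forces $c>0$, so market clearing $\sum_a\pi^{*,a,2}=0$ holds. Treating the first component analogously yields $\pi^{*,w,1}=c\sum_a\pi^{*,a,1}$, so $\pi^{*,w}=c\sum_a\pi^{*,a}$. Uniqueness of the EMPR is inherited from the uniqueness of $(\wt Y^w,\wt Z^w,\pi^{*,w,1})$ combined with \eqref{eq:ConditionDefiningThetaR}.

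The main obstacle I anticipate is the apparent circularity in the construction of $\theta^R$: it is built from $\wt Z^w$ and $\pi^{*,w,1}$, yet confirming that it yields an EMPR requires going back to the individual FOCs and retrieving the clearing condition. Lemma \ref{lemma---properties.of.the.infconvol.driver.and.optimizer} is the hinge that breaks this circle, by turning the aggregated individual driver into an expression involving only $\cZ^w$ and $\theta^*$, so that the representative-agent BSDE and the weighted sum of the individual BSDEs become identifiable.
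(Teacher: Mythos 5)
Your argument is correct in its main line and follows essentially the paper's route: fix the candidate $\theta^*$ via \eqref{eq:ConditionDefiningThetaR}, form the weighted sum $\sum_a w^a(\wt Y^a,\wt Z^a)$ of the individual minimized-risk BSDEs, use Lemma \ref{lemma---properties.of.the.infconvol.driver.and.optimizer} (all gradients $\nabla_z g^a$ equal to $-\theta^*$ collapse the weighted sum of drivers onto $g^w$ evaluated at $\cZ^w=\sum_a w^a\cZ^a(\cdot,-\theta^*)$) to identify the aggregate with the representative agent's solution, and read off market clearing from the second component of $\sum_a w^a\zeta^a=c\sum_a\pi^{*,a,1}\sigma+c\sum_a\pi^{*,a,2}\kappa^\theta$. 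Your packaging differs only cosmetically from the paper's: where the paper isolates the aggregation step as Lemma \ref{lemma--aggregating.individual.optima.leads.to.optimum.for.aggregated.preferences} and then matches strategies through the FOC characterizations (Lemmas \ref{remark---FOC.for.processes.pi.and.Z.implies.optimality} and \ref{remark---FOC.for.processes.pi.and.Z.implies.optimality-version.when.trading.S.only}), you identify the two pairs directly as solutions of one and the same affine BSDE with Lipschitz coefficient $-\theta^*\in\cHBMO$; both are legitimate, and your explicit verification that $c=(\Lambda-(N-1))/\Lambda>0$ under Assumption \ref{assump:ConcernRates} is a nice touch that the paper leaves implicit.

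The one genuine weakness is the uniqueness of the EMPR, which you dispose of in a single sentence ("inherited from the uniqueness of $(\wt Y^w,\wt Z^w,\pi^{*,w,1})$ combined with \eqref{eq:ConditionDefiningThetaR}"). This hides the actual work: for a competing EMPR $\vartheta=(\theta^S,\vartheta^R)$ the individual BSDEs have \emph{different} drivers (through $\cZ^a(\cdot,-\vartheta)$), so the weighted aggregate $(\wh Y^{w,\vartheta},\wh Z^{w,\vartheta})$ is a priori a different process and does not automatically coincide with $(\wt Y^w,\wt Z^w)$. One must first invoke the clearing condition (which holds because $\vartheta$ is assumed to be an EMPR) to conclude that the aggregate strategy has zero position in $B$, then observe that the first component of the aggregated FOC, $g^w_{z^1}(t,\wh Z^{w,\vartheta}_t-\wh\pi^{w,\vartheta,1}_t\sigma_t)=-\theta^S_t$, involves only the common component $\theta^S$ and therefore characterizes the solution of the $S$-only representative-agent problem; uniqueness of \emph{that} problem identifies the aggregate with $(\wt Y^w,\wt Z^w,\pi^{*,w,1})$, and only then does the second FOC equation force $\vartheta^R=\theta^R$. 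This is exactly the reverse of your existence argument and all the ingredients are already on your page, but as written the uniqueness claim is asserted rather than proved.
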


\begin{example}[The entropic case]
	\label{example_EntropicOptimizedDriver}
		In the entropic case, we have found $g^w(z) = \frac{\abs{z}^2}{2\gamma_R}$, so we have $\cZ^{w,1}(t,-\theta^S_t) = -\gamma_R \theta^S_t$.
		The minimized driver is then
			\begin{equation*}
				\wt{G}^w(t,z) = -\frac{\gamma_R}{2} (\theta^S_t)^2 -  z^{1} \theta^S_t + \frac{1}{2\gamma_R}(z^2)^2,
			\end{equation*}		
		as was found in Subsection \ref{subsection-unconstrained.case-entropic.case}. 
		This driver is quadratic and regular, and the terminal condition $-H^w$ is bounded. From \cites{Kobylanski2000,ImkellerDosReis2010} 
		there is a unique solution $(\wt{Y}^w,\wt{Z}^w)$ in $\cS^\infty \times \cHBMO$ and the comparison theorem applies (see \cites{Kobylanski2000,MaYao2010}). 
		The optimal strategies are 
		\begin{equation*}
			\pi^{*,w,1} = \frac{\wt{Z}^{w,1} + \gamma_R \theta^S}{\sigma^S S} \text{ and }
			\pi^{*,w,2} = 0.
		\end{equation*} 
		With $\wt{Z}^{w} \in \cHBMO$ and $\theta^S$ bounded, $\pi^{*,w,1}$ is integrable against $S$. This verifies the first three assumptions of the theorem.
		Furthermore, with \eqref{eq:ConditionDefiningThetaR} and since $\wt{Z}^{w} \in \cHBMO$ and $\theta^S$ is bounded, we find that
	\begin{equation} \label{eq:entropic-EMPeR}
		\theta^R = -\frac{\wt{Z}^{w,2}}{\gamma_R} 
		\quad \text{ and } \quad
		\theta^*=(\theta^S,\theta^R) \in \cHBMO .
	\end{equation}
\end{example}

Following on Remark \ref{remark---ParticularCaseTheo3.3}, 
the optimality of $\pi^{*,w}$ and $(\wt{Y}^w,\wt{Z}^w)$, for an agent $w$ with preferences described by $g^w$ and trades in $S$, is obtained exactly in the same way as the optimality for a single agent $a \in \bA$ in Theorem \ref{theorem--single_agent_optimality}. 
So we prove only the claims of Theorem \ref{theorem:EMPR_aggregate_2_2} related to the EMPR $\theta^*$.
First, however, we state a counterpart to Lemma \ref{remark---FOC.for.processes.pi.and.Z.implies.optimality} to the case when no trading in $B$ is possible. 

	\begin{lemma} \label{remark---FOC.for.processes.pi.and.Z.implies.optimality-version.when.trading.S.only}
		Under the assumptions of Theorem \ref{theorem:EMPR_aggregate_2_2}, 
		let $\wh{\pi}^w=(\wh{\pi}^{w,1},0)$ be an admissible strategy and $(\wh{Y}^w,\wh{Z}^w)$ be the associated risk process, i.e. the
		solution to the BSDE with driver $\wt{g}^w(t,\wh{\pi}^w_t,\cdot)$ and terminal condition $-H^w$.
		Assume that the FOC holds for these processes, i.e.  
			\begin{align*}
				g^w_{z^1}(t, \wh{Z}^w_t - \wh{\zeta}^w_t) = -\theta^S_t
				\qquad \text{where} \qquad
				\wh{\zeta}^w_t = \wh{\pi}_t^{w,1} \sigma_t .
			\end{align*} 
		Then $(\wh{Y}^w,\wh{Z}^w) = (\wt{Y}^w,\wt{Z}^w)$ and $\wh{\pi}^w = {\pi}^{*,w}$.
		\end{lemma}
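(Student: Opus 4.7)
My plan is to mirror the argument used in Lemma \ref{remark---FOC.for.processes.pi.and.Z.implies.optimality}, adapting it to the restricted class of strategies $\pi^w = (\pi^{w,1}, 0)$. The essential point is that the first-order condition in question is precisely what characterizes the pointwise minimizer $\Pi^{w,1}$ of $\wt{g}^w(t,(p,0),z)$ over $p \in \bR$, after which uniqueness of the resulting BSDE collapses everything.

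First, I would unwind the definitions. Recall from \eqref{eq-version2-Gab} (adapted to the representative agent) that
\begin{equation*}
\wt{g}^w(t,\wh{\pi}^w_t,z) = g^w\bigl(t, z - \wh{\zeta}^w_t\bigr) - \langle \wh{\zeta}^w_t, \theta_t\rangle,
\qquad \wh{\zeta}^w_t = \wh{\pi}^{w,1}_t \sigma_t.
\end{equation*}
Since $g^w$ is strictly convex and continuously differentiable in $z$ (by Lemma \ref{lemma---properties.of.the.infconvol.driver.and.optimizer}), the map $p \mapsto \wt{g}^w(t,(p,0),z)$ is strictly convex and has a unique critical point, determined by $g^w_{z^1}(t,z - (p\sigma_t^S S_t,0)) = -\theta^S_t$. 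Thus the FOC assumed in the statement, namely $g^w_{z^1}(t, \wh{Z}^w_t - \wh{\zeta}^w_t) = -\theta^S_t$, is precisely the equation that defines the pointwise minimizer, so $\wh{\pi}^{w,1}_t = \Pi^{w,1}(t, \wh{Z}^w_t)$ for a.e.\ $(t,\omega)$.

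Second, since $\wh{\pi}^{w,1}_t = \Pi^{w,1}(t, \wh{Z}^w_t)$ (and $\wh{\pi}^{w,2}_t = 0 = \Pi^{w,2}(t,\wh{Z}^w_t)$ by construction), plugging $\wh{\pi}^w$ back into $\wt{g}^w$ yields, by \eqref{eq:MinimizedDriverGw},
\begin{equation*}
\wt{g}^w\bigl(t, \wh{\pi}^w_t, \wh{Z}^w_t\bigr) = \wt{G}^w(t, \wh{Z}^w_t).
\end{equation*}
Hence the pair $(\wh{Y}^w, \wh{Z}^w)$ solves the BSDE with driver $\wt{G}^w$ and terminal condition $-H^w$. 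By the uniqueness hypothesis in Theorem \ref{theorem:EMPR_aggregate_2_2}, this solution is precisely $(\wt{Y}^w, \wt{Z}^w)$, so $(\wh{Y}^w, \wh{Z}^w) = (\wt{Y}^w, \wt{Z}^w)$. Substituting back, $\wh{\pi}^{w,1}_t = \Pi^{w,1}(t, \wh{Z}^w_t) = \Pi^{w,1}(t, \wt{Z}^w_t) = \pi^{*,w,1}_t$, and $\wh{\pi}^{w,2}_t = 0 = \pi^{*,w,2}_t$, giving $\wh{\pi}^w = \pi^{*,w}$.

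I do not foresee a genuine obstacle: the argument is essentially a clean consequence of the strict convexity of $g^w$ together with the existence--uniqueness of the minimized-driver BSDE. The only point requiring a little care is making sure the FOC really characterizes the minimizer uniquely in the constrained problem (where the second component of $\pi^w$ is forced to be zero); this is automatic because $p \mapsto \wt{g}^w(t,(p,0),z)$ inherits strict convexity from $g^w$, and the derivative condition on the remaining variable $p$ amounts exactly to the scalar equation involving $g^w_{z^1}$.
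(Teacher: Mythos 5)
Your proposal is correct and follows essentially the same route as the paper's own proof: identify the FOC with the defining equation of the pointwise minimizer $\Pi^{w,1}$ (using strict convexity of $p\mapsto\wt{g}^w(t,(p,0),z)$), deduce $\wt{g}^w(t,\wh{\pi}^w_t,\wh{Z}^w_t)=\wt{G}^w(t,\wh{Z}^w_t)$, invoke uniqueness of the minimized-driver BSDE, and then read off $\wh{\pi}^w=\pi^{*,w}$ from uniqueness of the FOC's solution. The only difference is that you spell out the strict-convexity justification slightly more explicitly than the paper does.
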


	\begin{proof}
		Recalling the properties of $g^w$ (see Lemma \ref{lemma---properties.of.the.infconvol.driver.and.optimizer}) and the definition of $\Pi^{w,1}$, 
		the condition $g^w_{z^1}(t, \wh{Z}^w_t - \wh{\pi}_t^{w,1} \sigma_t ) = -\theta^S_t$ means that
		$\wh{\pi}^{w,1}_t = \Pi^{w,1}(t,\wh{Z}^w_t)$.
		We have then $\wt{g}^w(t,\wh{\pi}^w_t,\wh{Z}^w_t) = \wt{G}^w(t,\wh{Z}^w_t)$  (recall \eqref{eq:MinimizedDriverGw}). 
		By the assumed uniqueness of the solution to the BSDE with driver $\wt{G}^w(t,\cdot)$ and terminal condition $-H^w$, we have $(\wh{Y}^w,\wh{Z}^w)=(\wt{Y}^w,\wt{Z}^w)$.
		Consequently, by the uniqueness of the FOC's solution, $\wh{\pi}^{w,1}_t = \Pi^{w,1}(t,\wh{Z}^w_t) = \Pi^{w,1}(t,\wt{Z}^w_t) = \pi^{*,w,1}_t$.
		Since both strategies have second component equal to zero, we have therefore $\wh{\pi}^w = \pi^{*,w}$.
	\end{proof}
The next result, to be used in the proof of Theorem \ref{theorem:EMPR_aggregate_2_2}, states that aggregating the solutions to the individual optimization problems leads to an optimum for the aggregated preference $g^w$ and identifies the BSDE of the aggregation with the weighted sum of the agents' BSDEs. 
	\begin{lemma}	
	\label{lemma--aggregating.individual.optima.leads.to.optimum.for.aggregated.preferences}
		Let $\vartheta \in \cH_{BMO}$ be a MPR and assume the conditions of Theorem \ref{theorem:invertibilityAN}. 
		Let then $(\pi^{*,a})_{a \in \bA}$ be the unconstrained Nash equilibrium associated with $\vartheta$, and let $(\wt{Y}^a,\wt{Z}^a)$ be the solution to the minimized-risk BSDE for each agent $a\in \bA$ (BSDE \eqref{agentresidualBSDEtransf} with driver \eqref{equation---minimized.driver.wtGa.for.general.ga}). 
		Define $(\wh{Y}^w,\wh{Z}^w) := \sum_a w^a (\wt{Y}^a,\wt{Z}^a)$ and $\wh{\pi}^w := \sum_a c^a \pi^{*,a} = c \sum_a \pi^{*,a}$.

		Then $(\wh{Y}^w,\wh{Z}^w)$ and $\wh{\pi}^w$ are the minimal risk and optimal strategy for a single agent whose preferences are given by $g^w$, who can invest in $(S,B)$ (without trading constraints).
	\end{lemma}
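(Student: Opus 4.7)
The plan is to identify $(\wh{Y}^w,\wh{Z}^w,\wh{\pi}^w)$ with the triple produced by the ``single agent'' theorem (Theorem \ref{theorem--single_agent_optimality}, in the version where the agent with driver $g^w$ trades in both $S$ and $B$, cf.\ Remark \ref{remark---ParticularCaseTheo3.3}), via the FOC characterization in Lemma \ref{remark---FOC.for.processes.pi.and.Z.implies.optimality}. First, from the definitions $\wh{Y}^w_T = \sum_a w^a \wt{Y}^a_T = -\sum_a w^a(H^a + \tfrac{n}{N}H^D) = -H^w$, so the terminal condition is as required. Weighting and summing the individual minimized-risk BSDEs gives
\begin{align*}
  -\ud\wh{Y}^w_t
  = \Big(\sum_{a\in\bA} w^a \wt{G}^a(t,\wt{Z}^a_t)\Big)\udt - \langle \wh{Z}^w_t,\udwt\rangle.
\end{align*}

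The main step is to rewrite the driver as $\wt{G}^{w,\text{full}}(t,\wh{Z}^w_t) := g^w(t,\cZ^w(t,-\vartheta)) + \langle \cZ^w(t,-\vartheta),\vartheta\rangle - \langle \wh{Z}^w_t,\vartheta\rangle$, which is the optimized driver for the representative agent trading in $(S,B)$ under MPR $\vartheta$. Each $\cZ^a(t,-\vartheta)$ satisfies $\nabla_z g^a(t,\cZ^a)=-\vartheta$, so the family $(\cZ^a)$ shares a common Lagrange multiplier $\vartheta$. By Lemma \ref{lemma---properties.of.the.infconvol.driver.and.optimizer} this is precisely the characterization of a minimizer of the weighted inf-convolution, hence $\sum_a w^a \cZ^a(t,-\vartheta) = \cZ^w(t,-\vartheta)$ and $\sum_a w^a g^a(t,\cZ^a(t,-\vartheta)) = g^w(t,\cZ^w(t,-\vartheta))$. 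Recalling $\wt{G}^a(t,z)=g^a(t,\cZ^a(t,-\vartheta))+\langle \cZ^a(t,-\vartheta),\vartheta\rangle - \langle z,\vartheta\rangle$ from \eqref{equation---minimized.driver.wtGa.for.general.ga} and summing with weights $w^a$ yields $\sum_a w^a \wt{G}^a(t,\wt{Z}^a_t) = \wt{G}^{w,\text{full}}(t,\wh{Z}^w_t)$, since $\sum_a w^a \wt{Z}^a_t = \wh{Z}^w_t$. Thus $(\wh{Y}^w,\wh{Z}^w)$ is the solution to the BSDE with terminal $-H^w$ and minimized-driver $\wt{G}^{w,\text{full}}$, so by the $(S,B)$-version of Theorem \ref{theorem--single_agent_optimality} it gives the minimal risk for the representative agent.

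It remains to verify that $\wh{\pi}^w = c\sum_a \pi^{*,a}$ satisfies the FOC for $g^w$, i.e.\ $\wh{Z}^w_t - \wh{\zeta}^w_t = \cZ^w(t,-\vartheta)$ with $\wh{\zeta}^w_t = \wh{\pi}^{w,1}_t\sigma_t + \wh{\pi}^{w,2}_t \kappa^\vartheta_t$. For each agent, the individual FOC (Lemma \ref{remark---FOC.for.processes.pi.and.Z.implies.optimality}) gives $\wt{Z}^a_t - \zeta^a_t = \cZ^a(t,-\vartheta)$. Weighting by $w^a$ and summing, the key algebraic check is that, after reindexing the cross-term $\wt{\lambda}^a \bar{\pi}^{*,-a,i}_t = \wt{\lambda}^a \sum_{b\neq a}\pi^{*,b,i}_t$, the coefficient multiplying each $\pi^{*,b,i}_t$ equals exactly $-c$. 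Indeed this coefficient is $-w^b + \sum_{a\neq b} w^a\wt{\lambda}^a = -w^b(1+\wt{\lambda}^b) + \sum_a w^a\wt{\lambda}^a = -\tfrac{1}{\Lambda} + \sum_a w^a\wt{\lambda}^a = -c$, by the very definition of $w^a$ in \eqref{eq:defi-weightsw} and the computation of $c$. Consequently $\sum_a w^a \cZ^a(t,-\vartheta) = \wh{Z}^w_t - c\big(\sum_a \pi^{*,a,1}_t\sigma_t + \sum_a \pi^{*,a,2}_t\kappa^\vartheta_t\big) = \wh{Z}^w_t - \wh{\zeta}^w_t$, and the left-hand side equals $\cZ^w(t,-\vartheta)$ by the previous step. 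Applying the $(S,B)$-analog of Lemma \ref{remark---FOC.for.processes.pi.and.Z.implies.optimality} then identifies $\wh{\pi}^w$ with the unique optimizer.

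The main obstacle is the second step: making sure that the weighted inf-convolution really aggregates the individual FOCs into a single FOC for $g^w$ with multiplier $\vartheta$, and that the strategy weights $c^a = c$ arising from \eqref{eq:defi-weightsw} are exactly those needed to cancel the performance-concern cross-terms. This is precisely where the weighted-dilated (as opposed to standard) inf-convolution is essential.
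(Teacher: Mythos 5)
Your proposal is correct and takes essentially the same route as the paper's proof: weighted summation of the individual minimized-risk BSDEs, identification of the aggregated driver via the common-Lagrange-multiplier characterization of the weighted inf-convolution (Lemma \ref{lemma---properties.of.the.infconvol.driver.and.optimizer}), the cross-term cancellation producing the single strategy coefficient $c$, and conclusion via the FOC-implies-optimality lemma. The only cosmetic difference is that you manipulate the explicit affine form of $\wt{G}^a$ and the quantities $\cZ^a(t,-\vartheta)$, whereas the paper works directly with $g^a(t,\wt{Z}^a_t-\zeta^a_t)$; you also spell out the coefficient computation $-w^b+\sum_{a\neq b}w^a\wt{\lambda}^a=-c$, which the paper states without detail.
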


\begin{proof}
		Firstly, we sum the individual risk BSDEs to obtain $(\wh{Y}^w,\wh{Z}^w)$ and its BSDE. We have $\wh{Y}^w_T = -\sum_a w^a H^a = -H^w$ and also 
			\begin{align*}
				\ud \wh{Y}^w_t 
					&= - \bigg[ \sum_{a\in\bA} w^a \Big\{g^a\Big(t, \wt{Z}^a_t - \zeta^a_t(\pi^*) \Big) - \scalar{\zeta^a_t(\pi^*)}{\vartheta_t}\Big\} \bigg] \ud t
					+ \sum_{a\in\bA} w^a \langle \wt{Z}^a_t ,\ud W_t\rangle
					\\
					&= - \bigg[ \sum_{a\in\bA} w^a g^a\Big(t, \wt{Z}^a_t - \zeta^a_t(\pi^*) \Big) - \scalar{\wh{\zeta}^w_t}{\vartheta_t} \bigg] \ud t + \langle \wh{Z}^w_t ,\ud W_t\rangle  ,				
			\end{align*}
		where $\wh{\zeta}^w = \wh{\pi}^{w,1} \sigma + \wh{\pi}^{w,2} \kappa=\sum_a w^a\zeta^a(\pi^*)$.
		We remark that, on the one hand,
			\begin{align*}
				\sum_a w^a ( \wt{Z}^a_t - \zeta^a_t(\pi^*) ) = \sum_a w^a \wt{Z}^a_t - \Big( \sum_a c^a \pi^{*,a,1} \sigma_t + \sum_a c^a \pi^{*,a,2} \kappa_t \Big)
						= \wh{Z}^w_t - \wh{\zeta}^w_t ,
			\end{align*}
		and, on the other hand, for all $a \in \bA$, by the optimality of $\wt \pi^a$ and $(\wt Y^a, \wt Z^a)$ 
			\begin{align*}
				\nabla_z g^a\big(t, \wt{Z}^a_t - \zeta^a_t \big) = -\vartheta_t . 
			\end{align*}		
		Therefore, we know by Lemma \ref{lemma---properties.of.the.infconvol.driver.and.optimizer} that $g^w(t,\wh{Z}^w_t - \wh{\zeta}^w_t) = \sum_a w^a g^a(t, \wt{Z}^a_t - {\zeta}^a_t)$. 
		This implies that
			\begin{align*}
				\ud \wh{Y}^w_t 
					&= - \big[ g^w(t,\wh{Z}^w_t - \wh{\zeta}^w_t) - \langle{\wh{\zeta}^w_t},{\vartheta_t}\rangle \big] \ud t + \langle \wh{Z}^w_t ,\ud W_t\rangle  
					\\
					&
					= - \wt{g}^w(t,\wh{\pi}^w_t,\wh{Z}^w_t) \ud t + \langle \wh{Z}^w_t ,\ud W_t\rangle  .
			\end{align*}
		Secondly, by Lemma \ref{lemma---properties.of.the.infconvol.driver.and.optimizer}, we also know that  $\nabla_z g^w\big(t, \wh{Z}^w_t - \wh{\zeta}^w_t \big) = -\vartheta_t$.

		Therefore, by Lemma \ref{remark---FOC.for.processes.pi.and.Z.implies.optimality}, we obtain that $(\wh{Y}^w,\wh{Z}^w)$ is the solution to the minimized-risk BSDE
		for an agent with preferences given by $g^w$ from \eqref{equation---minimized.driver.wtGa.for.general.ga}, terminal condition $-H^w$, who trades in $S$ and $B$ under the given MPR $\vartheta$ with $\wh{\pi}^w$ as the optimal strategy.	
	\end{proof}

	\begin{proof}[Proof of Theorem \ref{theorem:EMPR_aggregate_2_2}]
	The first part of the proof of the theorem, the optimization for the representative agent, follows through arguments similar to those used in the single agent case, see Theorem \ref{theorem--single_agent_optimality} and Remark \ref{remark---ParticularCaseTheo3.3}. Hence we omit it.

		\emph{$\triangleright$ Existence of the EMPR.} Here we prove that $\theta^*=(\theta^S,\theta^R)$, defined through \eqref{eq:ConditionDefiningThetaR}, is indeed an EMPR. 
		Since $\theta^* \in \cHBMO$ 
		and the conditions of Theorem \ref{theorem:invertibilityAN} hold, let $(\pi^{*,a})_{a \in \bA}$ be the unique unconstrained Nash equilibrium under the MPR $\theta^*$, 
		and let $(\wt{Y}^a,\wt{Z}^a)$ be the solution to the minimized-risk BSDE for each agent $a\in \bA$. 
		Our goal now is to prove that $\sum_a \pi^{*,a,2}=0$.
		
		Let us introduce $(\wh{Y}^w,\wh{Z}^w) := \sum_a w^a (\wt{Y}^a,\wt{Z}^a)$ and $\wh{\pi}^w := \sum_a c^a \pi^{*,a} = c \sum_a \pi^{*,a}$.
		From Lemma \ref{lemma--aggregating.individual.optima.leads.to.optimum.for.aggregated.preferences}, $\wh{\pi}^w$ and $\wh{Y}^w$ are the optimal strategy and risk for a single agent with risk preferences encoded by $g^w$ trading $S$ and $B$ under $\theta^*$ without trading constraints. 
		
		Meanwhile, we defined $\pi^{*,w} = (\pi^{*,w,1},0)$ as the optimal strategy for an agent $w$ with preferences encoded by $g^w$ and who can only invest in $S$ (with MPR $\theta^S$).
		By construction of $\theta^R$, we have 
			\begin{align*}
				\nabla_z g^w\big(t, \wt{Z}^w_t - \zeta^w_t \big) = -\theta^*_t ,
				\quad \text{where} \quad 
				\zeta^w_t = \pi^{*,w,1}_t \sigma_t.
			\end{align*}		
		It results from Lemma \ref{remark---FOC.for.processes.pi.and.Z.implies.optimality} that $\pi^{*,w}$ 
		is also the optimal strategy for an agent with preferences $g^w$ and who can invest in $S$ and $B$, with given MPR $\theta^*$. 
		By the uniqueness in Lemma \ref{remark---FOC.for.processes.pi.and.Z.implies.optimality} we therefore have $\wh{\pi}^w = {\pi}^{*,w}$. 
		This implies in particular that $\sum_a \pi^{*,a,2} = \wh{\pi}^{w,2} =  \pi^{*,w,2} = 0$. 
		We have therefore proved that the Nash equilibrium associated with $\theta^*$ satisfies the zero-net supply condition, hence the constructed $\theta^*$ is an EMPR.

		\emph{$\triangleright$ Uniqueness of the EMPR.} 
		Assume that $\vartheta=(\theta^S,\vartheta^R)$ is also an EMPR and let $(\pi^{*,a,\vartheta})_{a \in \bA}$ be the associated Nash equilibrium for which, by definition of EMPR, the zero-net supply condition $\sum_a {\pi}^{*,a,\vartheta,2}=0$ is satisfied.
		Let also $(\wt{Y}^{a,\vartheta},\wt{Z}^{a,\vartheta})$ be the solution to the minimized-risk BSDE for each agent $a\in \bA$. 
		As above, we define $(\wh{Y}^{w,\vartheta},\wh{Z}^{w,\vartheta}) := \sum_a w^a (\wt{Y}^{a,\vartheta},\wt{Z}^{a,\vartheta})$ and $\wh{\pi}^{w,\vartheta} := \sum_a c^a \pi^{*,a,\vartheta} = c \sum_a \pi^{*,a,\vartheta}$.	By Lemma \ref{lemma--aggregating.individual.optima.leads.to.optimum.for.aggregated.preferences}, we obtain that $(\wh{Y}^{w,\vartheta},\wh{Z}^{w,\vartheta})$ and $\wh{\pi}^{w,\vartheta}$ are optimal for an agent $w$ who trades in $S$ and $B$ under the given MPR $\vartheta$ for a single agent economy. 
		Consequently, using the characterization between the  optimizer and the FOC condition, we have 
			\begin{align*}
				g^w_{z^1}\big(t, \wh{Z}^{w,\vartheta}_t - \wh{\pi}^{w,\vartheta,1}_t \sigma_t \big) = -\theta^S_t
				\qquad \text{and} \qquad 
				g^w_{z^2}\big(t, \wh{Z}^{w,\vartheta}_t - \wh{\pi}^{w,\vartheta,1}_t \sigma_t \big) = -\vartheta^R_t,
			\end{align*}
where $\wh{\pi}^{w,\vartheta,2}=0$ as $\vartheta$ is an EMPR. 
By Lemma \ref{remark---FOC.for.processes.pi.and.Z.implies.optimality-version.when.trading.S.only}, the first equation guarantees that $(\wh{Y}^{w,\vartheta},\wh{Z}^{w,\vartheta})$ and $\wh{\pi}^{w,\vartheta}$ are optimal for an agent with preferences $g^w$ who trades in $S$.
		By the construction of $ (\wt{Y}^w,\wt{Z}^w)$ and ${\pi}^{*,w}$ (for the MPR $\theta^*$), and the uniqueness recalled in Lemma \ref{remark---FOC.for.processes.pi.and.Z.implies.optimality-version.when.trading.S.only}, we have $(\wh{Y}^{w,\vartheta},\wh{Z}^{w,\vartheta}) = (\wt{Y}^w,\wt{Z}^w)$ and $\wh{\pi}^w = {\pi}^{*,w}$. As a consequence, we have from the second FOC equation 
			\begin{align*}
				-\vartheta^R_t = g^w_{z^2}\big(t, \wh{Z}^{w,\vartheta}_t - \wh{\pi}^{w,\vartheta,1}_t \sigma_t \big) = g^w_{z^2}\big(t, \wt{Z}^w_t - {\pi}^{*,w,1}_t \sigma_t \big) = -\theta^R_t .
			\end{align*}
		Hence the uniqueness of the EMPR $\theta^*$.
	\end{proof}

From Theorem \ref{theorem:EMPR_aggregate_2_2} we point out that $\theta^*$ is only \emph{a} MPR for the representative agent's economy as the representative agent trades in an incomplete market where she is not able trade the risk from $(R_t)$ --- recall \eqref{diffusion-R}. 
Nonetheless, $\theta^*$ is the only MPR leading to a complete market for the agents where the Nash equilibrium they form satisfies the zero net supply condition. 
We close this remark by adding  that the representative agent approach for complete market leads to Arrow-Debreu equilibria for the acting agents (see \cites{HorstPirvuDosReis2010,KardarasXingZitkovic2015} and references therein).

\begin{remark} 
\label{justify_weighted_convolution}
In \cite{Rueschendorf2013} a ``weighted minimal convolution'' of risk measures is introduced via
\begin{align*}
		\left(\bigwedge \rho_i\right)_{\gamma}(X) := 
		\inf \left\{
			\sum_{i=1}^N \gamma_i\rho_i(X_i); \; \; X_1,\ldots,X_N \in L^p, \; \sum_{i=1}^N X_i = X
		\right\}
\end{align*}
(see page 271, Equation (11.25)) for $\gamma=(\gamma_i) \in \bR^N_{>0}$ and for some $p \geq 1$. 

Observe that aggregation in our context would not work without the dilation weights ${1}/{w^a}$ in the argument of the driver. This can be seen in Step 2 of the proof of Theorem \ref{theorem:EMPR_aggregate_2_2}. The reason is that $\wt G^a$ is the sum of $g^a$ with the strategies plugged in as arguments and of 
an additional term with the strategies multiplied by the weights. For the aggregation as a single strategy this adjustment is necessary.
\end{remark}

\subsection{A shortcut to the EMPeR in the case of entropic risk measures} 

In the previous subsection we gave a result on the existence and uniqueness of the equilibrium market price of risk via the inf-convolution of the risk measures, for general preferences. 
In the particular case of the entropic risk measure, the general computations are considerably simpler and an easier path allows to compute what the EMPeR $\theta^R$ is (if it exists) without the representative agent.  Although the BSDE for the representative agent derived above will appear in the following computations, with only these computations one cannot show that the computed $\theta$ is indeed an EMPR.
This shorter path consists, as was hinted in Section \ref{subsubsec-prescience-on-shortcut}, in a direct linear combination of the BSDEs \eqref{agentresidualBSDEtransf} with the minimized driver $\wt{G}^a$ given by \eqref{equation-minimized.risk.driver.for.entropic.agents-general.case}.

Following the computations from Section \ref{example-general-lambdas}, we see that the market clearing condition requires 
\begin{align*}
		0 = \sum_{a\in\bA} \pi^{*,a,2}_t 
				= \sum_{a\in\bA} \frac1{1+\wt \lambda^a}\frac{\wt Z^{a,2}_t + \gamma_a \theta^R_t}{ \kappa^{R}_t}
		\quad \Leftrightarrow \quad 
		\theta^R_t 
				= -\frac{\sum_{a\in \bA}\frac{\wt Z^{a,2}_t}{(1+\wt\lambda^a)}}{\sum_{a\in \bA}\frac{\gamma_a}{(1+\wt\lambda^a)}} 
				= -\frac{\sum_{a\in \bA} w^a \wt Z^{a,2}_t}{\gamma_R} ,
\end{align*}
if we define $\gamma_R = \sum_{a\in \bA} w^a \gamma_a$, with $w^a = 1/(\Lambda (1+\wt{\lambda}^a))$ and $\Lambda = \sum_{a\in \bA} {1}/({1+\wt{\lambda}^a})$. Notice that here we do not need to normalize the family $w=(w^a)$ so that $\sum_{a\in \bA} w^a = 1$, since we are not considering an aggregated risk measure. Any rescaling $\Lambda'$ of $w$ would give the same $\theta^R$. We present it in this way for consistency with the general case.

Now, replacing the term $\theta^R$ by the above value in the minimized driver given by \eqref{equation-minimized.risk.driver.for.entropic.agents-general.case}, we find that the optimal risk processes for each agent solve the BSDEs with driver given by
	\begin{align}		\label{eq:changeddriverG-aggregationG}
		\wt{G}^a(t,\wt{Z}^\bA_t) 
			= -\frac{\gamma_a}{2} \big(\theta^S_t\big)^2 
			  - \wt{Z}^{a,1}_t\theta^S_t 
				+ \frac{1}{\gamma_R} \; \wt{Z}^{a,2}_t \; 
				          \bigg(\sum_{b\in \bA} w^b \wt Z^{b,2}_t\bigg) 
			  - \frac{\gamma_a}{2\gamma_R^2} 
				          \bigg(\sum_{b\in \bA} w^b \wt Z^{b,2}_t\bigg)^2.
	\end{align}
The BSDEs with these drivers form a system of $N$ coupled BSDEs with quadratic growth, which, in general, are difficult to solve, see \cite{EspinosaTouzi2015}, \cite{Espinosa2010}, \cite{FreiDosReis2011} or more recently \cites{Frei2014,KramkovPulido2016}. 
Fortunately, one can take advantage of the structure of \eqref{eq:changeddriverG-aggregationG} and find a simpler BSDE for the process $(\wh{Y}^w,\wh{Z}^w) = \sum_{a\in \bA} w^a (\wt{Y}^a,\wt{Z}^a)$. It is easily seen that $\wh{Y}^w_T=-\sum_{a\in \bA} w^a(H^a+nH^D/N)=-H^w$, as in \eqref{eq:rep-ag-term-cond}.
Linearly combining the BSDEs \eqref{agentresidualBSDEtransf} with drivers expressed as in \eqref{eq:changeddriverG-aggregationG}, we find
	\begin{align}
	\label{eq-help-rep-ag}
		-\ud \wh{Y}^w_t 
		= 
		\Big[ -\frac{\gamma_R}{2} \big(\theta^S_t\big)^2 - \wh{Z}^{w,1}_t\theta^S_t + \frac{1}{2\gamma_R}\big(\wh{Z}^{w,2}_t\big)^2  \Big] \udt - \langle \wh{Z}^w_t, \udwt\rangle 
		\quad \text{with} \quad
		\wh{Y}^w_T 
		=
		- H^w.
	\end{align}
This is exactly the same BSDE as in Example \ref{example_EntropicOptimizedDriver}. Given that $H^w$ and $\theta^S$ are bounded, this BSDE falls in the standard class of quadratic growth BSDE and the existence and uniqueness of $(\wh{Y}^w,\wh{Z}^w)$ is easily guaranteed. This allows one to compute $\theta^R$ as $-\wh{Z}^{w,2}/\gamma_R$ and in turn one can finally solve the BSDEs giving the minimized risk processes for each agents, using the driver $\wt{G}^a$ as given by \eqref{equation-minimized.risk.driver.for.entropic.agents-general.case}.

\begin{remark}[No trade-off between risk tolerance and performance concern rate] 
Each agent's individual preferences are specified by the parameters $\gamma_a$ and $\lambda^a$, i.e.  risk tolerance and performance concern respectively. One may ask whether a parametric relation between those parameters exists such that an agent with $(\gamma_a,\lambda^a)$ and another agent with $(\gamma_{b}, \lambda^{b})$ would exhibit the same behaviour and have the same optimal strategies. Indeed, in most formulas the two parameters appear as coupled. However, one can see that the terminal condition $H^w$ is independent of the risk tolerance parameter $\gamma^{\cdot}$, hence by changing $\lambda^a$ and $\gamma^a$ of any one fixed agent $a \in \bA$, one cannot obtain the same outcome.
\end{remark}

%
%
%

%
%
%
\section{Further results on the entropic risk measure case}
\label{section-EntropicCase}
In this section we investigate further the entropic case. We introduce a structure that allows to use the theory developed in the previous section and, moreover, to design $H^D$ such that Assumption \ref{assump-kappa} holds true. The ultimate goal of this section is to understand how the concern rates $\lambda$ affect prices and risks. The first two parts of the section verify that Assumption \ref{assump-kappa} holds and the third sheds light on the behavior of the aggregated risk and derivative price as the parameters vary. 

We now make further assumptions (commented below) on the structure of the random variables introduced Section \ref{sec:model}. Namely, we assume that the endowments $H^a$ for $a\in \bA$ and the derivative $H^D$ have the form 
\begin{align}
\label{agents-payoff-Ha-bond-payoff-HD}
    H^a 
		= h^a (S_T,R_T) 
		\quad \text{and}\quad 
    H^D 
		= h^D (S_T,R_T) 
\end{align}
for some deterministic functions $h^\cdot$. 
This structure for the derivative and endowments is interpreted as each agent receiving a lump sum at maturity time $T$.

To ease the analysis we will assume throughout a Black-Scholes market (i.e $\mu^S,\sigma^S$ are constants). Such an  assumption is not strictly necessary for the results we obtain here, but we wish to focus on the qualitative analysis and not on obfuscating mathematical techniques. Throughout the rest of this section the next assumption holds.
\begin{assumption}
\label{assump-DiffData}
Let Assumption \ref{assump-StandardData} hold. 
Let $\sigma^S\in(0,\infty)$ and $\mu^R,\mu^S\in \bR$ (and hence also $\theta^S \in \bR)$.  For any $a\in\bA$ the functions  $h^D,h^a\in C^1_b(\bR^2;\bR)$ are strictly positive, their derivatives are uniformly Lipschitz continuous w.r.t.~the non-financial risk and 
satisfy $(\partial_{x_2} h^D)(x_1,x_2)\neq 0$ for any $(x_1,x_2)\in \bR\times \bR$.
\end{assumption}
The assumption concerning the strict positivity of the involved maps or that $\partial_{x_2} h^D\neq 0$ are the key in proving that Assumption \ref{assump-kappa} is indeed verified for the example we present. 
The assumption on the form of $H^a$ and $H^D$ reduces the BSDE to the Markovian case, giving us access to the many existing BSDE regularity results, which we will use below in their full scope. 
It would be possible (this is left open to future research) to remain in the non-Markovian setting of general $\cF_T$-measurable $H^D$ and $H^a$ and use link between non-Markovian BSDE and path-dependent PDEs (see e.g. \cite{EkrenKellerTouziEtAl2014}). Indeed, tools on general Malliavin differentiability of BSDE solutions in the non-Markovian setting can be found in \cite{AnkirchnerImkellerReis2010} or in more generality in \cites{DosReis2011,MastroliaPossamaiReveillac2014}. 

We recall that our goal, in the example below, is to analyze the impact of the parameters $\lambda^\cdot,n,\gamma_\cdot$ on the risk processes (single and representative agent), derivative price process and EMPeR. 

\begin{remark}[On notation for the section]
In this section we work mainly with the representative agent BSDE (see Example \ref{example_EntropicOptimizedDriver} or \eqref{eq-help-rep-ag}) and the derivative price BSDE \eqref{BSDEforDerivativeHD}.

To avoid a notation overload in what the BSDE for the representative agent is concerned, we drop the tilde notation and define $(Y^w,Z^w)$ as the solution to the mentioned BSDE; not to be confused with \eqref{equation-risk.BSDE.for.the.representative.agent} which plays no role here. The solution to the derivative price BSDE is denoted by $(B,\kappa)$.
\end{remark}

\subsection{The aggregated risk}

The BSDE \eqref{eq-help-rep-ag} is not difficult to analyze given the existing literature on BSDEs of quadratic growth. Recall that $\theta^S\in \cS^\infty$ and $Y^w_T\in L^\infty$ (since it is a weighted sum of bounded random variables). We shortly recall that $\bD^{1,2}$ is the space of 1st order Malliavin differentiable processes and $D$ denotes the Malliavin derivative operator, we point the reader to Appendix \ref{sec-MalliavinCalculus} for further Malliavin calculus references.
\begin{theorem} 
\label{theo-thetaRisbounded}
The BSDE \eqref{eq-help-rep-ag} has a unique solution $(Y^w,Z^w)\in (\cS^\infty \cap \bD^{1,2}) \times (\cHBMO\cap \bD^{1,2})$. 
Moreover, there exists a strictly negative function $u^w\in C^{0,1}([0,T]\times \bR^2, \bR)$ such that for any $t\in[0,T]$
	\begin{align*} 
	Y^w_t=u^w(t,S_t,R_t) 
	\qquad \text{and}\qquad 
	Z^{w,2}_t = (\partial_{x_2} u^w)(t,S_t,R_t)b,\qquad \bP\text{-a.s.}.
	\end{align*}
\begin{enumerate}[i)]
	\item For any $r,u\in [0,t]$, $t\in[0,T]$ it holds that 
		\begin{align*}
		D^{W^R}_u Y^w_t=D^{W^R}_r Y^w_t\quad \bP\text{-a.s.}
		\qquad \text{and}\qquad
		D^{W^R}_u Z^w_t=D^{W^R}_r Z^w_t \quad\bP \otimes Leb\text{-a.e.}
		\end{align*}
and in particular $D^{W^R}_t Y_t=Z^w_t$ $\bP$-a.s. for any $t\in[0,T]$.
	\item  There exists a constant $C> 0$ such that $|Z^{w,2}_t|\leq C$ for any $t\in[0,T]$, i.e.  $Z^{w,2}\in \cS^\infty$ and $\partial_{x_2} u^w\in C_b$. Moreover, $\theta^R \in \cS^\infty$. 
	\item The process $D^{W^R}_\cdot Z^w$ belongs to $\cHBMO$.
\end{enumerate}
\end{theorem}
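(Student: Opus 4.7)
The plan is to treat \eqref{eq-help-rep-ag} as a scalar BSDE with bounded Markovian terminal $-H^w = -h^w(S_T,R_T)$, where $h^w := \sum_{a\in\bA} w^a(h^a+(n/N)h^D)$ is strictly positive and $C^1_b$ under Assumption~\ref{assump-DiffData}, and whose driver is affine in $z^1$ with deterministic bounded coefficient $-\theta^S$ and convex quadratic in $z^2$. Existence and uniqueness in $\cS^\infty\times\cHBMO$ then follow from the Kobylanski--Briand--Hu framework. The Markov representation $Y^w_t=u^w(t,S_t,R_t)$ is immediate from the Markov property of the diffusion $(S,R)$; the $C^{0,1}$ regularity of $u^w$ and the identification $Z^{w,2}_t=\partial_{x_2} u^w(t,S_t,R_t)\,b$ are obtained from the Malliavin step below via the chain rule. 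For strict negativity I would linearize the driver by introducing the candidate $\theta^R_t:=-Z^{w,2}_t/\gamma_R$, so that $\theta=(\theta^S,\theta^R)\in\cHBMO$ (since $Z^{w,2}\in\cHBMO$ and $\theta^S$ is bounded). The BSDE rewrites as $-\ud Y^w_t = -\tfrac{\gamma_R}{2}|\theta_t|^2\udt-\langle Z^w_t,\udws^\theta\rangle$ under the Girsanov measure $\bP^\theta$, whose density is a true martingale by the BMO property of $\theta$. Conditional expectation then gives
\[
Y^w_t = -\bE^\theta\!\left[H^w + \int_t^T \tfrac{\gamma_R}{2}|\theta_s|^2\uds\,\bigg|\,\cF_t\right] < 0,
\]
since $H^w>0$.

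For Malliavin differentiability I would invoke \cites{AnkirchnerImkellerReis2010,ImkellerDosReis2010,DosReis2011} on quadratic BSDEs with $C^1_b$ terminal of Lipschitz gradient: one obtains $(Y^w,Z^w)\in\bD^{1,2}\times\bD^{1,2}$, and the Malliavin derivatives solve the \emph{linear} BSDE
\[
-\ud(D_u Y^w_t)=\bigl[-\theta^S\,D_u Z^{w,1}_t+(Z^{w,2}_t/\gamma_R)\,D_u Z^{w,2}_t\bigr]\udt-\langle D_u Z^w_t,\udwt\rangle.
\]
Differentiating in the $W^R$ direction and using $D^{W^R}_u S_T=0$ together with $D^{W^R}_u R_T=b$ (valid because $\mu^R$ is constant), the terminal condition becomes $D^{W^R}_u Y^w_T=-b\,\partial_{x_2}h^w(S_T,R_T)$, which is bounded and crucially independent of $u$. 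Since the coefficients of the linear BSDE also do not depend on $u$, uniqueness forces $D^{W^R}_u Y^w_t$ and $D^{W^R}_u Z^w_t$ to be independent of $u\in[0,t]$, giving part (i); the identity $D^{W^R}_t Y^w_t=Z^{w,2}_t$ is the usual trace form of the Clark--Ocone formula for BSDEs.

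For parts (ii) and (iii), the linear BSDE above has bounded terminal and drift coefficient $(-\theta^S,-\theta^R)$, which lies in $\cHBMO$. A Girsanov change of measure converts $D^{W^R}_u Y^w$ into a bounded martingale, so that $\|D^{W^R}_u Y^w\|_{\cS^\infty}\le b\,\|\partial_{x_2}h^w\|_\infty$ uniformly in $u$; specialising $u=t$ yields $|Z^{w,2}_t|\le C$, hence $Z^{w,2}\in\cS^\infty$, $\partial_{x_2}u^w\in C_b$ via the Markov representation, and $\theta^R=-Z^{w,2}/\gamma_R\in\cS^\infty$. The BMO bound $D^{W^R}_\cdot Z^w\in\cHBMO$ then follows from the classical energy estimate: applying It\^o to $|D^{W^R}_u Y^w|^2$, taking conditional expectation and absorbing the cross term $2\,D^{W^R}_u Y^w\,\langle(\theta^S,\theta^R),D^{W^R}_u Z^w\rangle$ via Young's inequality into $\tfrac12|D^{W^R}_u Z^w|^2+C^2|\theta|^2$, one concludes from the BMO norm of $\theta$ and the sup-bound on $D^{W^R}_u Y^w$. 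The main technical obstacle is that the original driver is quadratic, so none of the Malliavin estimates are directly Lipschitz-driven; they must be routed through the BMO norm of $Z^{w,2}$, and what makes this tractable is precisely that $\theta^S$ is constant and $h^w$ has a uniformly bounded gradient, which places the Malliavin BSDE in the linear-BMO regime.
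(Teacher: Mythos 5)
Your proposal is correct and follows the same overall architecture as the paper's proof: existence and uniqueness from the quadratic-BSDE theory with bounded terminal condition, Malliavin differentiability from \cite{ImkellerDosReis2010}/\cite{AnkirchnerImkellerReis2010}, the linearized Malliavin BSDE under the Girsanov measure driven by $(\theta^S,-Z^{w,2}/\gamma_R)$ to get the $u$-independence and the uniform bound on $D^{W^R}_uY^w$ (hence on $Z^{w,2}$ by the trace identity), and a BMO estimate for $D^{W^R}Z^w$ from the resulting linear-BMO structure. Two sub-steps differ in a worthwhile way. For strict negativity the paper invokes the strict comparison principle for quadratic BSDEs (\cite{MaYao2010}), whereas you linearize with the candidate $\theta^R=-Z^{w,2}/\gamma_R$ and obtain the closed-form representation $Y^w_t=-\bE^\theta[H^w+\int_t^T\tfrac{\gamma_R}{2}|\theta_s|^2\uds\,|\,\cF_t]<0$; this is more explicit (and in fact reproves the representation the paper uses elsewhere for entropic agents), at the cost of having to justify that $\cE^\theta$ is uniformly integrable and that $\int\langle Z^w,\ud W^\theta\rangle$ is a true $\bP^\theta$-martingale -- both follow from the BMO properties you correctly invoke. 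For part (iii) you run the energy estimate by hand where the paper quotes Theorem 2.6 of \cite{ImkellerDosReis2010} applied to the linear Malliavin BSDE; these are the same estimate. The one point you should tighten: the $C^{0,1}$ regularity of $u^w$ cannot be "obtained from the Malliavin step via the chain rule" -- the chain rule computation $D^{W^R}_uY^w_t=(\partial_{x_2}u^w)(t,S_t,R_t)\,b$ takes $u^w\in C^{0,1}$ as an input. That regularity comes from the parametric (variational) differentiability results for Markovian quadratic BSDEs (Theorem 7.6 in \cite{AnkirchnerImkellerReis2010} together with \cite{ImkellerDosReis2010}), which is exactly what the paper cites; since you cite the same sources in the following sentence this is a presentational slip rather than a gap.
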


\begin{proof}
Let $a\in \bA$ and $0\leq u\leq t\leq T$. Existence and uniqueness of the SDEs 
\eqref{diffusion-R} and \eqref{diffusion-S} follow from Proposition \ref{prop-SDEresults}.

By assumption we have $Y^w_T\in L^\infty$ and $\theta^S\in \cS^\infty$ which allows to quote Theorem 2.6 in \cite{ImkellerDosReis2010} and hence that $(Y^w,Z^w)\in \cS^\infty \times \cHBMO$. Moreover, given that $Y^w_T<0$, a strict comparison principle for quadratic BSDEs (see e.g. \cite{MaYao2010}*{Property (5)}) yields easily that $Y_t^w<0$ for any $t\in[0,T]$ and hence that $u^w<0$. 

Proposition \ref{prop-SDEresults} ensures that the payoffs $H^D$ and $H^a$, and hence $H^w$, are  Malliavin differentiable with bounded Malliavin derivatives. Combining this further with $\theta^S\in \bR$, the Malliavin differentiability of \eqref{eq-help-rep-ag} follows from  Theorem 2.9 in \cite{ImkellerDosReis2010}. Under Assumption \ref{assump-DiffData} the results in \cite{ImkellerDosReis2010} (or Chapter 4 of \cite{DosReis2011}) along with Theorem 7.6 in \cite{AnkirchnerImkellerReis2010} yield the Markov property for $Y^w$ and the  parametric differentiability result for the (quadratic) BSDE. 

\emph{$\triangleright$ Proof of i):} 
Since $u^w\in C^{0,1}$ by direct application of the Malliavin differential we have for $0\leq u\leq t\leq T$
\begin{align*}
D^{W^R}_u Y^w_t
& 
=
D^{W^R}_u \big(u^w(t,S_t,R_t)\big)
\\
&
=
(\partial_{x_2} u^w)(t,S_t,R_t)(D^{W^R}_u R_t)
=
(\partial_{x_2} u^w)(t,S_t,R_t) b 
=
D^{W^R}_t Y^w_t.
\end{align*}
It now follows that $D^{W^R}_t Y^w_t=D^{W^R}_u Y^w_t= Z^w_t$ for any $0 \leq u\leq t\leq T$ $\bP$-a.s.. 

\emph{$\triangleright$ Proof of ii):}
Define now the probability measure $\bQ$ (equivalent to $\bP$) as 
\begin{align}
\label{eq:DifferentiabilityMeasureChange}
\frac{\ud \bQ}{\ud \bP}= \cE\left( 
-\int_0^T \left\langle  (\theta^S_s,-\frac{Z^{w,2}_s }{\gamma_R}) ,
 \udws \right \rangle
\right).
\end{align} 
The measure $\bQ$ is well defined since $\theta^S\in \cS^\infty$ and
 $Z^{w,2}\in \cHBMO$. Then for $0\leq u\leq t\leq T$ we have (Theorem 2.9 in \cite{ImkellerDosReis2010})
\begin{align}
\label{eq:BSDEforDWRYw}
D^{W^R}_u Y^w_t
&= D^{W^R}_u Y^w_T + 
\int_t^T 
[-\theta^S_s D^{W^R}_u Z^{w,1}_s 
+\frac1{\gamma_R} Z^{w,2}_s D^{W^R}_u Z^{w,2}_s] \uds
-\int_t^T \langle D^{W^R}_u Z^{w}_s, \udws\rangle 
\\
\nonumber
&
\qquad \Rightarrow 
D^{W^R}_u Y^w_t = \bE^{\bQ}[D^{W^R}_u Y^w_T| \cF_t].
\end{align}
The results in Proposition \ref{prop-SDEresults} and the definition of $Y^w_T$ imply that $|D^{W^R}_u Y^w_t|<C$. Path regularity results for BSDEs along with their usual representation formulas (see \cite{ImkellerDosReis2010}) yield that $(D^{W^R}_t Y_t)= (Z^2_t)\in \cS^\infty$; the boundedness of $\partial_{x_2} u^w$ follows in an obvious way. As a consequence, $\theta^R\in\cS^\infty$ since $Z^{w,2}\in\cS^\infty$ and \eqref{eq:entropic-EMPeR} holds. 

\emph{$\triangleright$ Proof of iii):}
Using now the fact that $\theta^S,Z^{w,2}\in \cS^\infty$, we apply Theorem 2.6 in \cite{ImkellerDosReis2010} to \eqref{eq:BSDEforDWRYw} and obtain that  $D^{W^R}_\cdot Z^w \in\cHBMO$. The BMO norm of $D^{W^R} Z^w$ depends only on some real constants and $T$, $\gamma_R$, $\sup_u \|D^{W^R}_u Y^w_T\|_{L^\infty}$ and $\|(\theta^S,Z^{w,2})\|_{\cS^\infty\times \cS^\infty}$ (see again Theorem 2.6 in \cite{ImkellerDosReis2010}).
\end{proof}
In the next result we show that the mapping $x_2\mapsto (\partial_{x_2}u^w)(t,x_1,x_2)$ is Lipschitz. Denote by $R$ and $\wt R$ the solutions to \eqref{diffusion-R} with $R_0=r_0$ and $R_0=\wt r_0$ respectively; denote as well by $(Y,Z)$ and $(\wt Y,\wt Z)$ the solutions to BSDE \eqref{eq-help-rep-ag} for the underlying processes $R$ and $\wt R$ respectively. 
\begin{proposition}\label{propo:DDZw2isbounded}
For any $(t,x_1)\in[0,T]\times \bR$ the map $\bR\ni x_2\mapsto (\partial_{x_2} u^w)(t,x_1,x_2)$ is Lipschitz continuous uniformly in $t$ and $x_1$. In particular the process $D^{W^R} Z^w$ is $\bP$-a.s. bounded.
\end{proposition}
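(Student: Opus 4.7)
The idea is to exploit the Markov property of BSDE \eqref{eq-help-rep-ag} and differentiate its solution twice with respect to the initial value $r_0$ of $R$. Since $R_t=r_0+\int_0^t\mu^R_s\,\uds+bW^R_t$, we have $\partial_{r_0}R_t=1$, so the chain rule identifies $\nabla Y_t:=\partial_{r_0}Y^w_t=(\partial_{x_2}u^w)(t,S_t,R_t)$. By the Markov property applied at an arbitrary $(t,x_1)$, the uniform Lipschitz continuity of $x_2\mapsto(\partial_{x_2}u^w)(t,x_1,x_2)$ reduces to the uniform Lipschitz continuity of $r_0\mapsto\nabla Y_0(r_0)$.

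First, the pair $(\nabla Y,\nabla Z):=\partial_{r_0}(Y^w,Z^w)$ solves a linear BSDE with terminal condition $-(\partial_{x_2}h^w)(S_T,R_T)$, bounded thanks to Assumption \ref{assump-DiffData}, and driver $-\theta^S_t\,\nabla Z^1_t+(Z^{w,2}_t/\gamma_R)\,\nabla Z^2_t$ whose coefficients are bounded by Theorem \ref{theo-thetaRisbounded}. Under the equivalent measure $\bQ$ from \eqref{eq:DifferentiabilityMeasureChange} this rewrites as $\nabla Y_t=\bE^\bQ[-(\partial_{x_2}h^w)(S_T,R_T)\mid\cF_t]$, and a standard BMO estimate for linear BSDEs yields $\nabla Z\in\cHBMO$ with norm controlled solely by $\|\partial_{x_2}h^w\|_\infty$, $\|\theta^S\|_{\cS^\infty}$, $\|Z^{w,2}\|_{\cS^\infty}$ and $T$, hence independently of $r_0$.

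Next, since $u^w$ is only a priori $C^{0,1}$, I would not differentiate a second time but work with difference quotients. For $\delta\neq 0$, decorate with a tilde the objects built from $r_0+\delta$ and set $\bar Y=(\widetilde{\nabla Y}-\nabla Y)/\delta$, $\bar Z=(\widetilde{\nabla Z}-\nabla Z)/\delta$. The pair $(\bar Y,\bar Z)$ solves a linear BSDE whose terminal is bounded by the Lipschitz constant of $\partial_{x_2}h^w$, with the same bounded linear coefficients as above plus the inhomogeneous source
\begin{equation*}
I_t:=\frac{1}{\gamma_R}\,\widetilde{\nabla Z}^{\,2}_t\cdot\frac{\widetilde{Z}^{w,2}_t-Z^{w,2}_t}{\delta}.
\end{equation*}
Under $\bQ$ the linear part vanishes and $\bar Y_0=\bE^\bQ[\bar Y_T-\int_0^T I_s\,\uds]$. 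A Cauchy--Schwarz estimate, combined with (i) the uniform $\cHBMO$ bound on $\widetilde{\nabla Z}^{\,2}$ from the first step and (ii) the standard Lipschitz-in-$r_0$ stability of \eqref{eq-help-rep-ag} (which gives a uniform $\cH^2$ bound on $(\widetilde{Z}^{w,2}-Z^{w,2})/\delta$, whose limit as $\delta\to 0$ is the first-variational $\nabla Z^{w,2}\in\cHBMO$), yields $|\bar Y_0|\leq C$ uniformly in $(r_0,\delta)$. The transfer from $\cHBMO(\bP)$ to $\cH^2(\bQ)$ is granted by the reverse Hölder property of the density of $\bQ$, whose logarithm is in BMO since $\theta^S$ and $Z^{w,2}/\gamma_R$ are bounded. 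Letting $\delta\to 0$ and invoking the Markov property delivers the claimed Lipschitz property.

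The boundedness of $D^{W^R}Z^w$ follows as a corollary: by the chain rule and $D^{W^R}_uR_t=b$ for $u\leq t$, one has $D^{W^R}_uZ^{w,2}_t=b^2(\partial_{x_2}^2u^w)(t,S_t,R_t)$ a.e., which is bounded by the Lipschitz-in-$x_2$ estimate just established; the analogous argument for the first component uses the same second-variational reasoning applied to the representation $Z^{w,1}_t=\sigma^S S_t(\partial_{x_1}u^w)(t,S_t,R_t)$. The main obstacle is the uniform control of the inhomogeneous term $I_t$ under $\bQ$: it relies on delicately combining the BMO stability of the first variational system with the parameter stability of the original quadratic BSDE, and crucially on the bounded-BMO structure allowing the passage to $\bQ$.
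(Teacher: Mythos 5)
Your proof is correct and follows the same overall architecture as the paper's: compare the solutions started from $r_0$ and $r_0+\delta$, pass to the first-variation (linear) BSDE, remove the linear part via the measure change to $\bQ$ from \eqref{eq:DifferentiabilityMeasureChange}, and control what remains through BMO estimates. The one genuine difference is how the bilinear coupling term is closed. The paper works with Malliavin derivatives and exploits the identity $\delta D_u Y_t = Z^{w,2}_t - \wt Z^{w,2}_t$ (from part i) of Theorem \ref{theo-thetaRisbounded}), which turns the source term $\frac{1}{\gamma_R}(Z^{w,2}-\wt Z^{w,2})\,D^{W^R}Z^{w,2}$ into a term that is \emph{linear in the unknown} $\delta DY$ with BMO coefficient $D^{W^R}Z^{w,2}$; this is then absorbed exactly by the integrating factor $e_t$ of \eqref{eq:eprocessbasedonZw2}, yielding the closed formula $\delta D_0Y_0=\bE^\bQ[e_T\,\delta D_0Y_T]$ and the Lipschitz bound with no residual source. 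You instead treat the source $I_t$ as an exogenous inhomogeneity and estimate it by Cauchy--Schwarz, which requires importing an extra ingredient: the a priori stability estimate $\|(\wt Z^{w,2}-Z^{w,2})/\delta\|_{\cHBMO}\le C$ for the quadratic BSDE \eqref{eq-help-rep-ag} under a bounded perturbation of the terminal condition. That estimate is standard and available in the references the paper already relies on, so your route is sound; it trades the paper's structural identity $D^{W^R}_tY^w_t=Z^{w,2}_t$ for a quantitative stability lemma and avoids the integrating factor altogether (your $\bQ$-representation of $\bar Y_0$ has a sign slip in front of $\int_0^T I_s\,\uds$, but this is immaterial for the bound). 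Both arguments extend to all $t$ by the Markov property, and both leave the first component of $D^{W^R}Z^w$ (which involves $\partial_{x_1x_2}u^w$ rather than $\partial_{x_2x_2}u^w$) at the same level of informality; you are in fact slightly more explicit than the paper in flagging that a separate second-variation argument is needed there.
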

\begin{proof}
Let $0\leq u\leq t\leq T$ and define $\delta DY:=D^{W^R} Y^w-D^{W^R} \wt Y^w$, $\delta DZ^i:=D^{W^R} Z^{w,i}-D^{W^R} \wt Z^{w,i}$ for $i\in\{1,2\}$ and (intuitively) $\delta DZ:=(\delta DZ^1,\delta DZ^2)$. Then, following from \eqref{eq:BSDEforDWRYw} written under $\bQ$ from \eqref{eq:DifferentiabilityMeasureChange}, we have 
\begin{align*}  
\delta D_u Y_t
&= \delta D_u Y_T-\int_t^T \langle \delta D_uZ_s ,\udws^\bQ\rangle +
\int_t^T 
\frac1{\gamma_R} (Z^{w,2}_s -\wt Z^{w,2}_s )D^{W^R}_u Z^{w,2}_s
 \uds.
\end{align*}
Define now the process
\begin{align}
\label{eq:eprocessbasedonZw2} 
	e_t:=\exp\left\{\int_0^t \frac1{\gamma_R} D^{W^R}_u Z^{w,2}_s \uds \right\},
	\quad t\in[0,T] 
	\qquad \text{with}\qquad
	(e_t)\in \cH^p,\quad \forall p>1,
\end{align} 
where the $\cH^p$ integrability of $(e_t)$ follows from Lemma \ref{lemma-BMOproperties}. Observe next that by the results of Theorem \ref{theo-thetaRisbounded} one has $\delta D_u Y_t=\delta D_t Y_t= Z^{w,2}_t -\wt Z^{w,2}_t$. Applying It\^o's formula to $(e_t \delta D_\cdot Y_t)$, using the just mentioned identity and taking $\bQ$-conditional expectations it follows at $u=t=0$ that
\begin{align*}  
|(\partial_{x_2} u^w)(0,s_0,r_0)-(\partial_{x_2} u^w)(0,s_0,\wt r_0)|
=
\frac1b|(Z^{w,2}_0 -\wt Z^{w,2}_0)|
=
|\frac1b \bE^\bQ\left[ e_T \delta D_0 Y_T\right]|
\leq C|r_0-\wt r_0|.
\end{align*}
The last line is a consequence of Proposition \ref{prop-SDEresults} combined with the fact that $\bE^\bQ[ e_T^p ]$ ($\forall p> 1$) is finite due to the BMO properties of $D^{W^R} Z^{w,2}$, see Lemma \ref{lemma-BMOproperties}. The constant $C$ is independent of $u,r_0,\wt r_0$ and $s_0$. Although $D^{W^R} Z^{w,2}$ is a BMO martingale under $\bP$, the integrability still carries under $\bQ$; this is the same argument as in the final step of the proof of Lemma 3.1 in \cite{ImkellerDosReis2010} (see also Lemma 2.2 and Remark 2.7 of the cited work).

The extension of the above result to the whole time interval $[0,T]$ follows via the Markov property of the BSDE solution. This relates to the close link between BSDEs of the Markovian type and certain classes of quasi-linear parabolic PDEs (see e.g. Section 4 in \cite{ElKarouiPengQuenez1997}).

Finally, the boundedness of $D^{W^R} Z^w$ follows from the Lipschitz property of $x_2\mapsto (\partial_{x_2} u^w)(\cdot,\cdot,x_2)$ and the boundedness of $D^{W^R} R$, see Proposition \ref{prop-SDEresults}, ii).
\end{proof}

\subsection{The EMPR and the derivative's BSDE}
\noindent
We next show that Assumption \ref{assump-DiffData} implies Assumption \ref{assump-kappa} holds for the model with entropic risk.
\begin{theorem}[Market completion]	
	\label{theo:kappaRisEMPR}
The derivative $H^D$ completes the market, i.e. $\kappa^R\neq0$ $\bP$-a.s. for any $t\in[0,T]$. Moreover, $\kappa^R\in \cS^\infty$ and $\sgn(\kappa^R_t)=\sgn(b\partial_{x_2}h^D)$ for any $t\in[0,T]$.
\end{theorem}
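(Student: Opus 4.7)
The strategy is Markovian--Malliavin: represent $B^\theta$ as a deterministic function of $(S,R)$, identify $\kappa^R$ with $b$ times the spatial derivative in the non-financial variable, and then exploit the strict monotonicity of $h^D$ in that variable to conclude.

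By Theorem \ref{theo-thetaRisbounded} and \eqref{eq:entropic-EMPeR}, the EMPeR has the form $\theta^R_t = \psi(t,S_t,R_t)$ with $\psi(t,x_1,x_2) := -b\,(\partial_{x_2} u^w)(t,x_1,x_2)/\gamma_R$ bounded and Lipschitz in $x_2$ uniformly (by Proposition \ref{propo:DDZw2isbounded}). Since $\mu^S$, $\sigma^S$, $\mu^R$, $b$ are constants, under $\bP^\theta$ the pair $(S,R)$ is the strong solution of the Markovian SDE
\[
\ud S_t = \sigma^S S_t \,\ud W^{S,\theta}_t,\qquad \ud R_t = \big(\mu^R - b\,\psi(t,S_t,R_t)\big)\,\udt + b\,\ud W^{R,\theta}_t,
\]
with $S$ independent of $R$'s initial condition. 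Hence $B^\theta_t = v(t,S_t,R_t)$, where $v(t,x_1,x_2) := \bE^\theta\big[h^D(S^{t,x_1}_T, R^{t,x_1,x_2}_T)\big]$.

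The next step is to differentiate $v$ in $x_2$. The flow derivative $\Phi_s^{t,x_2} := \partial_{x_2} R^{t,x_1,x_2}_s$ (understood via the Malliavin flow representation of Proposition \ref{prop-SDEresults}, or the variational SDE associated with a Lipschitz drift) satisfies a \emph{pathwise} linear ODE without diffusion term (since $b$ is constant),
\[
\ud\Phi_s^{t,x_2} = -b\,(\partial_{x_2}\psi)(s, S_s, R^{t,x_1,x_2}_s)\,\Phi_s^{t,x_2}\,\uds,\qquad \Phi_t^{t,x_2}=1,
\]
so $\Phi_T^{t,x_2} = \exp\{-b\int_t^T (\partial_{x_2}\psi)(s,S_s,R_s)\,\uds\}$ is strictly positive and uniformly bounded from above and below by the uniform Lipschitz bound on $\psi$ in $x_2$. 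By dominated convergence and boundedness of $\partial_{x_2} h^D$,
\[
\partial_{x_2} v(t,x_1,x_2) = \bE^\theta\big[(\partial_{x_2} h^D)(S^{t,x_1}_T, R^{t,x_1,x_2}_T)\cdot \Phi_T^{t,x_2}\big].
\]
Under Assumption \ref{assump-DiffData}, $\partial_{x_2} h^D$ is continuous and nowhere zero, hence has constant sign; combined with $\Phi^{t,x_2}_T>0$, the right-hand side is bounded, nonzero and satisfies $\sgn(\partial_{x_2} v) = \sgn(\partial_{x_2} h^D)$.

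To finish, the BSDE \eqref{BSDEforDerivativeHD} for $(B,\kappa)$ is linear with bounded coefficients and Malliavin-differentiable terminal condition $H^D$, so by the arguments in the proof of Theorem \ref{theo-thetaRisbounded} (based on \cite{ImkellerDosReis2010}), $(B,\kappa) \in \bD^{1,2}$ and $\kappa^R_t = D^{W^R}_t B_t$ $\bP$-a.s.\ for $t\in[0,T]$. Applying the Malliavin chain rule to $B_t = v(t,S_t,R_t)$, and using $D^{W^R}_t S_t=0$ (as $S$ is driven only by $W^S$) together with $D^{W^R}_t R_t = b$ (as $\mu^R$ is constant), we conclude $\kappa^R_t = b\,\partial_{x_2} v(t,S_t,R_t)$. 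Combined with the previous paragraph, this gives $\kappa^R \in \cS^\infty$, $\kappa^R_t \neq 0$ $\bP$-a.s., and $\sgn(\kappa^R_t)=\sgn(b\,\partial_{x_2} h^D)$.

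\emph{Main obstacle.} The delicate point is the flow-differentiation step: $\psi$ is only Lipschitz (not $C^1$) in $x_2$, so existence and the exponential formula for $\Phi^{t,x_2}$, as well as the $C^1$ regularity of $v$ in $x_2$ needed for the chain rule, must be handled either via Rademacher's theorem and an a.e.\ interpretation of $\partial_{x_2}\psi$, or by a smooth approximation argument using the uniform Lipschitz bound and passing to the limit. Regularity of $\psi$ in $x_1$ plays no role in the argument.
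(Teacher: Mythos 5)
Your argument is correct in substance and lands on the same key representation as the paper, namely
\begin{align*}
\kappa^R_t \;=\; \bE^{\theta}\Big[\,(\text{strictly positive exponential weight})\cdot b\,(\partial_{x_2}h^D)(S_T,R_T)\,\Big|\,\cF_t\Big],
\end{align*}
from which the non-vanishing, the boundedness and the sign of $\kappa^R$ all follow from the constant sign of $b\,\partial_{x_2}h^D$. The route to that formula, however, is genuinely different. You work forward: Feynman--Kac under $\bP^\theta$ gives $B^\theta_t=v(t,S_t,R_t)$, you differentiate the flow of $R$ in its initial condition, and the weight appears as $\Phi_T=\exp\{-b\int_t^T\partial_{x_2}\psi\,\uds\}$. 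The paper works backward: it writes the linear BSDE satisfied by $D^{W^R}_uB^\theta_t$ (Proposition \ref{MallDiffofB}, equation \eqref{eq:DWRB-cheaptrick}), multiplies by the integrating factor $e_t$ of \eqref{eq:eprocessbasedonZw2} built from $D^{W^R}Z^{w,2}$, and uses the identity $\kappa^R_s=D^{W^R}_uB^\theta_s$ to kill the remaining Lebesgue integral; the weight is $e_T/e_t$. These are dual views of the same computation (indeed $-b\,\partial_{x_2}\psi=\tfrac{1}{\gamma_R}D^{W^R}Z^{w,2}$ when both make sense), but the choice matters for exactly the obstacle you flag: your flow ODE requires a classical derivative $\partial_{x_2}\psi$, i.e.\ second-order spatial regularity of $u^w$, which is not available -- Proposition \ref{propo:DDZw2isbounded} only gives that $\partial_{x_2}u^w$ is Lipschitz in $x_2$. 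The paper's formulation needs only the \emph{Malliavin} derivative $D^{W^R}Z^{w,2}$, whose existence and boundedness are established directly (Theorem \ref{theo-thetaRisbounded} and Proposition \ref{propo:DDZw2isbounded}), so the classical-differentiability issue never arises. Your proposed repair by mollification does go through -- the uniform Lipschitz bound gives $\Phi^{\epsilon}_T$ bounded above and away from zero uniformly in $\epsilon$, and $\liminf_{\epsilon}\partial_{x_2}v_\epsilon\geq c\,\bE^\theta[\partial_{x_2}h^D(S_T,R_T)]>0$ by dominated convergence, with the Lipschitz chain rule applicable since $R_t$ has a density -- but it is strictly more work than the paper's integrating-factor argument, which you would do well to compare against. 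The remaining ingredients (the identification $\kappa^R_t=D^{W^R}_tB^\theta_t$ $\bP$-a.s.\ for every $t$, which in the paper requires the joint-continuity upgrade of Proposition \ref{MallDiffofB}, and the Malliavin chain rule with $D^{W^R}_tS_t=0$, $D^{W^R}_tR_t=b$) are used identically in both proofs.
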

Before proving the above result we need an intermediary one. Recall that BSDE \eqref{BSDEforDerivativeHD} describes the dynamics of the price process $B^\theta$, that $H^D\in L^\infty$ and $\theta \in \cS^\infty\times (\cHBMO\cap \bD^{1,2})$ (following from Assumption \ref{assump-DiffData} and Theorem \ref{theo-thetaRisbounded}).
\begin{proposition}	\label{MallDiffofB} 
The pair $(B,\kappa)$ belongs to $(\cS^\infty \cap \bD^{1,2})\times (\cHBMO \cap \bD^{1,2})$ and their Malliavin derivatives satisfy for $0\leq u\leq t\leq T$ the dynamics
\begin{align} 
\label{eq:MallDiffBkappa} 
 D^{W^R}_u B_t^{\theta}
	 & 
	= D^{W^R}_u H^D  
		- \int_t^T 
		     \kappa^R_s D^{W^R}_u \theta^R_s   
		    +\lev \theta_s ,D^{W^R}_u\kappa^\theta_s \rev 
				 \uds
	- \int_t^T \langle D^{W^R}_u \kappa_s^\theta, \ud W_s\rangle .
	\end{align}
The representation $D^{W^R}_t B^\theta_t= \kappa^R_t$ holds $\bP$-a.s. for any $0\leq t\leq T$.
\end{proposition}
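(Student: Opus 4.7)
The plan is to apply standard Malliavin calculus results for BSDEs (e.g.\ Theorem~2.9 in \cite{ImkellerDosReis2010}) to the linear BSDE \eqref{BSDEforDerivativeHD}. First I would establish well-posedness in $\cS^\infty \times \cHBMO$. The terminal condition $H^D = h^D(S_T, R_T)$ is bounded by Assumption~\ref{assump-DiffData}, and the driver $\kappa \mapsto -\lev \kappa, \theta\rev$ is linear with coefficient process $\theta = (\theta^S, \theta^R) \in \cS^\infty$: $\theta^S$ is a constant by Assumption~\ref{assump-DiffData} and $\theta^R$ was shown to belong to $\cS^\infty$ in Theorem~\ref{theo-thetaRisbounded}. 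Standard linear BSDE theory then yields $(B,\kappa) \in \cS^\infty \times \cHBMO$, the BMO property being insensitive to the equivalent change of measure $\bP \leftrightarrow \bP^\theta$ since $\theta$ is bounded.

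Second, I would verify the hypotheses needed for Malliavin differentiability of the solution. The terminal condition lies in $\bD^{1,2}$ with a bounded Malliavin derivative: by the chain rule, $D^{W^R}_u H^D = (\partial_{x_1} h^D)(S_T,R_T) D^{W^R}_u S_T + (\partial_{x_2} h^D)(S_T,R_T) D^{W^R}_u R_T$, where the partials are bounded by Assumption~\ref{assump-DiffData} and the Malliavin derivatives $D^{W^R} S_T, D^{W^R} R_T$ are bounded by Proposition~\ref{prop-SDEresults}. The driver is linear in $\kappa$ with bounded Lipschitz coefficient $|\theta|$, and its parameter $\theta$ is Malliavin differentiable: $D \theta^S = 0$ trivially, while $\theta^R = -Z^{w,2}/\gamma_R$ has a bounded Malliavin derivative thanks to Proposition~\ref{propo:DDZw2isbounded}. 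These conditions imply $(B,\kappa) \in \bD^{1,2} \times \bD^{1,2}$.

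Third, differentiating both sides of \eqref{BSDEforDerivativeHD} formally in the direction $W^R$ at time $u \leq t$, and observing that $D^{W^R}_u \theta_s = (0, D^{W^R}_u \theta^R_s)$ so that $\lev \kappa^\theta_s, D^{W^R}_u \theta_s \rev = \kappa^R_s D^{W^R}_u \theta^R_s$, yields exactly the linear BSDE \eqref{eq:MallDiffBkappa} for $(D^{W^R}_u B_t^\theta, D^{W^R}_u \kappa^\theta_t)_{t \in [u,T]}$. Well-posedness of this auxiliary BSDE follows from the boundedness of its terminal condition, the boundedness of the $\kappa$-coefficient $\theta$, and the $\cH^2$-integrability of the inhomogeneity $\kappa^R D^{W^R}_u \theta^R$ (product of a BMO process and a bounded process). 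Finally, the diagonal identity $D^{W^R}_t B^\theta_t = \kappa^R_t$, holding $\bP$-a.s.\ for every $t \in [0,T]$, follows from the standard path-regularity representation $D_t Y_t = Z_t$ for Malliavin-differentiable BSDEs restricted to the $W^R$-component. The main (rather minor) obstacle throughout is simply the careful bookkeeping of integrability requirements, all of which are furnished by the preceding Theorem~\ref{theo-thetaRisbounded} and Proposition~\ref{propo:DDZw2isbounded}.
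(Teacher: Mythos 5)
Your first three steps (well-posedness of the linear BSDE in $\cS^\infty\times\cHBMO$, verification of the hypotheses for Malliavin differentiability using the boundedness of $D^{W^R}H^D$ and of $D^{W^R}\theta^R$ from Proposition \ref{propo:DDZw2isbounded}, and formal differentiation to obtain \eqref{eq:MallDiffBkappa}) follow essentially the same route as the paper, which cites \cite{ElKarouiPengQuenez1997} and \cite{ImkellerDosReis2010} for exactly these points. That part is fine.

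The genuine gap is in your last sentence. The ``standard path-regularity representation $D_tY_t=Z_t$'' (Theorem 2.9 in \cite{ImkellerDosReis2010}) only gives $\lim_{u\nearrow t}D^{W^R}_uB^\theta_t=\kappa^R_t$ $\bP\otimes Leb$-a.e., i.e.\ for \emph{almost every} $t$. The proposition asserts the identity $\bP$-a.s.\ for \emph{every} $t\in[0,T]$, and this strengthening is not cosmetic: it is precisely what is needed in the proof of Theorem \ref{theo:kappaRisEMPR} to cancel the Lebesgue integral and to conclude that $\kappa^R_t\neq 0$ for all $t$ (market completion). The paper spends the second half of its proof on this upgrade. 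The argument is: (i) for fixed $u$, the map $t\mapsto D^{W^R}_uB^\theta_t$ is continuous on $[u,T]$ since it solves the BSDE \eqref{eq:MallDiffBkappa}; (ii) rewriting that BSDE under $\bP^\theta$ as \eqref{eq:DWRB-cheaptrick}, its data ($D^{W^R}_uH^D$ and $D^{W^R}_uZ^{w,2}$, hence $D^{W^R}_u\theta^R$) do not depend on $u$ by Propositions \ref{prop-SDEresults} and \ref{propo:DDZw2isbounded}, so by uniqueness of the BSDE solution $D^{W^R}_uB^\theta_t=D^{W^R}_rB^\theta_t$ for all $0\leq u,r\leq t$; (iii) combining (i) and (ii) gives joint continuity of $(u,t)\mapsto D^{W^R}_uB^\theta_t$, which turns the a.e.\ diagonal identity into one valid for every $t$. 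Without some version of this $u$-independence/continuity argument (which exploits the Markovian structure of $H^D$ and of $\theta^R$), your proof only establishes the weaker $\bP\otimes Leb$-a.e.\ statement.
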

\begin{proof}
Let $0\leq u\leq t\leq T$. Observe that BSDE \eqref{BSDEforDerivativeHD} is a BSDE with a linear driver and a bounded terminal condition. 
The existence and uniqueness of a solution follows from the results of \cite{ElKarouiPengQuenez1997}. Moreover, the estimation techniques used in \cite{ImkellerDosReis2010} yield that $(B,\kappa)\in \cS^\infty\times \cHBMO$ (see Theorem 2.6 in \cite{ImkellerDosReis2010}). The Malliavin differentiability of $(B,\kappa)$ follows from Proposition 5.3 in \cite{ElKarouiPengQuenez1997} and the remark following it since $(\theta^S,\theta^R)\in \bR\times (\cS^\infty\cap \bD^{1,2})$ (see Theorem \ref{theo-thetaRisbounded}). The quoted result and Proposition \ref{prop-SDEresults} yield \eqref{eq:MallDiffBkappa} for $D^{W^R}B^\theta$. Moreover, from  Theorem 2.9 in \cite{ImkellerDosReis2010} we have $\lim_{u \nearrow t} D^{W^R}_u B^\theta_t= \kappa^R_t$ for $0\leq u\leq t\leq T$ $ \bP \otimes Leb$-a.e.. 

We now prove a finer result on $B$ and $\kappa$, namely that $D^{W^R}_t B^\theta_t= \kappa^R_t$ holds $\bP$-a.s. for any $0\leq t\leq T$ instead of just $ \bP \otimes Leb$-a.e.. This is done by showing that $(u,t)\mapsto D^{W^R}_u B^\theta_t$ is jointly continuous.

Remark that the map $t\mapsto D^{W^R}_u B^\theta_t$ for $u\leq t$ is given by \eqref{eq:MallDiffBkappa} and hence it is continuous in time ($\forall t\in[u,T]$). Note now that  Proposition \ref{propo:DDZw2isbounded} and Proposition \ref{prop-SDEresults} yield  that $D^{W^R} Z^{w,2}$ is bounded and $D^{W^R}_u Z^{w,2}_t=D^{W^R}_r Z^{w,2}_t=D^{W^R}_0 Z^{w,2}_t$ for any $0\leq u,r\leq t\leq T$. These properties hold as well for $\theta^R$ via the identity $-\gamma_R \theta^R =Z^{w,2}$.

Using the measure $\bP^\theta$ (introduced in \eqref{density-process}), that $D^{W^R} \theta^S=0$ and the identity $-\gamma_R \theta^R =Z^{w,2}$, one can rewrite \eqref{eq:MallDiffBkappa} as
\begin{align}
\label{eq:DWRB-cheaptrick} 
 D^{W^R}_u B_t^{\theta}
	&
		= D^{W^R}_u H^D  
		+ \frac1{\gamma_R}\int_t^T 
		     \kappa^R_s D^{W^R}_u Z^{w,2}_s   
						 \uds
	- \int_t^T \langle D^{W^R}_u \kappa_s^\theta, \ud W^{\theta}_s\rangle .
\end{align}
Writing the same BSDE as above, but for a parameter $v$ (instead of $u$) we have 
\begin{align*}
 D^{W^R}_v B_t^{\theta}
	&
		= D^{W^R}_v H^D  
		+ \frac1{\gamma_R}\int_t^T 
		     \kappa^R_s D^{W^R}_v Z^{w,2}_s   
						 \uds
	- \int_t^T \langle D^{W^R}_v \kappa_s^\theta, \ud W^{\theta}_s\rangle 
\\
	&
		= D^{W^R}_u H^D  
		+ \frac1{\gamma_R}\int_t^T 
		     \kappa^R_s D^{W^R}_u Z^{w,2}_s   
						 \uds
	- \int_t^T \langle D^{W^R}_v \kappa_s^\theta, \ud W^{\theta}_s\rangle ,
\end{align*}
where we used the results of Proposition \ref{prop-SDEresults}. Since the solution to \eqref{eq:DWRB-cheaptrick} is unique and the BSDE just above has exactly the same parameters as \eqref{eq:DWRB-cheaptrick}, we must conclude that for any $t\in[0,T]$ and for $0\leq u,r\leq t$ it holds $D^{W^R}_u B_t^{\theta} = D^{W^R}_r B_t^{\theta}$. 
From the continuity of $t\mapsto D^{W^R}_\cdot B^{\theta}_t$ follows now the joint continuity of $(u,t)\mapsto D^{W^R}_u B^{\theta}_t$ in its time parameters and hence the representation $D^{W^R}_t B^\theta_t= \kappa^R_t$  holds $\bP$-a.s. for any $0\leq t\leq T$.
\end{proof}

We can now prove Theorem \ref{theo:kappaRisEMPR}.
\begin{proof}[Proof of Theorem \ref{theo:kappaRisEMPR}]
We proceed in the same way as in the proof of Proposition \ref{propo:DDZw2isbounded}. The argument goes as follows: define the process $(e_t)$ just like in \eqref{eq:eprocessbasedonZw2}; apply It\^o's formula to $(e_t D^{W^R}_\cdot B^\theta_t)$ and write the resulting equation under $\bP^\theta$ (just like \eqref{eq:DWRB-cheaptrick}); take $\bP^\theta$ conditional expectations. At this point a remaining Lebesgue integral is still in the dynamics:
\begin{align*}
 D^{W^R}_u B_t^{\theta}
	&
		= 
		(e_t)^{-1}
		\bE^\theta\left[
		e_T D^{W^R}_u H^D  
		+ \frac1{\gamma_R}\int_t^T 
		e_s(\kappa^R_s-D^{W^R}_u B_s^{\theta}) D^{W^R}_u Z^{w,2}_s  \uds
		|\cF_t
\right]
\\
&		= 
		(e_t)^{-1}
		\bE^\theta\left[
		e_T D^{W^R}_u H^D  |\cF_t
\right],
\end{align*}
where from the first to the second line we used Proposition \ref{MallDiffofB}, i.e. that $\kappa^R_s=D^{W^R}_s B^\theta_s=D^{W^R}_u B^\theta_s$  $\bP$-a.s. for any $0\leq u\leq s\leq T$.

Recalling $H^D=h^D(S_T,R_T)$ and the dynamics of $R$ given by \eqref{diffusion-R}, we see that (by the chain rule) $D_u^{W^R}H^D = b\partial_{x_2}h^D$.
Since $b\partial_{x_2} h^D$ is either always positive or always negative and since $\kappa^R_t =D^{W^R}_t B^\theta_t$ $\bP$-a.s.~for any $t \in [0,T]$, 
it follows that $\kappa^R\neq 0$ $\bP$-a.s.~for all $t\in[0,T]$. More precisely, depending on the sign of $b\partial_{x_2}h^D$, $\kappa^R$ is $\bP$-a.s. either always positive or always negative\footnote{For any positive random variable $X$ ($X>0$ $\bP$-a.s.) one has $\bE^{\bP}[X|\cF]> 0$ for any sigma-field $\cF$. Since the measure change is done for a strictly positive density function, the inequality for the new conditional expectation is still strict.}, giving $\sgn(\kappa^R_\cdot)=\sgn(b\partial_{x_2}h^D)$. 
\end{proof}

\subsection{Parameter Analysis}

It is possible to justify at a theoretical level some of the predictable behavior of the processes $Y^w$, $B^\theta$ and $\theta^R$ with relation to  the problem's parameters: $n$, $\gamma_R$, $\lambda^a$ and $\gamma_a$ for $a\in\bA$.

\begin{theorem}		
\label{theo:parambehavior}
Let $\theta$ be the EMPR. The process $(Y^w,Z^w)$ solving BSDE \eqref{eq-help-rep-ag} is differentiable with relation to $\lambda^a$ for any $a\in\bA$, $n$ and  $\gamma_R$ (see \eqref{eq:defi-gammaR} and \eqref{eq:defi-weightsw}). 

Fix agent $a\in \bA$. If the differences
\begin{align}
\label{eq:signalcondition} 
\gamma_R-\gamma_a
\qquad \text{and}\qquad 
\bE^{\theta}\Big[\Big(\sum_{b\in\bA} w^b H^b\Big) - H^a\Big] 
\end{align}
are positive (negative respectively) then $\partial_{\wt \lambda^a} Y^w_t$ is negative (positive respectively) for any $t\in[0,T]$.

For any $a\in \bA$ we have $\bP\text{-a.s}$ that
\begin{align*}
\partial_{\gamma_R} Y^w_t & <0,
\quad 
\partial_{\gamma_a} Y^w_t <0
\qquad
\forall t\in [0,T).
\end{align*}
Furthermore, $\bP\text{-a.s}$
\begin{align*}
\partial_{n} Y^w_t & < 0,
\quad 
\sgn(\partial_n \theta^R_t)=\sgn(b\partial_{x_2}h^D)
\ 
\forall t\in [0,T]
\quad\text{and}\quad 
\partial_{n} B^\theta_t  < 0
\quad
\forall t\in [0,T).
\end{align*}
\end{theorem}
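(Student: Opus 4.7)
The plan is to reduce every claim to an explicit representation of the relevant partial derivative as a conditional expectation under $\bP^\theta$. This works because the representative-agent BSDE \eqref{eq-help-rep-ag} has an affine $z^1$-dependence and a quadratic $z^2$-dependence whose linearization in $z$ produces exactly the coefficients $-\theta^S$ and $-\theta^R$; since $\theta\in\cS^\infty$ by Theorem \ref{theo-thetaRisbounded}, the linearized BSDE becomes a pure $\bP^\theta$-martingale equation and its solution is a conditional expectation that only records the partial derivatives of the data $H^w$ and of the driver with respect to the parameter.

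Concretely, Theorem \ref{theo-thetaRisbounded} provides $Z^{w,2}\in\cS^\infty$ and $\theta\in\cS^\infty\subset\cHBMO$, so classical parameter-differentiability results for quadratic BSDEs with bounded control process (as in \cite{AnkirchnerImkellerReis2010,DosReis2011,ImkellerDosReis2010}) apply for each smooth parameter $\xi\in\{n,\gamma_R,\gamma_a,\wt{\lambda}^a\}$. Writing $(U^\xi,V^\xi):=(\partial_\xi Y^w,\partial_\xi Z^w)$ the linearized BSDE reads
\begin{align*}
-dU^\xi_t = \big[-\theta^S_t V^{\xi,1}_t-\theta^R_t V^{\xi,2}_t +(\partial_\xi f)(t,Z^w_t)\big]\,dt -\lev V^\xi_t,dW_t\rev, \quad U^\xi_T=-\partial_\xi H^w,
\end{align*}
and absorbing the linear $z$-drift into $dW^\theta$ yields
\begin{align*}
\partial_\xi Y^w_t = \bE^\theta\bigg[-\partial_\xi H^w + \int_t^T (\partial_\xi f)(s,Z^w_s)\,ds \ \bigg| \ \cF_t\bigg].
\end{align*}

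Each case then reduces to reading off signs. For $\xi=n$: $\partial_n H^w=H^D/N$ and $\partial_n f=0$, hence $\partial_n Y^w_t=-B^\theta_t/N<0$ since $h^D>0$. For $\xi=\gamma_R$: $\partial_{\gamma_R}H^w=0$ and $\partial_{\gamma_R}f=-\tfrac12\big[(\theta^S)^2+(Z^{w,2}/\gamma_R)^2\big]=-\tfrac12\big[(\theta^S)^2+(\theta^R)^2\big]\le 0$, giving $\partial_{\gamma_R}Y^w_t<0$ for $t<T$. For $\xi=\gamma_a$: the chain rule through $\partial_{\gamma_a}\gamma_R=w^a>0$ gives $\partial_{\gamma_a}Y^w_t=w^a\,\partial_{\gamma_R}Y^w_t<0$. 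For $\xi=\wt{\lambda}^a$: differentiating $w^a=1/(\Lambda(1+\wt{\lambda}^a))$ and $\Lambda=\sum_b 1/(1+\wt{\lambda}^b)$ yields $\partial_{\wt{\lambda}^a}w^a=w^a(w^a-1)/(1+\wt{\lambda}^a)$ and $\partial_{\wt{\lambda}^a}w^b=w^a w^b/(1+\wt{\lambda}^a)$ for $b\neq a$, hence (after a short rearrangement using $\sum_b w^b=1$)
\begin{align*}
\partial_{\wt{\lambda}^a}H^w = \frac{w^a}{1+\wt{\lambda}^a}\Big[\sum_b w^b H^b - H^a\Big], \qquad \partial_{\wt{\lambda}^a}\gamma_R = \frac{w^a}{1+\wt{\lambda}^a}(\gamma_R-\gamma_a),
\end{align*}
so at $t=0$ the representation formula splits into two terms whose common sign is dictated by the two quantities in \eqref{eq:signalcondition}.

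It remains to handle $\theta^R$ and $B^\theta$ in $n$. Since $\partial_n Y^w_t=-B^\theta_t/N$ is a $\bP^\theta$-martingale, the martingale representation forces $\partial_n Z^w_t=-\kappa^\theta_t/N$; combining with $\theta^R=-Z^{w,2}/\gamma_R$ gives $\partial_n\theta^R_t=\kappa^R_t/(N\gamma_R)$, whose sign is that of $\kappa^R_t$, i.e. of $b\partial_{x_2}h^D$ by Theorem \ref{theo:kappaRisEMPR}. Differentiating \eqref{BSDEforDerivativeHD} in $n$ and switching to $\bP^\theta$ produces $\partial_n B^\theta_t=-\bE^\theta\big[\int_t^T (\kappa^R_s)^2/(N\gamma_R)\,ds\,\big|\,\cF_t\big]$, which is strictly negative for $t<T$ since $\kappa^R$ is nonvanishing by Theorem \ref{theo:kappaRisEMPR}. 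The principal technical difficulty throughout is justifying the parameter-differentiation step for a genuinely quadratic BSDE; this relies on the essential input that $Z^{w,2}$ and $\theta^R$ are bounded (Theorem \ref{theo-thetaRisbounded}), so that the linearized equation has deterministically bounded $z$-coefficients and the $\bP^\theta$-argument closes in $\cS^\infty\times\cHBMO$.
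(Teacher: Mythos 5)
Your argument is correct and follows the same backbone as the paper's proof: differentiate the quadratic BSDE \eqref{eq-help-rep-ag} in the parameter (justified by the boundedness of $Z^{w,2}$ and $\theta$ from Theorem \ref{theo-thetaRisbounded}), absorb the linear $z$-terms of the variational equation by a Girsanov change of measure, and read off the sign of the resulting conditional expectation. Note that your measure $\bP^\theta$ coincides with the measure $\bQ$ of \eqref{eq:DifferentiabilityMeasureChange} used in the paper, since $\theta^R=-Z^{w,2}/\gamma_R$ by \eqref{eq:entropic-EMPeR}, so the representations for $\partial_{\gamma_R}Y^w$, $\partial_{\gamma_a}Y^w$, $\partial_{\wt\lambda^a}Y^w$ and $\partial_nY^w$ agree exactly with the paper's (your prefactor $w^a/(1+\wt\lambda^a)$ equals the paper's $(w^a)^2\Lambda$).

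The one place where you genuinely depart from the paper is the sign of $\partial_n\theta^R$. The paper differentiates the Malliavin-derivative BSDE \eqref{eq:BSDEforDWRYw} in $n$ and controls the resulting equation with the exponential weight process built from $D^{W^R}Z^{w,2}$, whereas you observe that $\partial_nY^w=-B^\theta/N$ is a $\bP^\theta$-martingale with terminal value $-H^D/N$ and invoke uniqueness of the martingale representation to conclude $\partial_nZ^w=-\kappa^\theta/N$ directly, hence $\partial_n\theta^R=\kappa^R/(N\gamma_R)$ and the sign follows from Theorem \ref{theo:kappaRisEMPR}. Your route is shorter and avoids the Malliavin machinery entirely for this step; it also yields the fully explicit formula $\partial_nB^\theta_t=-\tfrac{1}{N\gamma_R}\bE^\theta[\int_t^T(\kappa^R_s)^2\,ds\,|\,\cF_t]$, which makes the strict negativity for $t<T$ transparent. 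Both arguments rest on the same essential inputs ($\kappa^R\neq 0$ with constant sign, $\theta\in\cS^\infty$). One small caveat, which you share with the paper rather than introduce: for the $\wt\lambda^a$-monotonicity the hypothesis \eqref{eq:signalcondition} fixes the sign of the \emph{unconditional} expectation, so the representation formula only settles the sign of $\partial_{\wt\lambda^a}Y^w_t$ at $t=0$ (as you state); extending it to all $t$ as claimed in the theorem would require the corresponding conditional expectations to keep a constant sign.
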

Part of the results are in some way expected. Introducing more derivatives leads to an overall risk reduction and as more derivatives are placed in the market, the derivative is worth less (per unit). 
If $\gamma_R$ is interpreted as the representative agent's risk tolerance, then as $\gamma_R$ increases we have a decrease in risk ($Y^w$ decreases) since it represents an increase in the single  agents' risk tolerance (i.e.  $\gamma_a\nearrow$).  

The main message of the above theorem is that the effect of the performance concern of one agent on the aggregate risk depends essentially on how the agent is positioned with respect to the others, both in terms of risk tolerance as well as the personal endowments. If the agent's risk tolerance $\gamma_a$ is higher than the aggregate risk tolerance $\gamma_R$ and her endowment position dominates by the aggregate endowment position, then an increase in the agent's concern rate leads to an increase of the aggregate risk.

Before proving the above result we remark that condition \eqref{eq:signalcondition} simplifies under certain conditions; such simplifications are summarized in the below corollary. All results follow by direct manipulation of the involved quantities.

\begin{corollary}
\label{coro-parameter-behavior}
Let the conditions of Theorem \ref{theo:parambehavior} hold. 

If $\gamma_a=\gamma$ for all $a\in\bA$, then $\gamma_R-\gamma_a= \gamma\big(\sum_a w^a  - 1\big)=0$. 

If $N=2$, then $w^a+w^b=1\Leftrightarrow w^b=1-w^a$ and hence
\begin{align*}
\big(\sum_{c\in\bA} w^c H^c\big) - H^a  =  -w^b(H^a-H^b)
\qquad\text{and}\qquad
\big(\sum_{c\in\bA} w^c H^c\big) - H^b  =   w^a(H^a-H^b).
\end{align*}
Similarly $\gamma_R-\gamma_a=-w^b(\gamma_a-\gamma_b)$ and $\gamma_R-\gamma_b=w^a(\gamma_a-\gamma_b)$. Moreover, it holds that
\begin{align}
\label{eq:crossbehaviorYw}
\text{sgn}\Big(\partial_{\lambda^a} Y^w_t\Big) 
     = - \text{sgn}\Big(\partial_{\lambda^b} Y^w_t\Big)
		\qquad \bP\text{-a.s. for any } t\in[0,T].
\end{align}
\end{corollary}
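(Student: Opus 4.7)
The plan is to handle the three batches of claims in Corollary \ref{coro-parameter-behavior} separately: the degenerate identity when all risk tolerances coincide, the four $N=2$ algebraic identities, and the sign relation \eqref{eq:crossbehaviorYw}. The first two are one-line computations flagged by the statement itself (``direct manipulation''); the third requires a chain-rule argument exploiting the structure of the representative-agent BSDE.

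For the opening claim, I would simply note that $\gamma_a = \gamma$ uniformly in $a \in \bA$ gives $\gamma_R = \sum_{b} w^b \gamma_b = \gamma \sum_{b} w^b = \gamma$, because the weights defined in \eqref{eq:defi-weightsw} satisfy $\sum_{b} w^b = 1$ by construction; hence $\gamma_R - \gamma_a = 0$. For the four $N=2$ identities, I would use only $w^a + w^b = 1$: for instance
\[
\sum_{c\in\bA} w^c H^c - H^a \;=\; w^a H^a + w^b H^b - (w^a + w^b) H^a \;=\; -w^b(H^a - H^b),
\]
and the three remaining identities follow by the same rearrangement applied to $H^b$ and to $\gamma_R - \gamma_a$, $\gamma_R - \gamma_b$.

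The substantive step is \eqref{eq:crossbehaviorYw}. The key observation I would use is that, in the $N=2$ case, the BSDE \eqref{eq-help-rep-ag} for $(Y^w, Z^w)$ depends on the pair $(\lambda^a, \lambda^b)$ only through the single scalar $w^a$ (equivalently $w^b = 1 - w^a$): indeed
\[
\gamma_R \;=\; w^a \gamma_a + (1-w^a) \gamma_b, \qquad H^w \;=\; \tfrac{n}{2} H^D + w^a H^a + (1-w^a) H^b,
\]
and $\theta^S$ is independent of the concern rates. Combined with the differentiability of $w^a \mapsto Y^w_t$ --- which is inherited from the differentiability with respect to $\gamma_R$ and $n$ already established in Theorem \ref{theo:parambehavior} --- the chain rule applied to $\lambda^c \mapsto w^a \mapsto Y^w_t$ yields, for $c \in \{a,b\}$,
\[
\partial_{\lambda^c} Y^w_t \;=\; \big(\partial_{w^a} Y^w_t\big)\cdot \big(\partial_{\lambda^c} w^a\big).
\]

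To conclude, I would compute the two partials of $w^a$ directly from $w^a = (1+\lambda^b)/(2+\lambda^a+\lambda^b)$:
\[
\partial_{\lambda^a} w^a = -\frac{1+\lambda^b}{(2+\lambda^a+\lambda^b)^2} < 0, \qquad \partial_{\lambda^b} w^a = \frac{1+\lambda^a}{(2+\lambda^a+\lambda^b)^2} > 0,
\]
strict nonvanishing coming from $\lambda^a, \lambda^b \in [0,1]$ under Assumption \ref{assump:ConcernRates}. Eliminating the common factor $\partial_{w^a} Y^w_t$ from the two chain-rule identities gives $\partial_{\lambda^a} Y^w_t = -\tfrac{1+\lambda^b}{1+\lambda^a}\,\partial_{\lambda^b} Y^w_t$, and since the scalar $-(1+\lambda^b)/(1+\lambda^a)$ is strictly negative, \eqref{eq:crossbehaviorYw} follows. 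The only non-routine point --- and hence the main obstacle --- is justifying the chain-rule reduction to the single parameter $w^a$; I would address it by invoking the uniqueness of the quadratic-growth BSDE \eqref{eq-help-rep-ag} (so $Y^w$ is genuinely a functional of $(\gamma_R, H^w)$, both of which depend on $(\lambda^a,\lambda^b)$ only through $w^a$) together with the differentiability machinery of \cite{ImkellerDosReis2010} already used in Theorem \ref{theo:parambehavior}.
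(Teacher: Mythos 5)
Your opening claim and the four $N=2$ identities are exactly the ``direct manipulation'' the paper intends, so those parts are fine. For the sign relation \eqref{eq:crossbehaviorYw} you take a genuinely different route. The paper reads it off the explicit formula derived in the proof of Theorem \ref{theo:parambehavior},
\begin{align*}
\partial_{\wt\lambda^a} Y^w_t = -(w^a)^2\Lambda\,\bE^\bQ\Big[\Big(\sum_b w^b H^b - H^a\Big) + (\gamma_R-\gamma_a)\int_t^T\big[\tfrac12(\theta^S_s)^2+\tfrac1{2\gamma_R^2}(Z^{w,2}_s)^2\big]\uds\,\Big|\,\cF_t\Big],
\end{align*}
into which the $N=2$ identities substitute to give $\partial_{\wt\lambda^a}Y^w_t = (w^a)^2 w^b\Lambda\, E$ and $\partial_{\wt\lambda^b}Y^w_t = -(w^b)^2 w^a\Lambda\, E$ for one and the same conditional expectation $E$, whence the opposite signs. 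You instead observe that for $N=2$ the representative-agent BSDE \eqref{eq-help-rep-ag} depends on $(\lambda^a,\lambda^b)$ only through the single scalar $w^a$ (via $\gamma_R$ and $H^w$) and apply the chain rule, using $\partial_{\lambda^a}w^a<0<\partial_{\lambda^b}w^a$. Both arguments are valid; your computation $w^a=(1+\lambda^b)/(2+\lambda^a+\lambda^b)$ and its partials agree with the paper's formulas $\partial_{\wt\lambda^a}w^a=(w^a)^2\Lambda(w^a-1)$ and $\partial_{\wt\lambda^a}w^b=(w^a)^2\Lambda w^b$ specialized to $N=2$. Your route is arguably cleaner for this corollary, since it needs no explicit derivative formula but only well-posedness and parametric differentiability of the BSDE in $w^a$; the paper's route is the one that also delivers the finer sign information of Theorem \ref{theo:parambehavior} via condition \eqref{eq:signalcondition}. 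The one point you should state more carefully is the differentiability of $w^a\mapsto Y^w_t$: it is not literally ``inherited'' from differentiability in $\gamma_R$ and $n$, because $w^a$ also enters the terminal condition through $w^aH^a+(1-w^a)H^b$, which is not of the form covered by the $n$-derivative; but the same parametric-differentiability result (Theorem 2.8 of \cite{ImkellerDosReis2010}, as invoked in the proof of Theorem \ref{theo:parambehavior}) applies verbatim, since both the driver (through $\gamma_R$) and the bounded terminal condition are affine, hence smooth, in $w^a$. With that justification made explicit, your argument is complete.
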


\begin{proof}[Proof of Theorem \ref{theo:parambehavior}]
Let $a\in \bA$ and $t\in[0,T]$. Theorem 3.1.9 in \cite{DosReis2011} (see also Theorem 2.8 in \cite{ImkellerDosReis2010}) ensures the differentiability of BSDE \eqref{eq-help-rep-ag}
with respect to $\gamma_R$, $\gamma_a$, $\lambda^a$ and $n$. 

\smallskip
\emph{$\triangleright$ The derivative of $Y^w$ in $\gamma_R$:} Applying $\partial_{\gamma_R}$ to  
BSDE \eqref{eq-help-rep-ag} and writing it under the 
probability measure $\bQ$ defined in 
\eqref{eq:DifferentiabilityMeasureChange}  yields the dynamics
\begin{align*}
\partial_{\gamma_R} Y^w_t
&
= 
0
+ 
\int_t^T 
\left[
-\frac{1}{2} (\theta^S_s)^2 
			-\frac{1}{2 \gamma_R^2} \left(Z^{w,2}_s\right)^2
			\right]
 \uds-\int_t^T \langle \partial_{\gamma_R} Z^{w}_s, \udws^\bQ\rangle .
\end{align*}
Taking $\bQ$-conditional expectations and noticing that the Lebesgue integral term is strictly negative for any $t\in[0,T)$, we have then $\partial_{\gamma_R} Y^w_t<0$ for any $t\in[0,T)$. 

\smallskip
\emph{$\triangleright$ The derivative of $Y^w$ in $\gamma_a$:} This case follows from the previous one as $\gamma_R$ is defined by \eqref{eq:defi-gammaR} and the weights $w^\cdot$ (see \eqref{eq:defi-weightsw}) are independent of $\gamma_\cdot$. 
\begin{align*}
\gamma_R := \sum_{a \in \bA} w^a {\gamma_a}
\quad \text{implies} \quad
\partial_{\gamma_a} \gamma_R = w^a > 0,
\end{align*}
and finally $\partial_{\gamma_a} Y^w = \partial_{\gamma_R} Y^w \cdot \partial_{\gamma_a} (\gamma_R)$. The statement follows. 

\smallskip
\emph{$\triangleright$ The derivatives of $Y^w$ in $\wt \lambda^a$:} We compute only the derivatives with respect to $\wt \lambda^a$ in order to present simplified calculations as $\wt \lambda^a:=\lambda^a/(N-1)$. Calculating the involved derivatives leads to
\begin{align*}  
&
\partial_{\wt \lambda^a} \frac1{1+\wt \lambda^a} = -(\Lambda w^a)^2,  
\quad
\partial_{\wt \lambda^a} \frac 1 \Lambda 
= 
(w^a)^2
,\quad
\partial_{\wt \lambda^a} w^a = ( w^a)^2\Lambda(w^a-1) 
,\quad
\partial_{\wt \lambda^a} w^b = ( w^a)^2 \Lambda w^b, 
\\
& 
\partial_{\wt \lambda^a} \gamma_R 
= \partial_{\wt \lambda^a} \sum_{b\in\bA} w^b \gamma_b = (w^a)^2 \Lambda (\gamma_R-\gamma_a) 
\quad \text{and} \quad 
\partial_{\wt \lambda^a} H^w  
= 
(w^a)^2 \Lambda \Big(\big(\sum_{b\in\bA} w^b H^b\big) - H^a\Big).
\end{align*}
Combining the above results with the BSDE for $\partial_{\wt \lambda^a} Y^w$ under the $\bQ$-measure (just as in the previous two steps) yields 
\begin{align*}
\partial_{\wt \lambda^a} Y^w_t
&
= 
-(w^a)^2 \Lambda
\bE^\bQ\Big[
\Big(\big(\sum_{b\in\bA} w^b H^b\big) - H^a\Big)
+
(\gamma_R-\gamma_a)
\int_t^T 
\big[
\frac{1}{2} (\theta^S_s)^2 
			+\frac{1}{2 \gamma_R^2} \left(Z^{w,2}_s\right)^2
			\big]\uds
\Big| \cF_t\Big].
\end{align*}
Since $\bQ$ is equivalent to $\bP$, the statement follows. 

\smallskip
\emph{$\triangleright$ The derivative of $Y^w$ in $n$:} Applying $\partial_{n}$ to  
BSDE \eqref{eq-help-rep-ag} and writing it under the 
probability measure $\bQ$ defined in \eqref{eq:DifferentiabilityMeasureChange} yields the dynamics
\begin{align*}
\partial_{n} Y^w_t
&
= 
\partial_{n} Y^w_T -\int_t^T \langle \partial_{n} Z^{w}_s, \udws^\bQ\rangle 
\quad \Rightarrow  \quad 
\partial_{n} Y^w_t 
= \bE^\bQ[\partial_{n} Y^w_T | \cF_t]
= - \frac{\bE^\bQ[ H^D | \cF_t]}{N} \sum_{a\in\bA} w^a  <  0,
\end{align*}
where the last sign follows from the definition of $Y^w_T$ and $H^D$.

\smallskip
\emph{$\triangleright$ The derivative of $\theta^R$ in $n$:} The analysis of $Z^{w,2}$ and hence of $\theta^R$ with respect to $n$ and $\gamma_R$ follows from the analysis of \eqref{eq:BSDEforDWRYw}. Given representation \eqref{eq:entropic-EMPeR}, applying $\partial_{n}$ to   BSDE \eqref{eq:BSDEforDWRYw} and writing it under the probability measure $\bQ$ defined in \eqref{eq:DifferentiabilityMeasureChange}  yields the dynamics
\begin{align*}
\partial_{n}D^{W^R}_u Y^w_t
&= \partial_{n} D^{W^R}_u Y^w_T 
-\int_t^T \langle \partial_{n} D^{W^R}_u Z^{w}_s, \udws^\bQ\rangle 
+\int_t^T 
\big[\frac1{\gamma_R}  D^{W^R}_u Z^{w,2}_s \partial_{n} Z^{w,2}_s\big] \uds
\\
& \Leftrightarrow \quad 
\partial_{n}Z^{w,2}_t
= \partial_{n} D^{W^R}_u Y^w_T 
-\int_t^T \langle \partial_{n} D^{W^R}_u Z^{w}_s, \udws^\bQ\rangle 
+\int_t^T 
\big[\frac1{\gamma_R}  D^{W^R}_u Z^{w,2}_s \partial_{n} Z^{w,2}_s\big] \uds
\\
& \Leftrightarrow \quad 
\partial_{n}Z^{w,2}_t
= (e_t)^{-1}\bE^\bQ\left[ e_T\partial_{n} D^{W^R}_u Y^w_T  | \cF_t \right],
\end{align*}
where $(e_t)$ is as in \eqref{eq:eprocessbasedonZw2} and the argumentation is similar to that back there. Notice now that with the terminal condition  $Y_T =-\sum_{a \in \bA} w^a(H^a+nH^D/N)$ we have
\begin{align*}
	\partial_{n} D^{W^R}_t Y^w_T 
	= -\frac1N \left(\sum_{a\in\bA} w^a\right) D^{W^R}_t H^D
	= -\frac1N \left(\sum_{a\in\bA} w^a\right) b (\partial_{x_2} h^D)(S_T,R_T).
\end{align*}
Given Assumption \ref{assump-DiffData}, we are able to conclude that $\sgn(Z^{w,2}_t)=-\sgn(b\partial_{x_2}h^D)$, and hence, from \eqref{eq:entropic-EMPeR} that $\sgn(\partial_n \theta^R_t)=\sgn(b\partial_{x_2}h^D)$.

\smallskip
\emph{$\triangleright$ The derivative of $B^\theta$ in $n$:} We use justifications similar to those used in Proposition \ref{MallDiffofB} and hence we do not give all the details. Recall  \eqref{BSDEforDerivativeHD}, apply the $\partial_n$-operator to the equation and do the usual change of measure (with $\bP^\theta$) to obtain 
\begin{align*}
 \partial_n B_t^{\theta}
	&
		= 0
		- \int_t^T 
		     \kappa^R_s \partial_n \theta^R_s   
						 \uds
	- \int_t^T \langle \partial_n \kappa_s^\theta, \ud W^{\theta}_s\rangle 
	\quad\Rightarrow\quad
	\partial_n B_t^{\theta} 
	=
	-\bE^\theta[\int_t^T \kappa^R_s \partial_n \theta^R_s \uds].
\end{align*} 
By the previous result we have $\sgn(\partial_n \theta^R_t)=\sgn(b\partial_{x_2}h^D)$ and from Theorem \ref{theo:kappaRisEMPR} we have  $\sgn(\kappa^R_t)=\sgn(b\partial_{x_2}h^D)$. It easily follows that $\partial_n B_t^{\theta}<0$.
\end{proof}
Unfortunately the conditions used above do not allow for similar results on the behavior of, say $\gamma_R \mapsto \theta^R$ or $(\gamma_R,n,\lambda)\mapsto \wt Y^a$. The conditions required for such results are too restrictive to be of any usefulness. Nonetheless, we will investigate them in Section \ref{section-NumSection} via numerical simulation.

%
%
%

%
%
%
\section{Study of a particular model with two agents} 
\label{section-NumSection} 

In this section, we investigate a model economy consisting of two agents using entropic risk measures and having opposed exposures to the external non-financial risk. We give particular attention to the impact of the relative performance concern rates on the equilibrium related processes. The model is simple enough to allow extended tractability, when compared with Sections \ref{section-SingleAgentSection}, \ref{sec4-repagent} and \ref{section-EntropicCase}, and nonetheless still sufficiently general as to produce a rich set of results and interpretations. In particular, we explicitly describe the structure of the equilibrium. Using numerical simulations, we are able to explore the dependence of individual quantities (such as the optimal portfolios $\pi^{*a}$ and minimized risks $Y^a_0$) on the various parameters, thus complementing the results in Theorem \ref{theo:parambehavior}.

\subsection{The particular model and numerical methodology}

We consider a stylized market consisting of two agents.
We argue that a larger set of $N$ agents with certain exposures to the external risk $R$ can be clustered in two groups: those profiting from the high values of $R$ and those profiting from the low values of $R$, and we can apply the weighted aggregation technique used in Section \ref{sec4-repagent} to each group. Our two agents can therefore be thought of as representative agents for each group. The external risk process is taken to be the temperature affecting the two agents, who also have access to a stock market.
\subsubsection*{Temperature and Stock models}

We study one period of one month ($T=1$) 
We study one period of $T=1$ month 
where the temperatures follow an SDE \eqref{diffusion-R} with constant coefficients:
\begin{align*}
	R_t&=r_0+\mu^R \, t+b \, W^R_t,   
\end{align*}
and for the stock we take a standard Black--Scholes model:
\begin{align*}
	\frac{dS_t}{S_t} = \mu^Sdt+\sigma^SdW^S_t , 
\end{align*}
where the coefficients are $r_0=18$, $\mu^R=2$ and $b=4$ for the temperature process, and $S_0=50$, $\mu^S=-0.2$ and $\sigma^S=0.25$ (so $\theta^S={\mu^S}/{\sigma^S}=-0.8$) for the stock price process.

\subsubsection*{Agents' parameters, endowments and the derivative}

Define $I(x) := \frac{1}{\pi}\arctan(x) + \frac12 \ \in [0,1]$. The agents' endowments, $H^a$ and $H^b$, are taken to be
\begin{align*}
		H^a 
		&
		= 5 + I\Big(2\big(R_T - 24\big)\Big) \cdot 15 ,
		\\
		H^b 
		&= 5 +I\Big(2\big(16 - R_T\big)\Big) \cdot \Big( 15 + 5\, I\big(S_T-40\big) \Big) .
\end{align*}
Agent $a$ profits from higher temperatures while agent $b$ profits from lower ones. The derivative has a payoff $H^D$ that does not depend on the stock $S$, and is given by
\begin{align*}
	H^D = I(R_T-20), 
\end{align*}
so that it allows to transfer purely the external risk.
All functions satisfy Assumptions \ref{assump-StandardData} and \ref{assump-DiffData}. 
Given the agents' opposite exposures to $R_T$ and the design of $H^D$, agent $a$ will act as a seller while agent $b$ will act as the buyer, thus establishing a viable market for the derivative. 

We assume throughout that the total supply of derivative is zero, $n=0$, i.e. every unit of derivative one agent owns is underwritten by the other. The risk tolerance coefficients of the agents are fixed at $\gamma_a=\gamma_b=1$ unless we are analyzing some behavior with respect to them.  Similarly, unless otherwise specified, the concern rates are fixed to be $\lambda^a=\lambda^b=0.25$ unless we are analyzing some behavior with respect to them.

\subsubsection*{The numerical procedure}

The simulation of the processes involves a time discretization and Monte Carlo simulations. We use directly the forward processes' explicit solutions; all BSDEs are solved numerically. Regarding their time discretization, we use a standard backward Euler scheme, see \cite{BouchardTouzi2004}, and we complement the time-discretization procedure with the control variate technique stated in Section 5.4.2 of \cite{LionnetReisSzpruch2015}. The approximation of the conditional expectations in the backward induction steps is done via projection over basis functions, see the Least-Squares Monte Carlo method used in \cite{GobetTurkedjiev2014}.

We follow Sections \ref{section-SingleAgentSection} and \ref{sec4-repagent}. First, we solve the representative agent's BSDE \eqref{eq-help-rep-ag}. This yields via \eqref{eq:entropic-EMPeR} the EMPeR process $\theta^R$. Once this is obtained, we solve the BSDE for the price $B^\theta$ of the derivative, Equation \eqref{BSDEforDerivativeHD}, obtaining $(\kappa^S,\kappa^R)$ in the process. 
Finally, we solve the BSDE \eqref{eq:SingleAgentBSDE} with driver \eqref{equation-minimized.risk.driver.for.entropic.agents-general.case} for each agent $a\in\bA$ and compute the optimal strategies $\pi^{*,a}=(\pi^{*,a,1},\pi^{*,a,2})$ via \eqref{eq-optimPi-version2-pia1} and \eqref{eq-optimPi-version2-pia2}.	We note that in the case of two agents, the system \eqref{eq-optimPi-version2-pia1} is easily inverted.

All plots below are computed using $200.000$ simulated paths along a uniform time-discretization grid of $20$ time-steps, except the plot of Figure \ref{fig:ASimulatedOutcome} which uses $30$ time-steps.

\subsection{Analysis of the behavior in the model} 

Figure \ref{fig:ASimulatedOutcome} shows a realization of the behavior of the agents over the trading period. One can see that the price of the derivative moves like the temperature, and in particular it is never constant (over a time-interval where the temperature has changed). This means that the derivative does indeed complete the market by providing the agents full exposure to $R$, or equivalently to $W^R$ -- Assumption \ref{assump-kappa} is satisfied. Agent $b$ is always long in the derivative and $a$ always short (the latter following from the former since her position is the opposite of that of $b$). The fact that both agents only go short in the stock is due to its decreasing trend ($\theta^S<0$) and the fact that the endowments depend little on $S$: it is in mainly an optimal investment in the stock that is observed. However agent $b$'s endowment is higher for lower stock prices, hence she does not go as short in the stock as agent $a$, to hedge this variability.
\begin{figure}[htb]
\hspace{-2.0cm}
	\centering 
	\includegraphics[scale=0.55]{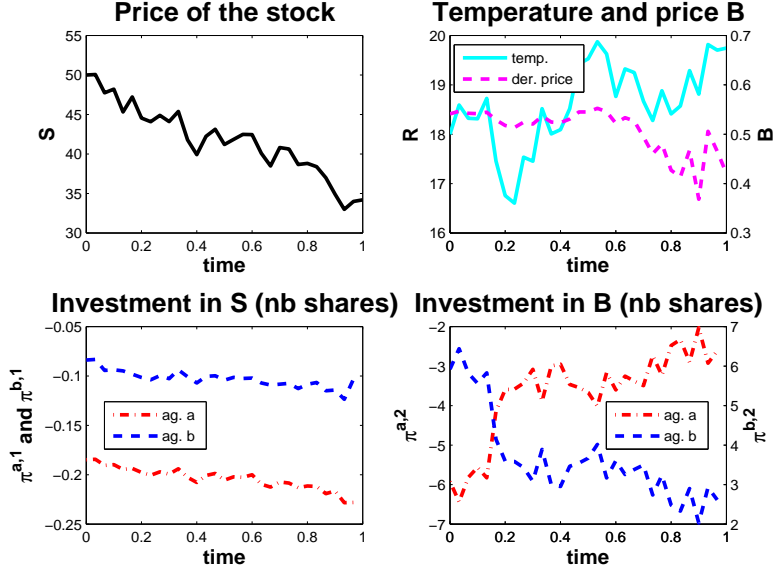}
	\vspace{-0.6cm}
	\caption{Sample paths of the several processes. Stock price on the top left; the temperature and the derivative price on the top right; the investment strategy in the stock on the bottom left and that in the derivative on the bottom right, for each agent. Here $\lambda^a=0.25$ and $\lambda^b=0.0$.}
		\label{fig:ASimulatedOutcome}
\end{figure}

\subsubsection*{Trading activity}
The optimal investment strategies for the derivative were seen in Section \ref{example-general-lambdas-with-zero-supply} and are given by
	\begin{align*}
		\pi^{*,a,2}
				= \frac{1}{1+\lambda^a} \ \frac{\wt{Z}^{a,2} + \gamma_a \theta^R}{\kappa^R}
			\quad\text{and}\quad
		\pi^{*,b,2}
				= \frac{1}{1+\lambda^b} \ \frac{\wt{Z}^{b,2} + \gamma_b \theta^R}{\kappa^R} .
	\end{align*}
The optimal investment strategies in the stock follow easily by inverting $A_2$ from \eqref{eq:ANmatrix}. This yields 
\begin{align*}
\begin{bmatrix}
\phantom{\bigg|}
 \pi^{*,a,1} 
\\
\phantom{\bigg|}
 \pi^{*,b,1} 
\end{bmatrix}
&
=
\begin{bmatrix}
\phantom{\bigg|}
\dfrac{1}{1-\lambda^a\lambda^b}          & \dfrac{\lambda^a}{1-\lambda^a\lambda^b} 
\\
\phantom{\bigg|}
\dfrac{\lambda^b}{1-\lambda^a\lambda^b}  & \dfrac{1}{1-\lambda^a\lambda^b}
\end{bmatrix}
\begin{bmatrix}
\phantom{\bigg|}
	\dfrac{\wt{Z}^{a,1} + \gamma_a \theta^S}{\sigma^S S} 
						 - \dfrac{\wt{Z}^{a,2} + \gamma_a \theta^R}{\kappa^R} \ \dfrac{\kappa^S}{\sigma^S S}
						\\
\phantom{\bigg|}
 \dfrac{\wt{Z}^{b,1} + \gamma_b \theta^S}{\sigma^S S} - \dfrac{\wt{Z}^{b,2} + \gamma_b \theta^R}{\kappa^R} \ \dfrac{\kappa^S}{\sigma^S S}
\end{bmatrix}
.
\end{align*}

\begin{remark}[On the structure of the equilibrium]
\label{remark-strut-of-equilibrium}
The structure of the optimal strategy in the stock's investment appears clearly in view of the examples treated in Section \ref{subsection-unconstrained.case-entropic.case}. Each agent computes her strategy as a weighted sum of the way both would compute their strategy as if there was no relative performance concern (compare with Section \ref{subsubsection:zeroconcern}), using the weights $\big( \frac{1}{1-\lambda^a\lambda^b}, \frac{\lambda^a}{1-\lambda^a\lambda^b} \big)$ for $a$ and $\big( \frac{\lambda^b}{1-\lambda^a\lambda^b}, \frac{1}{1-\lambda^a\lambda^b} \big)$ for $b$. 

These weights can be understood from Equation \eqref{eq-optimPi-version2-pia1} with $\bA=\{a,b\}$: each agent's best response is to invest in the stock according to her natural strategy plus $\lambda^i$ times the strategy played by the other. Assume now that each agent was initially planning to compute her optimal position using 
		\begin{align*}
			\pi^{(0),i,1} 
			= 
			\frac{\wt{Z}^{i,1} + \gamma_i \theta^S}{\sigma^S S} 
			- \frac{\wt{Z}^{i,2} + \gamma_i \theta^R}{\kappa^R} \ \frac{\kappa^S}{\sigma^S S},
			\qquad i \in \left\{a,b\right\},
		\end{align*}
and that they are shown, in turn, the strategy that the other is about to play, so that they can update theirs, yielding a sequence of strategies $\pi^{(1),a,1}$, $\pi^{(1),b,1}$, $\pi^{(2),a,1}$, $\pi^{(2),b,1}$, $\pi^{(3),a,1}$, $\ldots$ for each agent (starting with $a$'s update). Because they both update their strategy according to Equation \eqref{eq-optimPi-version2-pia1}, we observe agent $a$ imitating part of agent $b$, imitating part of agent $a$, imitating part of agent $b$, etc. Summing the corresponding series, agent $a$ ends up investing according to $\sum_n (\lambda^a\lambda^b)^n \ \pi^{(0),a,1} + \lambda^a \sum_n (\lambda^a\lambda^b)^n \pi^{(0),b,1}$, and similarly for $b$.

The structure of the optimal investment in the derivative is much different, following fundamentally from the endogenous trading condition. If an agent is shown the strategy that the other had decided to follow, she could not unilaterally change her strategy. From this emerges the EMPeR $\theta^R$ -- see below.
\end{remark}

We now look at the behavior of the individual portfolios with respect to the rates of relative performance concern. The intensity of the trading activity at time $t=0$ on both the stock ($\pi^{*,a,1}_0$) and the derivative ($\pi^{*,a,2}_0$) as maps of the concern rates $\lambda^a,\lambda^b$ can be found in Figure \ref{fig:TradingActivityInStock}. 
The positions of agent $b$ are similar in some sense: for the stock, the surface looks very similar; for the derivative, it is the exact opposite (due to the zero net supply condition). For readability we plot only the position of agent $a$.
\begin{figure}[ht]
	\centering
	\subfigure
	{
		\includegraphics[scale=0.34]{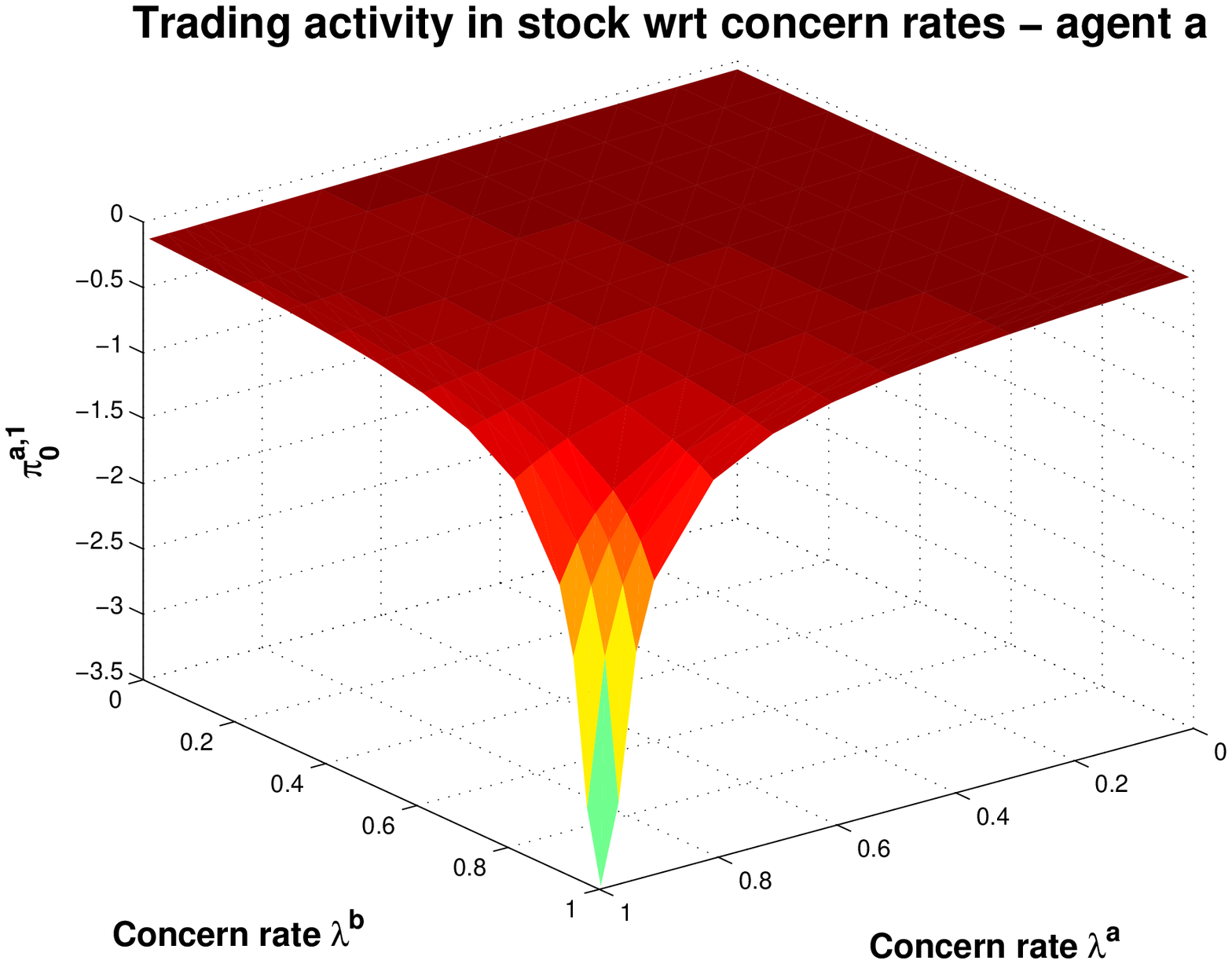}
	}
	\subfigure
		{
		\includegraphics[scale=0.34]{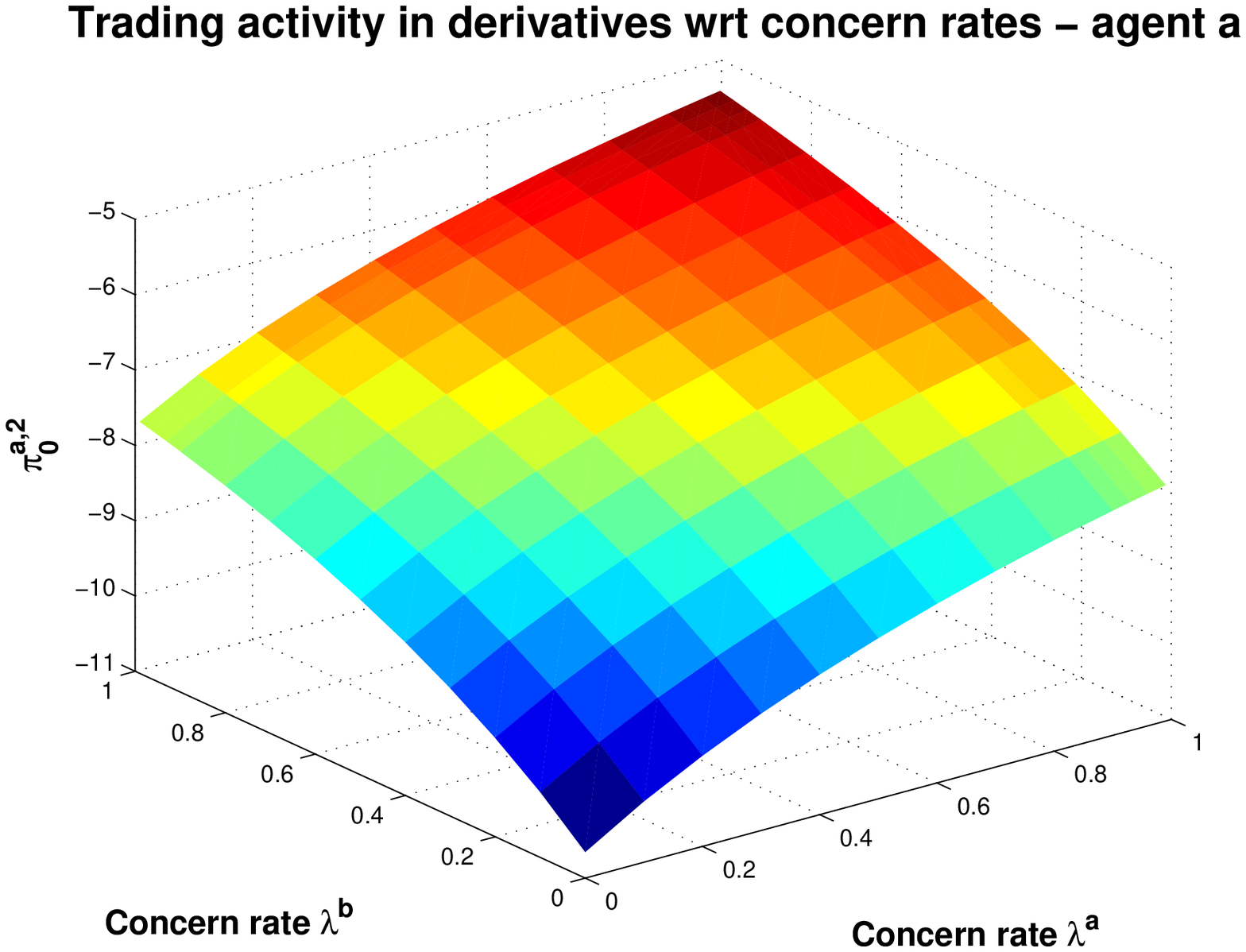}
	}
		\caption{Initial number $\pi^{a,1}_0$ and $\pi^{a,2}_0$ of shares of stock (left) and derivative (right) held by agent $a$, as a function of $(\lambda^a,\lambda^b)$. For visualization purposes the axes on the left picture were inverted.}
\label{fig:TradingActivityInStock}
\end{figure}

The observed behavior in Figure \ref{fig:TradingActivityInStock} is in line with the intuitive idea that the more the agents are concerned (high $\lambda^i$) with their relative performance $V^i_T-V^j_T$, $j \neq i \in \{a,b\}$ (recall \eqref{intro-perffunc}), the more they will invest in a way that neutralizes this source of risk. This is done by adopting a trading strategy that is as close as possible to that of the other agent. 

For the stock, we see from the formulas in Remark \ref{remark-strut-of-equilibrium}  that when $\lambda^a\lambda^b < 1$, the process of $a$ imitating $b$ imitating $a$, etc, results in a finite position. But the volume increases with both $\lambda^a$ and $\lambda^b$, and explodes as $(\lambda^a,\lambda^b) \rightarrow (1,1)$. In our example they would both (short-)sell infinitely many shares of the stock. Note that this is possible only  because the stock is assumed to be exogenously priced and perfectly liquid. For the derivative, they cannot imitate each other and position themselves in the same direction, as the zero net supply condition implies that the agents must hold exactly opposite positions. Agent $b$'s gains on trading the derivative will be exactly agent $a$'s losses. The only way to reduce the difference in performances for a very concerned agent is to engage less (in volume) in the trading of the derivative. The market clearing condition then forces the other agent to also trade less (in volume). This is  seen from the factor $1/(1+\lambda^i)$ in the formulas in Remark \ref{remark-strut-of-equilibrium} and is confirmed in Figure \ref{fig:TradingActivityInStock} (on the right) where agent $a$, identified as the seller, ends up selling fewer units of the derivative as either concern rate  increases. Due to the market clearing condition between the agents, no explosion is possible.

\subsubsection*{Price of the derivative}
Figure \ref{fig:derivatesurface01} shows an opposite dependence of the derivative's price $B^{\theta}_0$ on the concern rates $\lambda^a, \lambda^b$, a behavior not captured by Theorem \ref{theo:parambehavior}. 
One can make sense of this effect by having in mind Figure \ref{fig:TradingActivityInStock}. A higher $\lambda^a$ implies that agent $a$ wants to trade less and, as she is the seller, this drives the price up. Symmetrically, a higher $\lambda^b$ implies that agent $b$ wants to trade less and, as she is the buyer, this drives the price down. 
\begin{figure}[htb]
	\centering
	\subfigure
	{
		\includegraphics[scale=0.37]{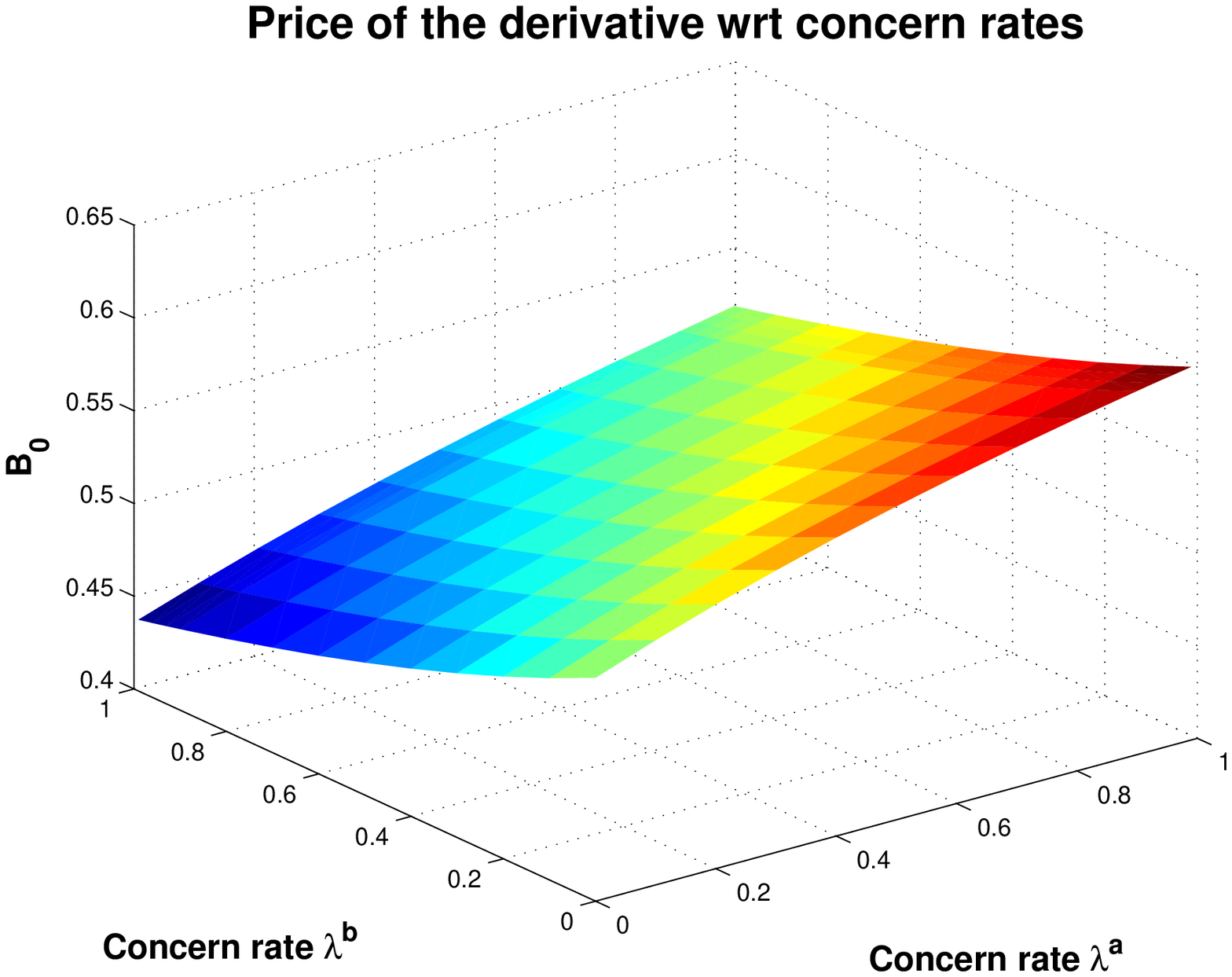}
	}
	\hspace{0.1cm}
	\subfigure
	{
		\includegraphics[scale=0.37]{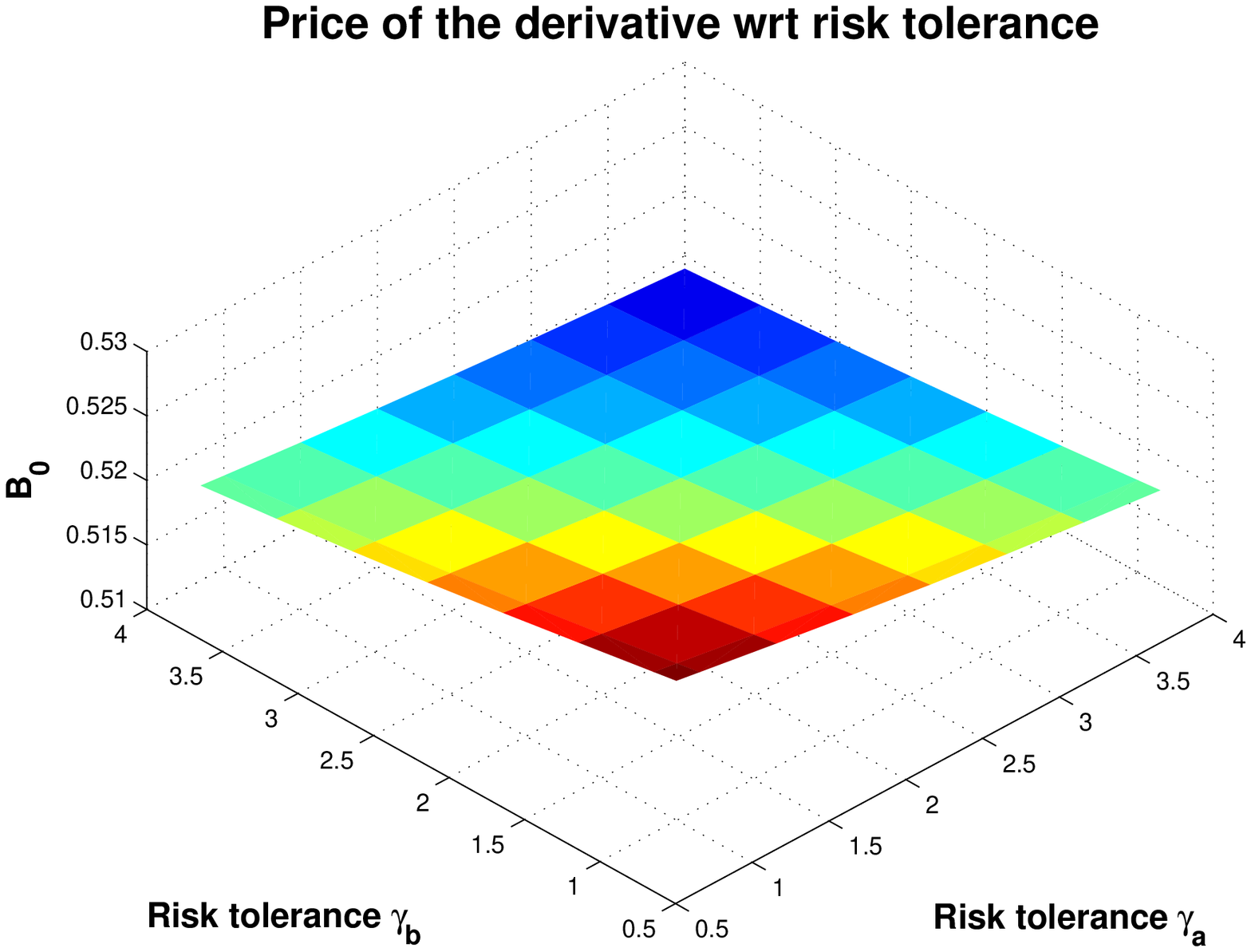}
	}
		\caption{Initial price of the derivative, $B^\theta_0$, as map of $(\lambda^a,\lambda^b)$ on the left, and as map of $(\gamma_a,\gamma_b)$ on the right.}
\label{fig:derivatesurface01}
\end{figure}
\subsubsection*{Aggregated risk}
Figure \ref{fig:weighted-risk-vs-lambdas} confirms the analytical results of Theorem \ref{theo:parambehavior}. First note that $\gamma_a=\gamma_b=1$ and so condition \eqref{eq:signalcondition} simplifies (see Corollary \ref{coro-parameter-behavior}). As predicted, the increase of the risk tolerances lead to a decrease in the aggregated risk (see Figure \ref{fig:weighted-risk-vs-lambdas}, left picture). The picture on the right shows clearly the cross behavior stated in \eqref{eq:crossbehaviorYw}. 
\begin{figure}[htb]
	\centering
	\subfigure
	{
		\includegraphics[scale=0.37]{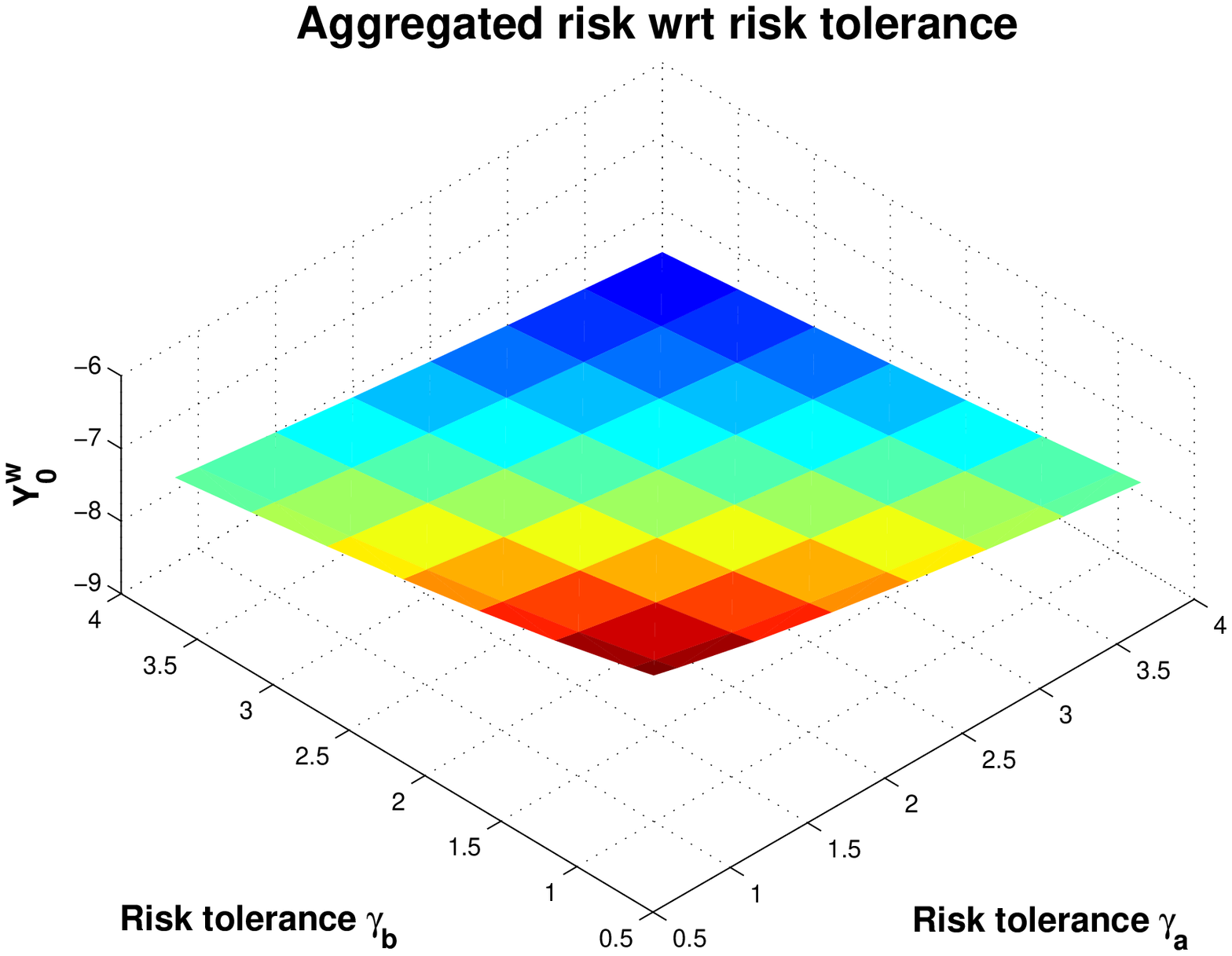}
	}
	\hspace{-0.2cm}
	\subfigure
	{
		\includegraphics[scale=0.37]{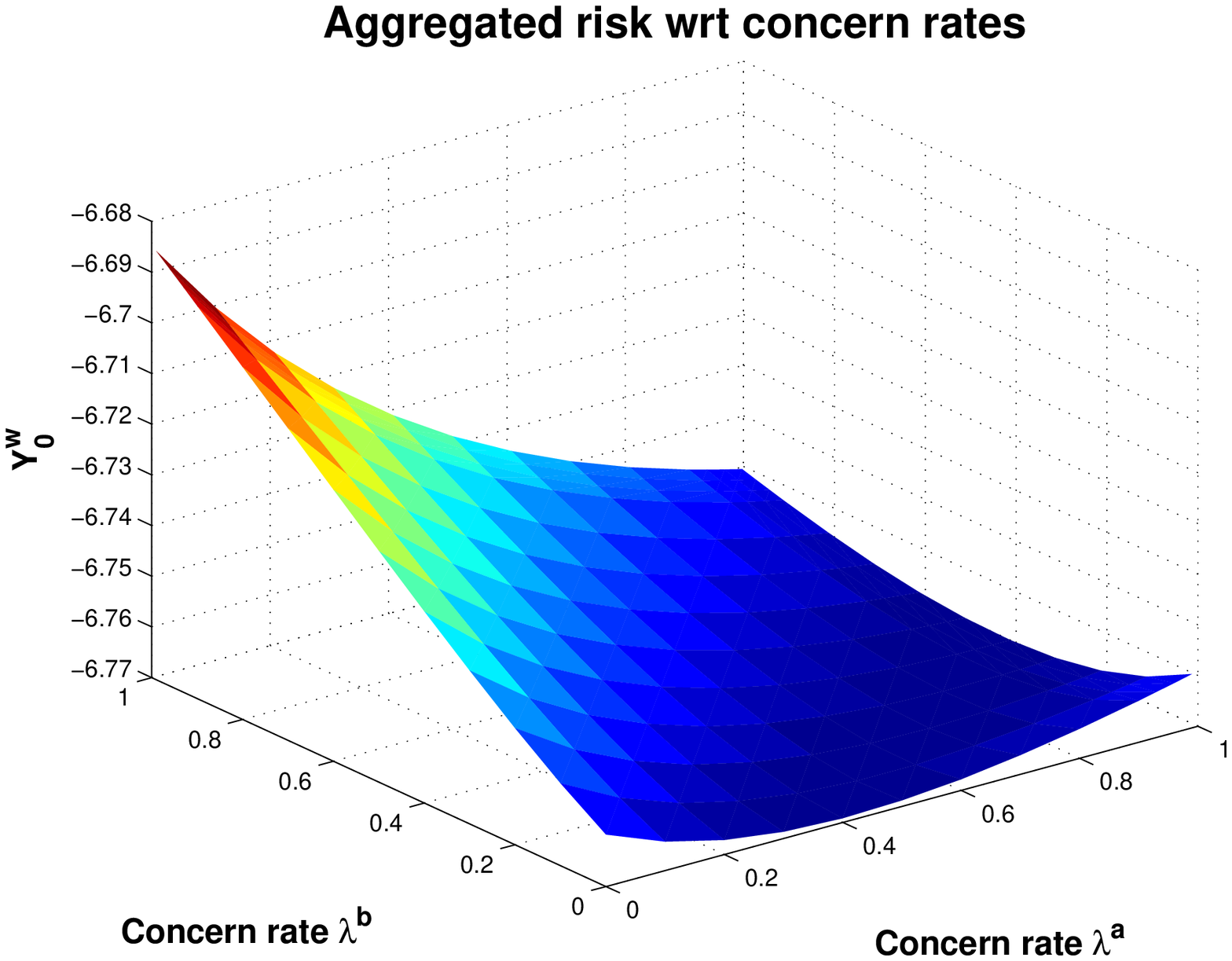} 
	}
	\caption{Aggregated risk $Y^w_0$ as a function of $(\gamma_a,\gamma_b)$ (left) and of $(\lambda^a,\lambda^b)$ (right).
	}
\label{fig:weighted-risk-vs-lambdas}
\end{figure}

\subsubsection*{Risk of each agent}
Theorem \ref{theo:parambehavior} does not capture the behavior of each agent's risk assessment as a function of the concern rates $\lambda^{\cdot}$.
\begin{figure}[htb]
	\centering
	\subfigure
	{
		\includegraphics[scale=0.39]{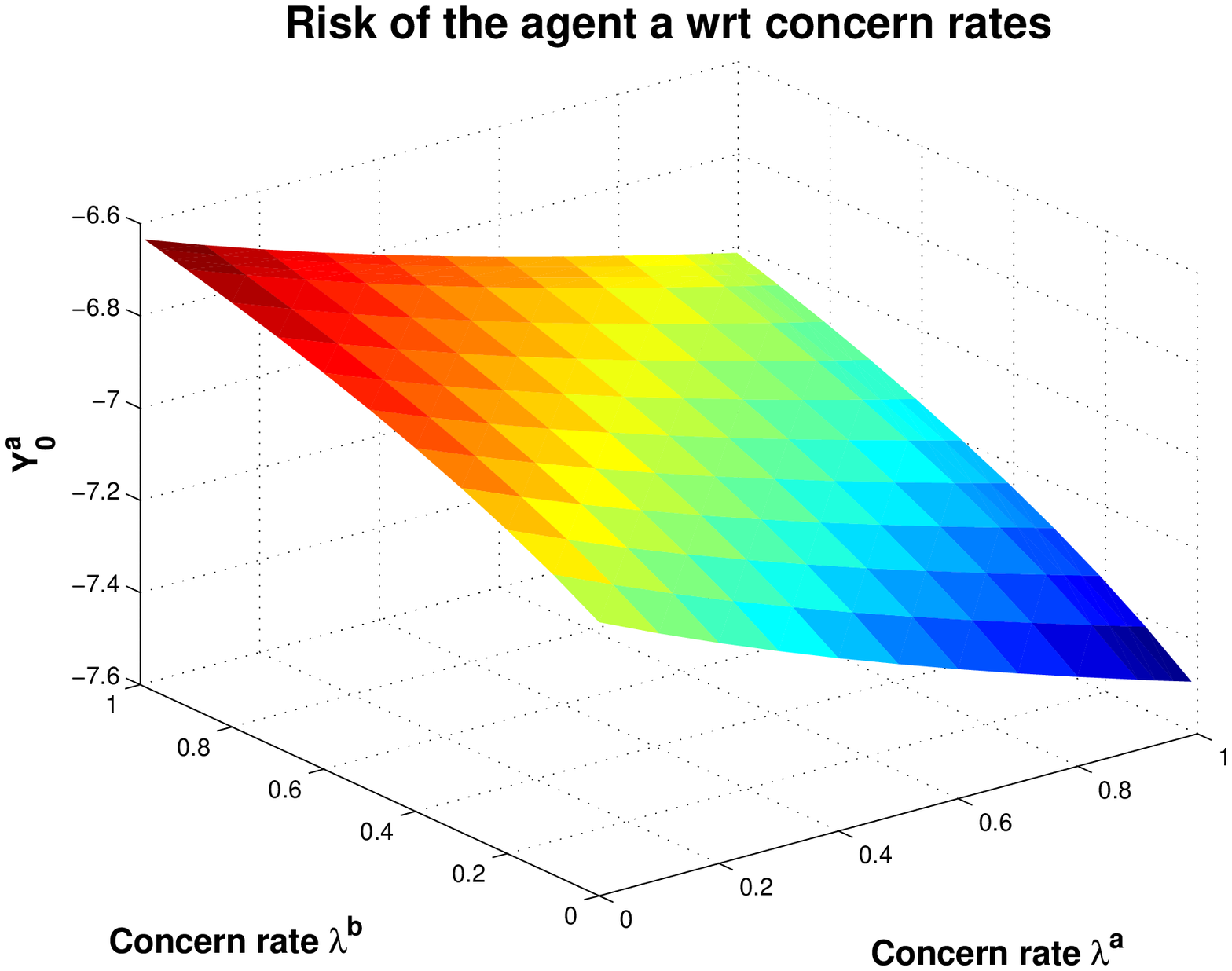}
	}
	\subfigure
		{
		\includegraphics[scale=0.39]{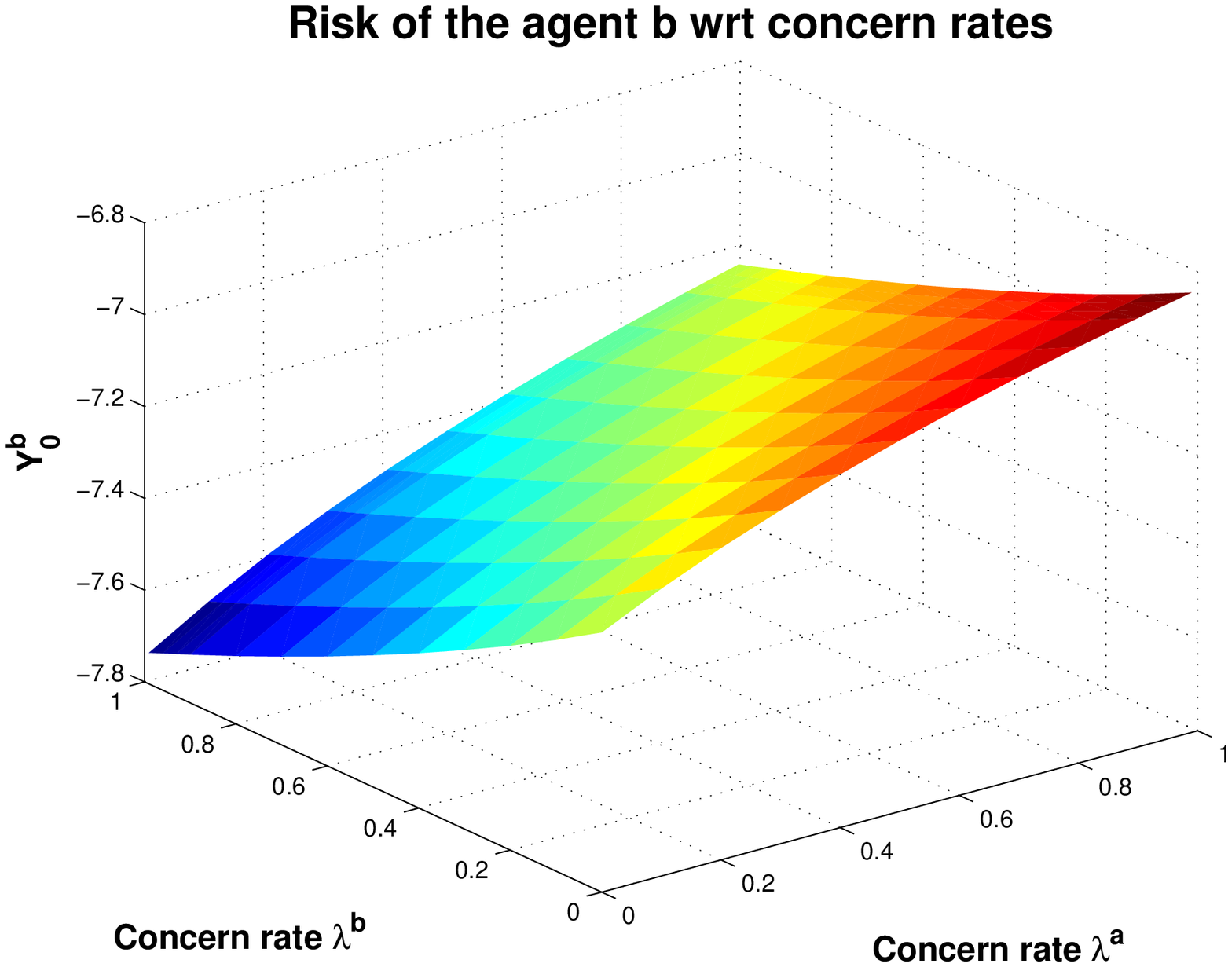}
	}
	\caption{Risk $Y^a_0$ (left) and $Y^b_0$ (right) as a function of $(\lambda^a,\lambda^b)$.}
	\label{fig:SingleRiskwrtLambda}
\end{figure}
Figure \ref{fig:SingleRiskwrtLambda} portrays the risk perceptions of each agent as $\lambda^a, \lambda^b$ change.  Agent $a$'s risk $Y^a_0$ increases in $\lambda^b$ and decreases in $\lambda^a$. 
A possible explanation for the latter behavior ($Y^a_0$ decreases with $\lambda^a$) from the perspective of, say $a$, and having \eqref{intro-perffunc} or \eqref{BSDE_1} in mind is as follows. If $a$ gives more importance to her relative performance concern then she weighs the term $V^a_T-V^b_T$ more than the hedging of the random endowment or the optimization of the personal performance and trades in a way that mimics more of what $b$ does. The net result of this seems to be the ability to neutralize more of the performance risk (as a fluctuation around the mean) and less ability to neutralize the endowment risk. The former apparently carries more weight as $Y^a_0$ does indeed decrease with $\lambda^a$.

The explanation of the first behavior ($Y^a_0$ increases with $\lambda^b$) seems more direct. As $\lambda^b$ increases, agent $b$ engages in less trading of the derivative in order to reduce her relative performance concern, and this affects agents $a$, in particular her ability to hedge $H^a$.

\subsection{Effect of introducing the derivative}		

We now comment on the effects of introducing the derivative in this model market. 
Figure \ref{fig:ComparingRisksWithoutAndWithDerivatives} displays the risks of the representative agent and of agent $a$ with respect to $\lambda^a$ and $\lambda^b$ when no derivative is available and when a market for it is available.
\begin{figure}[htb]
	\centering
	\subfigure
	{
		\includegraphics[scale=0.40]{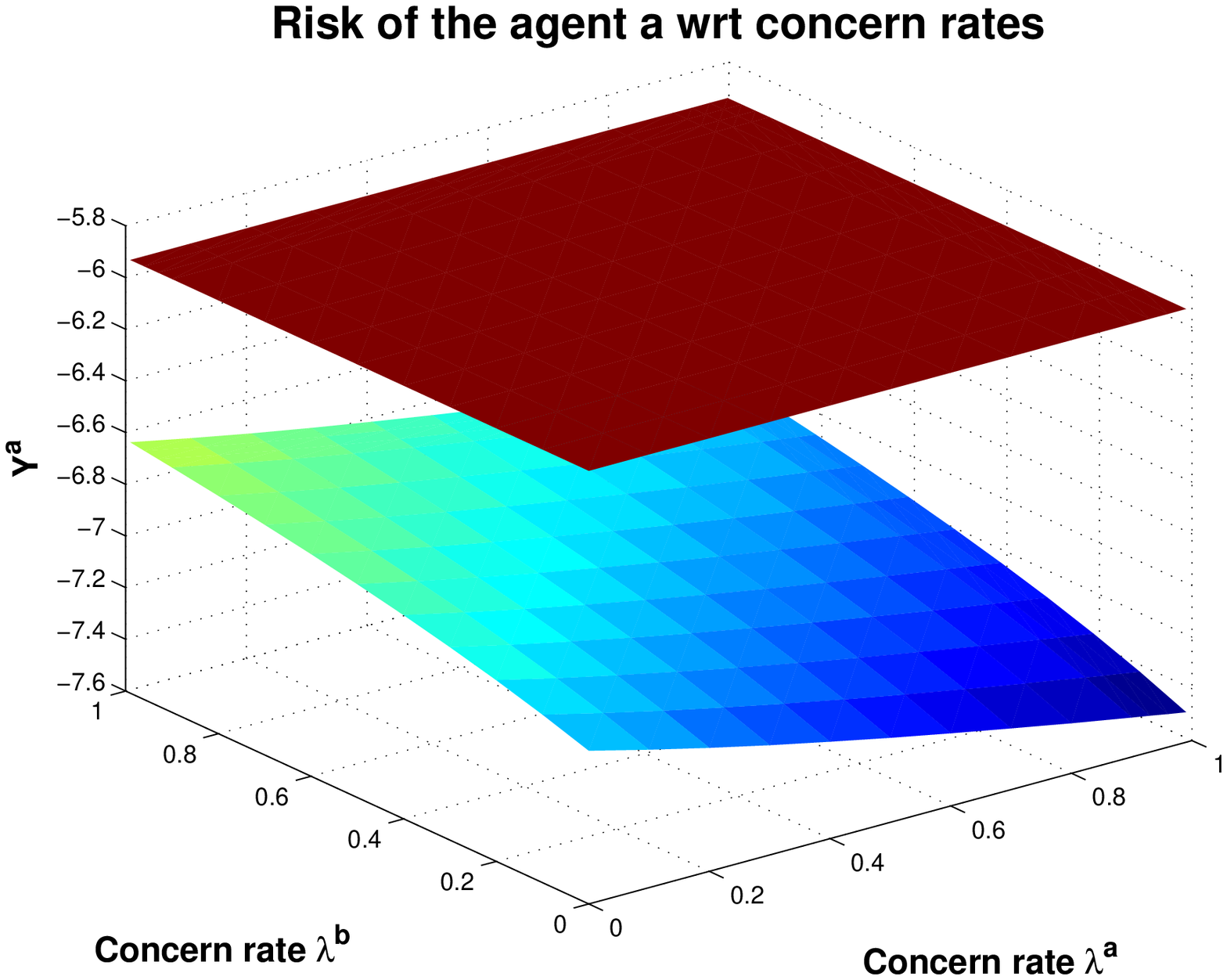}
	}
	\subfigure
		{
		\includegraphics[scale=0.40]{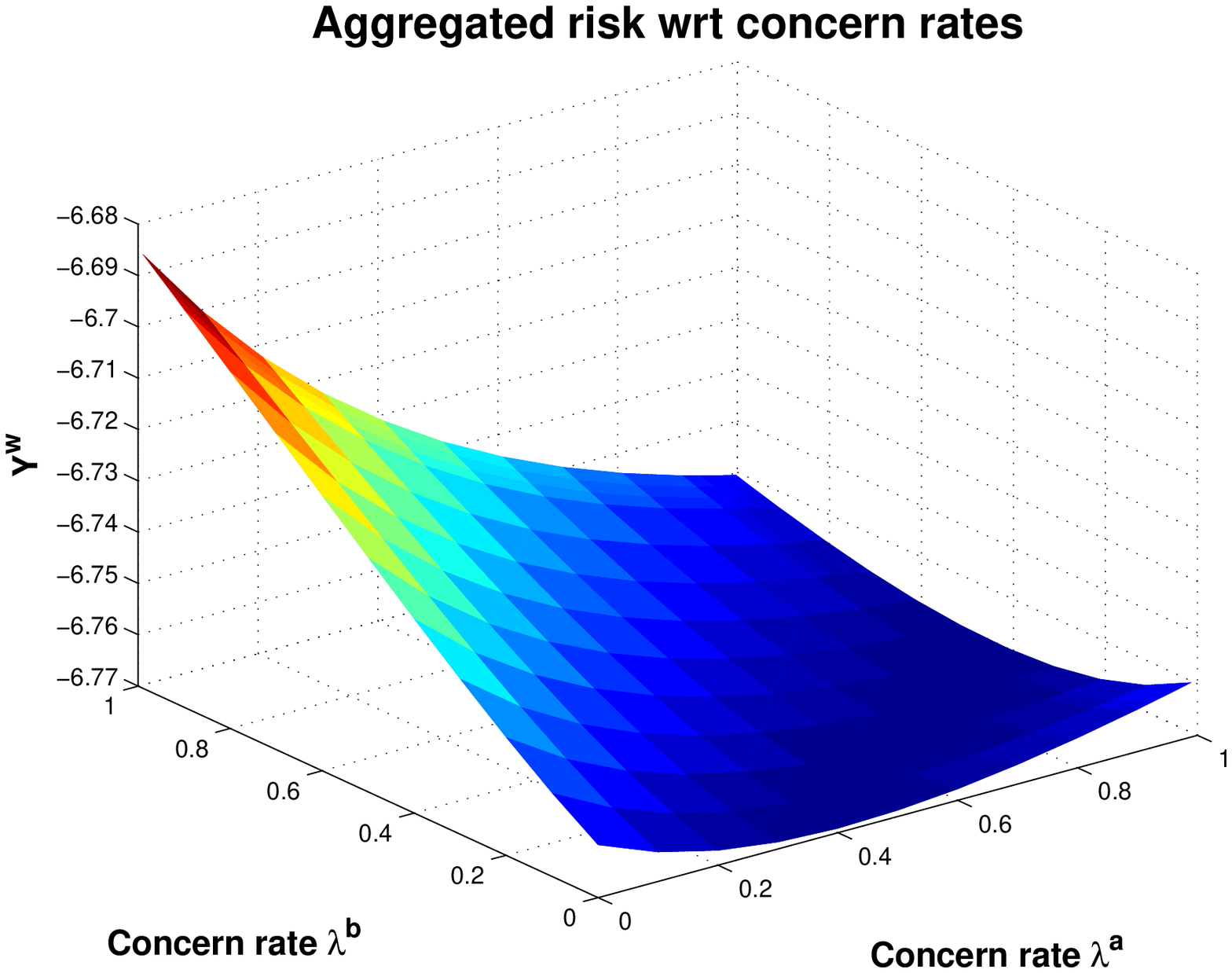}
	}
	\caption{Left : risk $Y^a_0$ when the derivative is not available (flat surface) and when it is (tilted surface), as a function of the concern rates $(\lambda^a,\lambda^b)$. 
				Right: same plot for the aggregated risk $Y^w_0$ (the two surfaces are equal).}
\label{fig:ComparingRisksWithoutAndWithDerivatives}
\end{figure}

We observe in the plot on the right that adding the derivative does not change the aggregated risk. This is clear if one views it as the risk of the representative agent: being alone by construction, the zero net supply condition means that she must keep a zero position in the derivative, and hence does not benefit from its presence (compare the agent of Section \ref{subsubsection-example-single.agent.no.derivative} with Example \ref{example_EntropicOptimizedDriver}).  

For an individual agent however (left plot), the availability of the derivative always leads to a reduction of risk. We observe that in the absence of the derivative, the risk of agent $a$ does not depend on the concern rates. We can apply the methodology of Sections \ref{section-SingleAgentSection} and \ref{sec4-repagent} to find that the optimal porfolios of the agents in this situation are given by 
	\begin{align*}
		\pi^{*,i,1}_t = 		\frac{1}{1-\lambda^a\lambda^b} \ \frac{\gamma_i \theta^S + \wt{Z}^{i,1}_t}{\sigma^S S_t} 
								+	\frac{\lambda^i}{1-\lambda^a\lambda^b} \ \frac{\gamma_j \theta^S + \wt{Z}^{j,1}_t}{\sigma^S S_t}
		\quad\text{for}\ j \neq i \in \{a,b\},
	\end{align*}
while the minimized risk equation is given by the BSDE
	\begin{align*}
		d\wt{Y}^i_t = - \Big[ -\frac12 \gamma_i \big(\theta^S\big)^2 - \wt{Z}^{i,1}_t \theta^S + \frac{1}{2\gamma_i}\big(\wt{Z}^{i,2}\big)^2 \Big] \udt + \langle \wt{Z}^{i}_t ,{\ud W_t} \rangle 
		\quad\text{with}\quad \wt{Y}^i_T = - H^i(S_T,R_T) .
	\end{align*}
This shows analytically that the value of the problem, $Y^a_0$, depends on neither $\lambda^a$ nor $\lambda^b$ while the optimal strategy does, as was already observed in Proposition 4.1 in \cite{FreiDosReis2011}.

\subsubsection*{Playing the game repeatedly leads to disaster.}

The above study considers a one-period model with (continuous-time) trading until the horizon $T=1$ month. Imagine now the repetition of this trading period over time and assume no significant changes to the agents' endowments or the dynamics of the financial and external risks. 

At the level of the agents' preferences, with the sole exception of the concern rates, they do not change with time. Specifically, we assume that their risk tolerances, and consequently the entropic risk measures $\rho^\cdot_0$ used to assess their risk in \eqref{intro-perffunc}, are fixed throughout; however, their concern rates $\lambda^\cdot$ over their relative performance may vary. This can account for some herding or other behavioral mechanism: after each period, each agent can review the results of everyone's performance, carry this information into the next period and update their concern rate accordingly.

Figure \ref{fig:SingleRiskwrtLambda} sheds some light on the outcome of playing this game repeatedly. Indeed, each agent benefits from a unilateral increase of their concern rate $\lambda$ while they are worse off with an increase in the other's concern rate. So they have an incentive to increase $\lambda$, as the trading periods are repeated, culminating in Assumption \ref{assump:ConcernRates} being violated as $(\lambda^a,\lambda^b) \rightarrow (1,1)$. 

It is interesting to note that this drifting toward the singularity of the model, $(\lambda^a,\lambda^b) = (1,1)$, is not captured by the risk assessments. Figures \ref{fig:weighted-risk-vs-lambdas} and \ref{fig:SingleRiskwrtLambda} show that $Y^w_0$, $Y^a_0$ and $Y^b_0$ remain bounded. At the level of the investment strategies, the trading activity in the derivative slows down but persists. The sharing of the external risk becomes less efficient, because the agents are increasingly concerned about losing out to the other, but does not disappear. However, the investment in the stock explodes (see Figure \ref{fig:TradingActivityInStock}). We stress that this behavior arises only \emph{after} the derivative is introduced in the market. Indeed, as shown by Figure \ref{fig:ComparingRisksWithoutAndWithDerivatives}, when the derivative is not available and the agents in $\bA$ are only concerned with the relative performance of their strategy \emph{over the market}, they have no incentive to having increasingly high concern rates. The particular shape of the surface $(\lambda^a,\lambda^b) \mapsto Y^i_0$, risk decreasing with $\lambda^i$ but increasing with $\lambda^j$, appears only when the derivative is made available. In this situation, the agents are placed in direct interaction (by trading) in addition to the indirect one (social): each agent makes now gains directly \emph{over the other}. The final result is a potential destabilization of the stock market.

%
%
%
%
%
\section{Conclusion}

In this work, we analyzed the effect of a form of social interaction between agents on an equilibrium pricing mechanism. Specifically, we considered the pricing of a (market-completing) derivative introduced to allow market participants to share the risk associated with an external and non-tradable risk factor. The social interaction here takes the form of concerns over relative performance.

From a theoretical point of view, we have shown how to solve the problem for general risk measures and a finite number of agents, when assuming that the derivative completes the market. This involves solving a coupled system of quadratic BSDEs. Due to the heterogeneous rates of concerns of the agents, the risks of the agents cannot be aggregated by the usual infimal convolution technique, so we developed it further and introduced the \emph{weighted-dilated infimal convolution} variant. 

We then focused on the particular case of the entropic risk measure and were able to determine sufficient conditions to design a derivative that completes the market. In a market model with two agents representing opposite profiles of exposure to the external risk, we explored the impact of the social interactions on the benefit brought by financial innovation.

We found that the introduction of the derivative always reduces the risk, at the level of individual agents. However, the particular distribution of this risk reduction means that both agents have an incentive to become more concerned with their relative performance. At the global level, while this merely decreases the volume of derivatives exchanged, this leads to an explosion of the volumes traded in the previously-existing financial asset. In practice, the assumption that the agents are small and that the price dynamics of the stock is independent of their actions fails to hold. Thus, although the stock price is fundamentally independent of the external risk, introducing the derivative can lead to unintended consequences on what was a stable stock market. We stress that this phenomenon is not captured by the risk measures. Therefore one should not only use the performance of the risk measure when evaluating the possible benefits of a new policy (the introduction of the derivative, here). This also stresses the importance of having a  systemic view: studying the problem from the point of view of an individual investor shows that the availability of the derivative is always beneficial, but at the global level the picture has strong nuances. Strongly undesirable endogenous phenomena can emerge in the dynamics, arising essentially from the interaction between the various agents and their possibility to adapt to the new policy.
%
%
%
%
%
%
%
%
\appendix

\section{Stochastic analysis: notation, spaces and base results}
\label{appendix-NotionStochAna}

\subsection*{Spaces \& Notations}
We define the following spaces for $p> 1$, $q\geq 1$, $n,m,d,k\in \bN$: 
 $C^{0,n}([0,T]\times \bR^d,\bR^k)$ is the space of continuous functions
endowed with the $\lVert \cdot \rVert_\infty$-norm  
that are $n$-times continuously differentiable in the spatial variable; 
$C^{0,n}_b$ contains all bounded functions of $C^{0,n}$; 
the first superscript $0$ is dropped for
functions independent of time; $L^{p}(\cF_t,\bR^d)$,  $t\in[0,T]$, is the
space of $d$-dimensional $\cF_t$-measurable random variables $X$ with norm  
$\|X\|_{L^p} = \bE[\, |X|^p]^{1/p} < \infty$;
$L^\infty$ refers to the subset of essentially bounded random variables;
$\cS^{p}([0,T]\times \bR^d)$ is the space of $d$-dimensional measurable
$\cF$-adapted processes $Y$ satisfying $\|Y \|_{\cS^p} =
\bE[\sup_{t\in[0,T]}|Y_t|^p]^{1/p}
<\infty$; $\cS^\infty$ refers to the subset of $\cS^{p}(\bR^d)$ of essentially bounded processes; $\cH^{p}([0,T]\times \bR^{d})$ is
the space of $d$-dimensional
measurable $\cF$-adapted processes $Z$ satisfying $\|Z
\|_{\cH^p}=\bE[\big(\int_0^T|Z_s|^2 \uds\big)^{p/2}]^{1/p}<\infty$; 
For a probability measure $\bQ$, we denote $\cHBMO(\bQ)$ as the space of processes $Z\in \cH^p(\bQ)$ for any $p\geq 2$ such that for some constant $K_{BMO}>0$
\begin{align*}
\sup_{\tau\in\cT_{[0,T]}} 
\big\lVert 
     \bE^\bQ\big[ \int_\tau^T |Z_s|^2\uds \big|\cF_\tau \big]
           \big\rVert_\infty \leq K_{BMO}<\infty,
\end{align*}
where $\cT_{[0,T]}$ is the set of all stopping times $\tau\in[0,T]$. As an easy consequence, if $Z \in \cHBMO(\bQ)$, then $\int H \ud Z\in \cHBMO(\bQ)$ for any bounded adapted process $H$. Processes in $\cHBMO$ have very convenient properties. For the reference measure $\bP$ we write directly $\cHBMO$ instead of $\cHBMO(\bP)$ .

For more information on BMO spaces and their relation with BSDEs see Section 2.3  in \cite{ImkellerDosReis2010} or Section 10.1 in \cite{Touzi2013}; we state, for reference's sake, some of them in the next result. 
\begin{lemma}
\label{lemma-BMOproperties}
Let $Z\in\cHBMO$ and define $\Phi_\cdot:=\int_0^\cdot Z_s\udws$. Then we have:
\begin{itemize}
  \item[1)] The stochastic exponential $\cE(\Phi_T)$ is uniformly
  integrable.
  \item[2)] There exists a number $r>1$ such that $\cE(\Phi_T)\in L^r$. This property follows from the \emph{Reverse H\"older inequality}.
  The maximal $r$ with this property can be expressed explicitly in terms of the BMO norm of $\Phi_\cdot$. There exists as well an upper bound for $\| \cE(\Phi_T) \|_{L^r}^r$ depending only on $T$, $r$ and the BMO norm of $\Phi$.
\end{itemize}
\end{lemma}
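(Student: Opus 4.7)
My plan is to prove part 2 first, from which part 1 follows by a standard uniform-integrability argument. The statement is the classical theorem of Dol\'eans-Dade--Meyer--Kazamaki, whose proof hinges on an energy-type inequality of John--Nirenberg flavor for BMO martingales.

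\textbf{Setup and reduction.} The starting point is the pointwise identity
\[
\cE(\Phi)_T^{\,r} \;=\; \cE(r\Phi)_T \cdot \exp\!\Big( \tfrac{r(r-1)}{2}\,\lev \Phi\rev_T \Big), \qquad r \geq 1,
\]
which decomposes the $r$-th power into a candidate martingale times an exponential of the quadratic variation. Taking expectations and applying Cauchy--Schwarz thus reduces the control of $\bE[\cE(\Phi)_T^{\,r}]$ to two subproblems: (i) show that $\cE(s\Phi)$ is a true (UI) martingale for a suitable $s>1$, so that its terminal expectation equals one; and (ii) control exponential moments of $\lev \Phi\rev_T$.

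\textbf{The energy inequality.} Both ingredients are derived from the following bound: if $M$ is a continuous martingale of the form $M_\cdot = \int_0^\cdot H_s\,\udws$ with $H \in \cHBMO$ and $2\,\lVert H\rVert_{BMO}^2 < 1$, then for every stopping time $\tau \in \cT_{[0,T]}$
\[
\bE\!\big[ \exp(\lev M\rev_T - \lev M\rev_\tau) \,\big|\, \cF_\tau\big] \;\leq\; \frac{1}{1 - 2\,\lVert H\rVert_{BMO}^2}.
\]
This is established by expanding the exponential as a power series and showing, by induction on $n$ via a suitable stopping-time iteration, that $\bE[(\lev M\rev_T - \lev M\rev_\tau)^n \mid \cF_\tau] \leq n!\,\lVert H\rVert_{BMO}^{2n}$; summing yields the geometric bound. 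Applied to $M = c\,\Phi$ for $c>0$ small enough that $c^2\,\lVert Z\rVert_{BMO}^2$ is below the threshold, this produces both a finite bound on $\bE\big[\exp\!\big(\tfrac{r(r-1)}{2}\lev \Phi\rev_T\big)\big]$ and the verification of Novikov's criterion for $s\Phi$ whenever $s-1$ is small.

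\textbf{Assembling and part 1.} Choosing $r>1$ close enough to $1$ (the threshold depending only on $\lVert Z\rVert_{BMO}$), both factors obtained from Cauchy--Schwarz applied to the opening identity are finite, which yields $\bE[\cE(\Phi)_T^{\,r}]<\infty$ together with an explicit upper bound in terms of $T$, $r$ and $\lVert Z\rVert_{BMO}$, proving part 2. Part 1 then follows easily: $\cE(\Phi)$ is a nonnegative local martingale, hence a supermartingale, and part 2 provides $\sup_{t\in[0,T]}\bE[\cE(\Phi)_t^{\,r}]<\infty$ via the supermartingale property. De la Vall\'ee Poussin's criterion applied with the convex function $x\mapsto x^{r}$ then gives uniform integrability of $\{\cE(\Phi)_t : t\in[0,T]\}$, which combined with the local-martingale property upgrades $\cE(\Phi)$ to a true uniformly integrable martingale. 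The main obstacle is the energy inequality: the series-expansion and stopping-time iteration require care, but the argument is otherwise just Cauchy--Schwarz bookkeeping.
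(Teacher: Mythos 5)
The paper does not actually prove this lemma --- it is quoted as a standard fact with references to Kazamaki-type results (via \cite{ImkellerDosReis2010} and \cite{Touzi2013}) --- so your proposal has to be judged against the classical proof. Your energy inequality $\bE[(\lev M\rev_T-\lev M\rev_\tau)^n\,|\,\cF_\tau]\le n!\,\lVert H\rVert_{BMO}^{2n}$ and the resulting geometric bound are correct, the factorization $\cE(\Phi)_T^r=\cE(r\Phi)_T\exp\bigl(\tfrac{r(r-1)}{2}\lev\Phi\rev_T\bigr)$ is correct, and since $r(r-1)\to0$ as $r\downarrow1$ the John--Nirenberg bound does control the second factor for $r$ close to $1$ whatever the size of $\lVert Z\rVert_{BMO}$. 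The argument as a whole, however, only closes when $\lVert Z\rVert_{BMO}$ is small, and the lemma is stated for arbitrary $Z\in\cHBMO$.

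The gap is your step ``verification of Novikov's criterion for $s\Phi$ whenever $s-1$ is small.'' Novikov for $s\Phi$ requires $\bE[\exp(\tfrac{s^2}{2}\lev\Phi\rev_T)]<\infty$, and since $s\ge1$ the coefficient $\tfrac{s^2}{2}$ is at least $\tfrac12$; John--Nirenberg delivers this only when $\tfrac{s^2}{2}\lVert Z\rVert_{BMO}^2<1$, i.e.\ $\lVert Z\rVert_{BMO}^2<2$. This is not a technicality: there exist BMO integrands for which Novikov's condition genuinely fails, so no refinement of the exponential-moment estimate can rescue this step --- the uniform integrability of $\cE(\Phi)$ for general BMO $\Phi$ is precisely Kazamaki's theorem and needs a different idea (e.g.\ write $\cE(a\Phi)_\tau=\cE(\Phi)_\tau^{a^2}\exp(a(1-a)\Phi_\tau)$ for $a<1$ small enough that Novikov applies to $a\Phi$, apply H\"older, bound $\sup_\tau\bE[\exp(\lambda\Phi_\tau)]$ for small $\lambda$ using John--Nirenberg for the martingale itself, and let $a\uparrow1$). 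The same issue resurfaces in part 2: even replacing your Cauchy--Schwarz by the correct conditional H\"older step with exponents $(p,q)$ and the supermartingale bound $\bE[\cE(pr\Phi)_T]\le1$ (note that Cauchy--Schwarz as you wrote it produces $\bE[\cE(r\Phi)_T^2]^{1/2}$, which is not the terminal expectation of a candidate martingale), optimizing over $p$ and $r\downarrow1$ still leaves a residual condition of the form $\lVert Z\rVert_{BMO}^2<2$; obtaining reverse H\"older for \emph{some} $r>1$ for every BMO norm requires the sharper decomposition behind Kazamaki's function $\Phi(p)\to\infty$ as $p\to1^+$, which exploits exponential moments of $\Phi_\tau$ itself and not only of $\lev\Phi\rev_T$. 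In short: correct skeleton and correct auxiliary lemma, but the proof as written establishes the statement only for $Z$ of small BMO norm.
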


\subsection{Basics of Malliavin's calculus}
\label{sec-MalliavinCalculus}
We briefly introduce the main notation of the stochastic calculus of
variations also known as Malliavin's calculus. For more details, we
refer the reader to \cite{nualart2006}, for its application to BSDEs we refer
to \cite{Imkeller2008}. Let ${\bf \cS}$ be the space
of random variables of the form
\[
\xi = F\Big((\int_0^T h^{1,i}_s \ud W^1_s)_{1\le i\le
n},\cdots,(\int_0^T h^{d,i}_s \ud W^d_s)_{1\le i\le n}\Big), 
\] where $F\in C_b^\infty(\bR^{n\times d})$, $h^1,\cdots,h^n\in L^2([0,T];
\bR^d)$, $n\in\bN.$
To simplify notation, assume that all $h^j$ are written as row
vectors.
For $\xi\in {\bf \cS}$, we define $D = (D^1,\cdots, D^d):{\bf \cS}\to
L^2(\Omega\times[0,T])^d$ by
\[
D^i_\theta \xi = \sum_{j=1}^n \frac{\partial F}{\partial x_{i,j}}
\Big( \int_0^T h^1_t \ud W_t,\ldots,\int_0^T
h^n_t\ud W_t\Big)h^{i,j}_\theta,\quad 0\leq \theta\leq T,\quad 1\le
i\le d,\]
and for $k\in\bN$ its $k$-fold iteration by
$D^{(k)} = (D^{i_1}\cdots D^{i_k})_{1\le i_1,\cdots, i_k\le d}$. For
$k\in\bN,\, p\ge 1$ let $\bD^{k,p}$ be the closure of $\cS$ with respect to
the norm
\[
\lVert \xi \lVert_{k,p}^p 
= \bE\Big[\|\xi\|^p_{L^p}
			+ \sum_{i=1}^{k}\|| D^{(k)]} \xi| \|_{(\cH^p)^i}^p\Big].
\]
$D^{(k)}$ is a closed linear operator on the space $\mathbb{D}^{k,p}$.
Observe that if $\xi\in \bD^{1,2}$ is $\cF_t$-measurable, then $D_\theta
\xi=0$ for $\theta \in (t,T]$. Further denote 
$\mathbb{D}^{k, \infty}=\cap_{p>1}\mathbb{D}^{k,p}$. We also need Malliavin calculus for $\bR^m$-valued smooth stochastic processes. For $k\in\bN, p\ge 1,$ denote by $\mathbb{L}^{k,p}(\bR^m)$ the set
of $\bR^m$-valued progressively measurable processes $u = (u^1,\cdots, u^m)$
on $[0,T]\times \Omega$ such that
\begin{itemize}
\item[i)]
for Lebesgue-a.a. $t\in[0,T]$, $u(t,\cdot)\in(\mathbb{D}^{k,p})^m$;
\item[ii)] $[0,T]\times \Omega \ni (t,\omega)\mapsto D^{(k)} u(t,\omega)\in
(L^2([0,T]^{1+k}))^{d\times n}$ admits a progressively measurable
version;
\item[iii)] $\lVert u \lVert_{k,p}^p
=  \|u\|_{\cH^p}^p+\sum_{i=1}^{k} \| \,D^i
u\,\|_{(\cH^p)^{1+i}}^p\,<\infty$.
\end{itemize}

Note that Jensen's inequality gives\footnote{The reason behind this last
inequality is that within the BSDE framework the
usual tools to obtain a priori estimates yield with much difficulty the LHS
while with relative ease the RHS.} for all $p\geq 2$
\begin{align*}
\bE\Big[\Big( 
\int_0^T \int_0^T|D_u X_t|^2\ud u\,\ud t
             \Big)^{\frac{p}{2}} \Big]
\leq
T^{p/2-1}\int_0^T  \| D_u X\|_{\cH^p}^p
\ud u. 
\end{align*}
We recall a result from \cite{Imkeller2008} concerning the rule for the
Malliavin differentiation of It\^o integrals which is of use in
applications of Malliavin's calculus to stochastic analysis.
\begin{theorem}[Theorem 2.3.4 in \cite{Imkeller2008}]
Let $(X_t)_{t\in[0,T]}\in \cH^2$ be an adapted process 
and define $M_t:=\int_0^t X_r\udwr$ for $t\in[0,T]$. Then, 
$X\in\bL^{1,2}$ if and only if $M_t\in\bD^{1,2}$ for any $t\in[0,T]$.

Moreover, for any $0\leq s, t\leq T$ we have
$D_s M_t 
= X_s\1_{\{s\leq t\}}(s) 
              + \1_{\{s\leq t\}}(s)\int_s^t D_s X_r \udwr$.
\end{theorem}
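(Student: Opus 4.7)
The plan is to prove both assertions --- the equivalence $X\in\bL^{1,2}\Leftrightarrow M_t\in\bD^{1,2}$ and the explicit representation of $D_s M_t$ --- in two stages: first verify the formula by direct computation on elementary smooth integrands, then extend to arbitrary $X\in\bL^{1,2}$ via density and closability of $D$; the converse implication is handled separately by exploiting uniqueness in the martingale representation.

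First I would treat the case of an elementary smooth process $X_r = \sum_{i=0}^{n-1} F_i\, \1_{(\tau_i,\tau_{i+1}]}(r)$ with $0=\tau_0<\dots<\tau_n=T$ and each $F_i\in\cS$ an $\cF_{\tau_i}$-measurable smooth random variable. Then $M_t = \sum_i F_i\,(W_{\tau_{i+1}\wedge t} - W_{\tau_i\wedge t})$ is a polynomial expression in elements of $\cS$ and hence lies in $\bD^{1,2}$. Direct application of the product rule for $D$, together with $D_s W_u = \1_{\{s\leq u\}}$ and the adaptedness constraint $D_s F_i = 0$ for $s>\tau_i$, yields exactly the stated formula after careful bookkeeping of the indicators. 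Along the way one obtains the key isometry-type identity
\begin{equation*}
\bE\Big[\int_0^T |D_s M_t|^2\, \uds\Big] = \bE\Big[\int_0^t |X_s|^2\, \uds\Big] + \bE\Big[\int_0^t \int_s^t |D_s X_r|^2\, \udr\, \uds\Big],
\end{equation*}
which is the a priori estimate driving the extension. For the forward direction with general $X\in\bL^{1,2}$, I would invoke the density of elementary smooth processes in $\bL^{1,2}$, pick an approximating sequence $X^{(n)}$, and use It\^o's isometry to conclude $M^{(n)}_t\to M_t$ in $L^2$. The isometry identity above, combined with $X^{(n)}\to X$ in $\bL^{1,2}$, shows that $(D_\cdot M^{(n)}_t)_n$ is Cauchy in $L^2(\Omega\times[0,T])$, with limit given by the RHS of the formula evaluated at $X$. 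Closability of $D$ on $L^2(\Omega)$ then forces $M_t\in\bD^{1,2}$ and the announced identity.

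The main obstacle, in my opinion, is the converse implication: from $M_t\in\bD^{1,2}$ for every $t\in[0,T]$ one must recover $X\in\bL^{1,2}$. My plan is to leverage the fact that $M_t$ is an $L^2$-martingale whose representing integrand $X$ is unique. For fixed $s$, the Malliavin derivative of the increment $M_t - M_s = \int_s^t X_r\,\udwr$ satisfies $\lim_{r\downarrow s}\bE[D_s M_r\,|\,\cF_s] = X_s$, which recovers $X_s$ from $D_s M_\cdot$ and controls $\|X\|_{\cH^2}$ in terms of the $\bD^{1,2}$-norms of the $M_t$'s. Iterating the Malliavin derivative on the It\^o-integral representation gives, for each fixed $s$, that the process $r\mapsto D_s X_r$ plays the same role for the martingale $r\mapsto D_s M_r$ as $X$ plays for $M$, thereby yielding control of $\bE\big[\int_0^T\int_0^T |D_s X_r|^2\,\udr\,\uds\big]$ and hence $X\in\bL^{1,2}$. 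Once both directions are established, the theorem follows by combining the equivalence with the explicit formula already derived in the forward direction.
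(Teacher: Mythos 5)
The statement you are proving is not proved in the paper at all: it is quoted verbatim as Theorem 2.3.4 of the cited reference \cite{Imkeller2008} in the appendix on Malliavin calculus, so there is no in-paper argument to compare against. Judged on its own, your forward direction is the standard and correct route: compute $D_sM_t$ by the product rule on elementary adapted smooth integrands (using $D_sW_u=\1_{\{s\le u\}}$ and $D_sF_i=0$ for $s>\tau_i$), observe that the cross term in $\bE\big[\int_0^T|D_sM_t|^2\,\uds\big]$ vanishes because $X_s$ is $\cF_s$-measurable and $\int_s^tD_sX_r\,\udwr$ is a martingale increment, and pass to the limit using density of elementary adapted smooth processes in $\bL^{1,2}$ together with closability of $D$. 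That part is sound.

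The converse, however, has a genuine circularity as written. You say that ``the process $r\mapsto D_sX_r$ plays the same role for the martingale $r\mapsto D_sM_r$ as $X$ plays for $M$,'' but at that stage of the argument you do not yet know that $X_r\in\bD^{1,2}$, so the object $D_sX_r$ is not defined; you cannot use it to \emph{conclude} $X\in\bL^{1,2}$. The correct repair is to first apply the martingale representation theorem to $t\mapsto D_sM_t-\bE[D_sM_t\,|\,\cF_s]$ (for $t\ge s$, $s$ fixed) to produce a candidate kernel $\Psi(s,\cdot)$, and then prove $X\in\bL^{1,2}$ with $D_sX_r=\Psi(s,r)$ by an approximation argument, e.g.\ replacing $X$ by the adapted step processes $X^n_r=\frac{1}{t_{i+1}-t_i}\bE\big[\int_{t_i}^{t_{i+1}}X_v\,\ud v\,\big|\,\cF_{t_i}\big]=\frac{1}{t_{i+1}-t_i}\bE\big[M_{t_{i+1}}-M_{t_i}\,\big|\,\cF_{t_i}\big]$ for $r\in(t_i,t_{i+1}]$, whose membership in $\bL^{1,2}$ and Malliavin norms are controlled by those of the $M_t$'s, and invoking closedness of $D$ in the limit. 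Your identity $X_s=\lim_{r\downarrow s}\bE[D_sM_r\,|\,\cF_s]$ is fine for recovering the $\cH^2$ bound on $X$ itself, but the second-order bound $\bE\big[\int_0^T\int_0^T|D_sX_r|^2\,\udr\,\uds\big]<\infty$ is exactly the content that must be established and cannot be asserted by analogy. Since the paper only ever uses the forward implication and the explicit formula (e.g.\ in Theorem 5.1 and Propositions 5.2 and 5.4), this gap does not affect the paper, but it does leave your proof of the stated equivalence incomplete.
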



\subsection{Basic Malliavin calculus results for SDEs}

With relation to the Brownian motions $W^R$ and $W^S$, we denote the Malliavin differential operators $D^{W^R}$ and $D^{W^S}$, see Appendix \ref{sec-MalliavinCalculus}. 
\begin{proposition}
\label{prop-SDEresults}
Let Assumption \ref{assump-DiffData} hold. Then SDEs 
\eqref{diffusion-R} and \eqref{diffusion-S} have a unique solution 
$R,S\in \cS^p$ for any $p\geq 2$ and 
\begin{enumerate}[i)]
	\item 
$R,S\in \bD^ {1,2}$. 
	We have $D^{W^S}_u R_t = D^{W^R}_u S_t =0$ for any 
$t,u\in [0,T]$ as well as 
\begin{align}
\label{eq:MallderividentityforR}
D^{W^R}_u R_t = \1_{\{u\leq t\}} b
\qquad \text{and}\qquad
D^{W^S}_u S_t = \1_{\{u\leq t\}} \sigma^S S_t,\qquad t,u\in [0,T];
\end{align}
\item For any jointly measurable function $\psi:[0,T]\times\bR\times\bR\to\bR$ that is Lipschitz (in the second space variable), it holds that 
\begin{align}
 \label{eq:mallderivmatch}
	D^{W^R}_u \big(\psi(t,S_t,R_t)\big) = D^{W^R}_r \big(\psi(t,S_t,R_t)\big)\qquad \forall u,r\in[0,t],\quad t\in[0,T].
\end{align}
Furthermore, $\Big(D^{W^R}_0 \big(\psi(\cdot,S_\cdot,R_\cdot)\big)\Big)\in \cS^\infty$.
	\item $H^D,H^a\in \bL^{1,2} \cap \cS^\infty$ for any $a\in \bA$ 
	(recall 
	\eqref{agents-payoff-Ha-bond-payoff-HD}) and there exists $M>0$ for any $0\leq r,u\leq T$ and any $\zeta\in \bA\cup\{D\}$ such that $D^{W^R}_u H^\zeta = D^{W^R}_r H^\zeta$ and 
$0<|D^{W^R}_\cdot H^\zeta| \leq M$. 

\item Let $\zeta\in \bA\cup\{D\}$ and let $r_0\in \bR$. The mapping $r_0 \mapsto (D^{W^R}_u H^\zeta)$ is Lipschitz continuous uniformly in $u\in [0,T]$ for any $s_0\in (0,+\infty)$.
\end{enumerate}
\end{proposition}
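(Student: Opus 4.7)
The strategy is to exploit that under Assumption \ref{assump-DiffData} both SDEs have explicit, simple closed-form solutions, and then derive the Malliavin-calculus statements by directly differentiating those closed forms and using the standard chain rule.

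First I would observe that since $\mu^R, \mu^S, \sigma^S$ are real constants and $b>0$, one has the explicit representations
\begin{align*}
R_t = r_0 + \mu^R t + b\, W^R_t
\qquad\text{and}\qquad
S_t = s_0 \exp\!\Big(\big(\mu^S - \tfrac12 (\sigma^S)^2\big) t + \sigma^S W^S_t \Big).
\end{align*}
From these, existence and uniqueness in $\cS^p$ for every $p\geq 2$ is immediate by Gaussian moments for $R$ and by log-normality for $S$. The Malliavin differentiability $R, S \in \bD^{1,2}$ (in fact $\bD^{1,\infty}$) follows from applying $D^{W^R}$ and $D^{W^S}$ to the explicit formulas: since $R$ depends only on $W^R$ and $S$ only on $W^S$, the cross-derivatives vanish, and the diagonal identities in \eqref{eq:MallderividentityforR} follow from $D^{W^R}_u W^R_t = \mathbf 1_{\{u\leq t\}}$ and the Malliavin chain rule applied to the exponential. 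This settles (i).

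For (ii), I would use the chain rule for the Malliavin derivative of a composition with a Lipschitz function (see Nualart \cite{nualart2006}, Prop.~1.2.3), which yields, for almost every $u\in[0,t]$,
\begin{align*}
D^{W^R}_u\big(\psi(t,S_t,R_t)\big)
 = (\partial_{x_1}\psi)(t,S_t,R_t)\, D^{W^R}_u S_t
 + (\partial_{x_2}\psi)(t,S_t,R_t)\, D^{W^R}_u R_t
 = b\,(\partial_{x_2}\psi)(t,S_t,R_t),
\end{align*}
the first term vanishing by (i) and the second factor being $b\mathbf 1_{\{u\le t\}}$. This expression is manifestly independent of $u\in[0,t]$, which gives \eqref{eq:mallderivmatch}; its $\cS^\infty$-boundedness follows from the Lipschitz assumption on $\psi$, which bounds $\partial_{x_2}\psi$. (For $\psi$ merely Lipschitz, the formula is first established for smooth $\psi$ and then extended by a density/mollification argument.) Part (iii) is then obtained by specializing (ii) to $\psi=h^\zeta$ and $t=T$; since $h^D, h^a \in C^1_b$ with uniformly Lipschitz spatial derivatives in the non-financial coordinate, the derivatives $D^{W^R}_u H^\zeta = b\,(\partial_{x_2} h^\zeta)(S_T,R_T)\mathbf 1_{\{u\leq T\}}$ are bounded uniformly in $u$ by some $M>0$ and are independent of $u\in[0,T]$. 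The lower bound $|D^{W^R}_\cdot H^\zeta|>0$ requires $\partial_{x_2}h^\zeta\neq 0$; this is part of Assumption \ref{assump-DiffData} for $H^D$, and for the individual endowments $H^a$ the claim is to be read as the existence of the Malliavin derivative together with the stated upper bound (used only through its upper estimate elsewhere in the paper).

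Finally, for (iv), I would use the explicit formula from (iii):
\begin{align*}
D^{W^R}_u H^\zeta(r_0) - D^{W^R}_u H^\zeta(\tilde r_0)
= b\,\mathbf 1_{\{u\leq T\}}\,\Big[(\partial_{x_2} h^\zeta)\big(S_T, R_T(r_0)\big) - (\partial_{x_2} h^\zeta)\big(S_T, R_T(\tilde r_0)\big)\Big],
\end{align*}
and note that $R_T(r_0)-R_T(\tilde r_0) = r_0-\tilde r_0$ (pathwise, since only the initial condition differs and the SDE is additive in noise). Invoking the uniform Lipschitz property of $\partial_{x_2} h^\zeta$ in the non-financial coordinate then yields the desired estimate, uniformly in $u\in[0,T]$ and $s_0\in(0,\infty)$. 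The main technical points to watch are the justification of the chain rule for merely Lipschitz composition (handled by mollification) and the bookkeeping that ensures the $u$-independence on $[0,t]$, both of which are standard; there is no real analytical obstacle, the result being a direct consequence of the particularly simple form of the SDEs under Assumption \ref{assump-DiffData}.
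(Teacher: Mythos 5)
Your proposal is correct and follows essentially the same route as the paper: explicit closed-form solutions for $R$ and $S$, the Malliavin chain rule (with mollification for merely Lipschitz $\psi$) giving the $u$-independent expression $b\,(\partial_{x_2}\psi)(t,S_t,R_t)$, and the pathwise identity $R_T(r_0)-R_T(\tilde r_0)=r_0-\tilde r_0$ combined with the Lipschitz property of $\partial_{x_2}h^\zeta$ for part (iv). Your remark that the strict lower bound on $|D^{W^R}_\cdot H^\zeta|$ hinges on $\partial_{x_2}h^\zeta\neq 0$ matches the paper's own (conditional) treatment of that point.
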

\begin{proof}
Throughout let $\zeta\in \bA\cup\{D\}$. 
General results on SDEs follow from e.g. Section 2 in \cite{ImkellerDosReis2010}, standard Malliavin calculus, the fact that $S$ is a Geometric Brownian motion and $\mu^R\in C([0,T],\bR)$.

\emph{Proof of i)} 
The identity $D^{W^S}_u R_t = D^{W^R}_u S_t =0$ is trivial. 

\emph{Proof of ii)}
We prove \eqref{eq:mallderivmatch}: assume $\psi$ to be differentiable, then for $u,r\in[0,t]$
\begin{align*} 
D^{W^R}_u \big(\psi(t,S_t,R_t)\big) 
& 
= (\partial_{x_2} \psi)(t,S_t,R_t) b 
= D^{W^R}_r \big(\psi(t,S_t,R_t)\big) , 
\end{align*}
where we used \eqref{eq:MallderividentityforR}. Now a standard approximation by mollification delivers the two results.

\emph{Proof of iii)} The form of the $\cF_T$-measurable payoffs $H^D,H^a$ is quite specific and it is clear that for $0\leq u \leq T$ and $\zeta\in \bA\cup\{D\}$
\begin{align}
D_u^{W^R} H^\zeta 
&
= D_u^{W^R} \big(h^\zeta(S_T,R_T)\big)
= \big\langle (\nabla h^\zeta)(S_T,R_T),(0,\1_{\{u\leq T\}} b)\big\rangle
\label{eq:DWRH*}
= b(\partial_{x_2} h^\zeta)(S_T,R_T)
\end{align}
The boundedness of $D_\cdot^{W^R} H^\zeta$ follows from the uniform boundedness of the  derivatives of $h^\zeta \in C^2_b$. We can then conclude that if $\partial_{x_2} h^\zeta\neq 0$ then it follows that $D_\cdot^{W^R} H^\zeta \neq 0$ and, moreover, the identity $D^{W^R}_u H^\zeta = D^{W^R}_r H^\zeta$ follows from \eqref{eq:mallderivmatch}.

\emph{Proof of iv)}
We now close with the proof of the last statement. Take $s_0\in (0,+\infty)$ and let $r_0,\wt r_0\in\bR$ be two initial conditions for $R$ (see \eqref{diffusion-R}) and we denote the corresponding SDE solutions $R$ and $\wt R$ respectively. We also denote $H^\zeta$ and $\wt H^\zeta$ the random variables depending on $R$ and $\wt R$ respectively. Due to the linear form of \eqref{diffusion-R} it is immediate that $R_t-\wt R_t=r_0-\wt r_0$ for any $t\in[0,T]$.

The properties of $|D_u^{W^R} H^\zeta -D_u^{W^R} \wt H^\zeta|$ follow from those of $\partial_{x_2} h^\zeta$ and \eqref{eq:DWRH*}. By assumption $h^\zeta$ is twice continuously differentiable (in space)  with bounded derivatives, hence, for some $K\geq 0$ 
\begin{align*}  
\big|(\partial_{x_2} h^\zeta)(S_T,R_T) - (\partial_{x_2} h^\zeta)(S_T,\wt R_T)\big|
\leq
K |R_T-\wt R_T|= K |r_0-\wt r_0|.
\end{align*}
It follows that for some constant $C\geq 0$ independent of the data $u,s_0,r_0$ and $\wt r_0$ one has, as required, $|D_u^{W^R} H^\zeta -D_u^{W^R} \wt H^\zeta|\leq C|r_0-\wt r_0|$.
\end{proof}
%
%
%
%
%
%
%


\bibliography{PerfConcerns}  

\end{document}